\theoremstyle{plain}
\newtheorem{theorem}{Theorem}
\newtheorem{lemma}{Lemma}
\newtheorem{proposition}{Proposition}
\theoremstyle{definition}
\newtheorem{assumption}{Assumption}
\theoremstyle{remark}
\newtheorem{remark}{Remark}
\def\T{{\!\top\!}}
\def\tr{\mathrm{tr}}
\def\diag{\mathrm{diag}}
\newcommand{\ind}[1]{\mathbb{I}(#1)}
\def\pr{\mathrm{P}}
\def\E{\mathrm{E}}
\def\var{\mathrm{Var}}
\def\cov{\mathrm{Cov}}
\begin{document}
\title{\textbf{Computationally efficient and data-adaptive changepoint inference in high dimension}}
\author{Guanghui Wang\\ 
	East China Normal University
	\and Long Feng\thanks{Corresponding author: flnankai@nankai.edu.cn}\\ 
	Nankai University}
\date{\today}
\maketitle
\baselineskip 20pt

\begin{abstract}
High-dimensional changepoint inference that adapts to various change patterns has received much attention recently. We propose a simple, fast yet effective approach for adaptive changepoint testing. The key observation is that two statistics based on aggregating cumulative sum statistics over all dimensions and possible changepoints by taking their maximum and summation, respectively, are asymptotically independent under some mild conditions. Hence we are able to form a new test by combining the p-values of the maximum- and summation-type statistics according to their limit null distributions.
To this end, we develop new tools and techniques to establish asymptotic distribution of the maximum-type statistic under a more relaxed condition on componentwise correlations among all variables than that in existing literature.
The proposed method is simple to use and computationally efficient.
It is adaptive to different sparsity levels of change signals, and is comparable to or even outperforms existing approaches as revealed by our numerical studies.
\end{abstract}

\vspace{0.2cm}
\noindent{\bf Keywords}: Adaptive tests; Changepoint detection; Extreme value distribution; Gaussian approximation; High dimensions.

\section{Introduction}\label{sec:introduction}

Heterogeneity is a ubiquitous feature of high dimensional data. A particular form of heterogeneity is the changepoint structure,
that is, the data generation mechanism may occur a sudden change at some time point or location.
The primary aims of changepoint detection are (i) testing for the existence of any changepoint and (ii) estimation of one or a few more changepoints if they exist. We refer readers to, for example, \cite{10.1111/j.1467-9892.2012.00819.x} and \cite{10.1214/16-sts587} for an overview.
In this paper, we focus on the testing aspect in high dimension.
Consider a sequence of $p$-dimensional vectors of sample size $n$, i.e., $\{X_i=(X_{i1},\ldots,X_{ip})^\T\}_{i=1}^n$, from the following mean-change model
\begin{align*}
    X_i = \mu_0 + \delta \ind{i>\tau} + \epsilon_i,\ i=1,\ldots,n,
\end{align*}
where $\mu_0\in\mathbb{R}^p$ represents the baseline mean level, $\delta\in\mathbb{R}^p$ is the mean-change signal parameter,
$\tau\in\{1,\ldots,n\}$ is the possible changepoint, and $\{\epsilon_i=(\epsilon_{i1},\ldots,\epsilon_{ip})^\T\}_{i=1}^n$ are random noises with zero mean.
Of interest is to test whether there exists a changepoint, that is,
\begin{align}\label{H0}
\begin{gathered}
    H_0: \tau=n\ \text{and}\ \delta=0\ \text{versus}\\
    H_1:\ \text{there exists}\ \tau\in\{1,\ldots,n-1\}\ \text{and}\ \delta\neq 0,
\end{gathered}
\end{align}
under the paradigm that both the sample size $n$ and dimension $p$ grow to infinity.

A review of recent developments of various testing procedures for \eqref{H0} can be found in \cite{10.1016/j.jmva.2021.104833}.
Most prominent are based on a sequence of individual cumulative sum (CUSUM) statistics, to wit, $\{C_{\gamma, j}(k),\ j=1,\ldots,p\}_{k=1}^{n-1}$ with $\gamma=0$ or $0.5$ frequently used, where
\begin{align}\label{CUSUM}
    C_{\gamma, j}(k) = \left\{\frac{k}{n}\left(1-\frac{k}{n}\right)\right\}^{-\gamma}\frac{1}{\sqrt{n}}\left(S_{kj}-\frac{k}{n}S_{nj}\right)/\widehat{\sigma}_{j},
\end{align}
$S_{kj}=\sum_{i=1}^{k}X_{ij}$ and $\widehat{\sigma}_{j}$'s are estimators for the (long-run) variances.
By first aggregating the individual CUSUMs over $p$ dimensions at every possible changepoint $k\in\{1,\ldots,n-1\}$, a test statistic can then be constructed based on the maximum or summation of all aggregations over all possible $k$'s.
The difference between $\gamma=0$ or 0.5 lies in the power performance. To be specific, the power with $\gamma=0$ may decay if the changepoint appears early or late \citep{csorgo1997limit}.

Different ways of aggregations are discussed in the literature. Among them, \cite{10.1016/j.jeconom.2009.10.020}, \cite{10.1111/j.1467-9892.2012.00796.x} and \cite{10.1007/s11425-016-0058-5} considered $L_2$-aggregations followed by the maximum operator, i.e., $\max_{k=1,\ldots,n-1}\sum_{j=1}^p C_{0, j}^2(k)$.
A strategy that replaces each $C_{0, j}^2(k)$ by a self-normalized U-statistic was suggested in \cite{wang2019inference}.
With proper normalization, such max-$L_2$-type statistic converges to the supremum of some function of a Gaussian process under necessary conditions if $H_0$ holds, for which the corresponding quantile is usually obtained via simulations.
Instead of applying the maximum operator to all $L_2$-aggregations, \cite{10.1214/17-aos1610} and \cite{10.1142/s201032631950014x} proposed a sum-$L_2$-type statistic, namely, $\sum_{k=1}^{n-1}\sum_{j=1}^p C_{0.5, j}^2(k)$,
which, after appropriate normalization, is shown to asymptotically admit a Gaussian distribution under $H_0$.
The $L_\infty$-aggregations in conjunction with the maximum operator has also attracted much attention in the literature.
For instance, \cite{10.1214/15-aos1347} advocated the statistic $\max_{k=1,\ldots,n-1}\max_{j=1,\ldots,p}|C_{0, j}(k)|$ and showed that it converges in distribution to the extreme value distribution of Gumbel type after suitable normalization under $H_0$.
In addition, \cite{10.1111/rssb.12406} considered the statistic $\max_{\lambda\leq k\leq n-\lambda}\max_{j=1,\ldots,p}|C_{0.5, j}(k)|$ with $\lambda\in[1,n/2]$ being a user-specified boundary removal parameter, and investigated how to approximate its null distribution via the multiplier bootstrap; they didn't find the asymptotic null distribution of the test statistic.

It is well-known that the $L_2$-aggregation method is more effective in detecting dense and weak change signal in the sense that a large number of entries in $\delta$ are non-zero each with a small magnitude, and the $L_\infty$-aggregation performs better for sparse but strong change signal when there exist a few number of non-zero entries in $\delta$ with large magnitudes.
The impact of sparsity on the detection boundary was investigated by \cite{10.1214/18-aos1740} and \cite{10.1214/20-aos1994} for Gaussian data sequences.
They also provided testing rules that can achieve the minimax detection rate and are adaptive to the sparsity by taking supremum over a grid of sparsity levels in conjunction with well-calibrated thresholds.
Recently, there is a great deal of effort to develop adaptive testing procedures that are effective for various alternative change patterns.
For instance, \cite{10.1111/rssb.12079} and \cite{10.1214/16-ejs1155} proposed only aggregating those CUSUMs at every possible changepoint $k$ (i.e., $\{C_{0.5, j}(k),\ j=1,\ldots,p\}$) that pass a certain threshold, and they showed that such testing rules have a vanishing Type I error.
More generally, \cite{10.1111/rssb.12375} first introduced an adjusted $L_q$-aggregation strategy, to wit, with $1\leq q\leq\infty$ and $1\leq s_0\leq p$, $T_{q,s_0} := \max_{\lambda\leq k\leq n-\lambda}\big\{\sum_{j=1}^{s_0}|C_{0, (j)}(k)|^q\big\}^{1/q}$,
where $|C_{0, (1)}(k)|\geq\cdots\geq|C_{0, (p)}(k)|$ is the order statistics of $\{|C_{0, j}(k)|\}_{j=1}^p$.
They then proposed an adaptive procedure by taking the minimum of p-values associated with every $T_{q,s_0}$ among a range of choices of $q$ (e.g., $q\in\{1,2,3,4,5,\infty\}$) with a fixed $s_0$, where each p-value was approximated by the multiplier bootstrap.
Combining different $L_q$-aggregations (without adjustment) was considered by \cite{doi:10.1080/01621459.2021.1884562}. Specifically, for an even $q$,
they first proposed a sequence of self-normalized U-statistics at every possible changepoint to estimate $\|\delta\|_q^q$ scaled by the location of possible changepoint, say $\{U_q(k)\}_{k=2q}^{n-2q}$.
Then the asymptotic null distribution of $T_q := \max_{k=2q,\ldots,n-2q} U^2_q(k)$ is derived and is provably pivotal.
At last, they used the minimum p-value combination over different values of $q$ (e.g., $q\in\{2,6\}$) to construct an adaptive test based on an important fact that $T_q$'s with different even $q$'s are asymptotically independent under $H_0$.
Projection is an alternative method for constructing adaptive testing rules.
\cite{10.1214/18-ejs1442} proposed projecting the CUSUMs $(C_{0,1}(k),\ldots,C_{0,p}(k))^\T$ for each $k=1,\ldots,n-1$ along a random direction and proved that the projections, after suitable normalization, converge to a standard Brownian bridge under $H_0$; thus the maximum or summation operator can be applied over all $k$'s to conduct a test.
\cite{10.1111/rssb.12243} provided a data-driven estimator of the direction along which the CUSUMs $(C_{0.5,1}(k),\ldots,C_{0.5,p}(k))^\T$ should be projected,
and showed that the test has a vanishing Type I error.

In this paper, we suggest a new adaptive method that is simple, fast yet effective.
The key observation is that the max-$L_\infty$- and sum-$L_2$-type statistics are asymptotic independent under the null hypothesis that there is no changepoint.
Hence a test can be conducted by combining the two separate p-values.
We refer to the proposed test as double-max-sum (DMS) method, since we are essentially conducting the maximum or summation operator along both dimensions and possible changepoints.
The contributions of this paper is three-fold.
\begin{itemize}
    \item[(a)] The DMS is computationally efficient. The test statistic can be implemented in linear time over both $n$ and $p$. The critical value is determined according to the associated asymptotic null distribution, which is simple to use and thus avoids any numerical approximations such as by simulating Gaussian analogues or by using multiplier bootstrap.

    \item[(b)] We are able to show that each of two versions of max-$L_\infty$-type statistic, 
    to wit,
    \begin{align*}
        M_{n,p}:=\max_{k=1,\ldots,n-1}\max_{j=1,\ldots,p}|C_{0, j}(k)|\ \text{and}\
        M^{\dagger}_{n,p}:=\max_{\lambda_n\leq k\leq n-\lambda_n}\max_{j=1,\ldots,p}|C_{0.5, j}(k)|,
    \end{align*}
    with $\lambda_n\in[1,n/2]$ being a pre-specified boundary removal parameter,
    converges in distribution to the extreme value distribution of Gumbel type after appropriate normalization under some mild conditions if $H_0$ holds.
    The ways of convergence of $M_{n,p}$ and $M^{\dagger}_{n,p}$ differ in their normalizing factors.
    In the literature, \cite{10.1214/15-aos1347} studied the limit distribution of $M_{n,p}$ under a logarithmic decay assumption on componentwise correlations among $p$ variables, which generally requires a natural ordering among all variables and is hard to be verified in practice. With the presence of such ordering, he obtained the asymptotic null distribution by supplying the blocking arguments \citep{MR362465}. In this paper, we develop new tools and techniques based on the \textit{inclusion-exclusion principle} without restrictions on any ordering patterns.
    In addition, to our best knowledge, the study of the asymptotic distribution of $M^{\dagger}_{n,p}$ is new. In fact, it is related to the function $\max_{1\leq j\leq p}\sup_{\Lambda_p\leq t\leq 1-\Lambda_p}\{t(1-t)\}^{-1/2}|\mathcal{W}_{tj}-t\mathcal{W}_{tj}|$, with suitably chosen $\Lambda_p=o(1)$, of a sequence of dependent Brownian motions $\{\mathcal{W}_{tj}\}_{1\leq j\leq p}$, which may be of independent interest.

    \item[(c)] We further prove that either version of the max-$L_\infty$-type statistics is asymptotically independent of certain sum-$L_2$-type statistic under some mild conditions if $H_0$ holds, which forms the basis of the DMS method.
    Besides, we provide an asymptotic lower bound of the local power function of the proposed test due to the observation that the asymptotic independence also holds under a sequence of local alternative hypotheses. By doing so, we could immediately obtain the consistency of the DMS method and in addition present a detection regime for which the DMS outperforms both the max-$L_\infty$- and sum-$L_2$-procedures.
    The latter offers a new insight on adaptive testing methods beyond the consistency that investigated in the literature \citep{10.1111/rssb.12375,doi:10.1080/01621459.2021.1884562}.
\end{itemize}

The remainder of this paper is structured as follows. In Section \ref{sec:max}, we investigate asymptotic null distributions of two versions of the max-$L_{\infty}$-type statistics, namely, $M_{n,p}$ and $M^{\dagger}_{n,p}$. The proposed DMS procedure is presented in Section \ref{sec:DMS}, together with its asymptotic properties. Numerical studies are conducted in Section \ref{sec:num}. Section \ref{sec:final} concludes the paper, and all theoretical proofs are deferred to Supporting Information.

\textbf{Notations}: Throughout this paper, we use $\lesssim$, $\gtrsim$, ($\sim$) to denote (two-sided) inequalities invovling a multiplicative constant. For $a\in\mathbb{R}$, we denote by $\lfloor a\rfloor$ the lower integer part of $a$. For a set $\mathcal{A}$, we denote by $|\mathcal{A}|$ its cardinality, and by $\mathcal{A}^c$ its complementary. For a matrix $A$, let $\tr(A)$ be its trace. For a vector $a$, $\diag(a)$ represents a diagonal matrix whose diagonal elements take values in $a$.

\section{Max-$L_{\infty}$-aggregation}\label{sec:max}

We first investigate two versions of max-$L_\infty$-type statistics that were considered in the literature \citep{10.1214/15-aos1347,10.1111/rssb.12406,10.1111/rssb.12375}, to wit,
\begin{align}\label{M}
    M_{n,p}:=\max_{k=1,\ldots,n-1}\max_{j=1,\ldots,p}|C_{0, j}(k)|\ \text{and}\
    M^{\dagger}_{n,p}:=\max_{\lambda_n\leq k\leq n-\lambda_n}\max_{j=1,\ldots,p}|C_{0.5, j}(k)|,
\end{align}
respectively, where we recall that the CUSUM statistics $C_{\gamma, j}(k)$'s with $\gamma=0$ or 0.5 are defined in \eqref{CUSUM}, and $\lambda_n\in[1,n/2]$ is a pre-specified boundary removal parameter. Specification of the estimators $\widehat{\sigma}_j$'s in $C_{\gamma, j}(k)$'s will be discussed later.

If $p=1$, it is well-known that, as $n\to\infty$, $M_{n,1}$ converges in distribution to $\sup_{0\leq t\leq 1}B(t)$ under $H_0$, where $B$ denotes the standard Brownian bridge.
The power of $M_{n,1}$ may decay if the changepoint appears early or late.
To address this, many weighted versions have been introduced, among which the most prominent is $M^{\dagger}_{n,1}$ that is equivalent to the maximally selected likelihood ratio test assuming Gaussianity. The asymptotic null distribution of $M^{\dagger}_{n,1}$ depends on the boundary removal parameter $\lambda_n$. It can be verified that $M^{\dagger}_{n,1}$ converges in distribution to $\sup_{\lambda\leq t\leq 1-\lambda} B(t)/\sqrt{t(1-t)}$ if $\lambda_n/n\to \lambda\in(0,0.5)$.
However, if $\lambda_n=1$ or $\lambda_n/n\to 0$, $M^{\dagger}_{n,1}$ would diverge under $H_0$.
It is clear now that $M^{\dagger}_{n,1}$, with necessary normalization, converges in distribution to the extreme value distribution of Gumbel type due to \cite{MR74712}.
More discussions under $p=1$ can be found in \cite{csorgo1997limit} and \cite{10.1111/j.1467-9892.2012.00819.x}.

When $(n, p)\to\infty$ in the sense that both $n$ and $p$ diverge jointly, \cite{10.1214/15-aos1347} showed that $M_{n,p}$, appropriately normalized, weakly converges to the Gumbel distribution under certain decay conditions on the correlations if $H_0$ holds.
For $M^{\dagger}_{n,p}$, \cite{10.1111/rssb.12406} presented a valid approximation to its null distribution by using the multiplier bootstrap if
$\lambda_n/\log^7(np)\to\infty$ for sub-exponential data sequences.
However, little is known regarding the asymptotic null distribution of $M^{\dagger}_{n,p}$ as $(n, p)\to\infty$.
In this paper, we shall fill this gap and show that the Darling-Erd\"{o}s-type convergence also holds for $M^{\dagger}_{n,p}$, with suitable chosen $\lambda_n$, under $H_0$.
In addition, we are able to derive asymptotic null distributions of both $M^{\dagger}_{n,p}$ and $M_{n,p}$ under a more relaxed assumption on componentwise correlations among $p$ variables than that in \cite{10.1214/15-aos1347}.

Before proceeding any further, we introduce some notations and assumptions on the dependence of the noises $\epsilon_{ij}$'s both in time and across all variables. Suppose there exist measurable functions $g_j$'s such that $\epsilon_{ij} = g_j(e_{i},e_{i-1},\ldots)$, where $\{e_i\}_{i\in\mathbb{Z}}$ is a sequence of independent and identically distributed (i.i.d.) random variables. To measure temporal dependence, define for $q\geq 1$,
\begin{align*}
    a_i(q)=\limsup_{p\to\infty}\max_{j=1,\ldots,p}\|g_j(e_{i},e_{i-1},\ldots,e_{0},e_{-1},\ldots) - g_j(e_{i},e_{i-1},\ldots,e'_{0},e_{-1},\ldots)\|_{q},
\end{align*}
where $\{e'_{i}\}_{i\in\mathbb{Z}}$ is an independent copy of $\{e_i\}_{i\in\mathbb{Z}}$.
Let $\sigma_{jj'}=\lim_{n\to\infty}n^{-1}\E\left(\sum_{i=1}^n\sum_{i'=1}^n\epsilon_{ij}\epsilon_{i'j'}\right)$ be the long-run covariances, and denote $\sigma_{j}=\sigma_{jj}^{1/2}$. The componentwise correlations among $p$ variables can thus be defined as $\rho_{jj'}=\sigma_{jj'}/(\sigma_{j}\sigma_{j'})$. Denote $R=(\rho_{jj'})_{p\times p}$.
Let for some sequences $\delta_p>0$ and $\kappa_p>0$, $B_{p,j}=\{1\leq j'\leq p: |\rho_{jj'}|\geq\delta_p\}$ and $C_p=\{1\leq j\leq p: |B_{p,j}|\geq p^{\kappa_p}\}$.
To estimate the long-run variances $\sigma_{j}$'s, we consider Bartlett's estimators, to wit, $\widehat{\sigma}_{j}^{2}=\sum_{|\ell|\leq b_{n}}(n-\ell)^{-1}\sum_{i=\ell+1}^{n}\left(X_{\ell j}-\bar{X}_{j}\right)\left(X_{i-\ell, j}-\bar{X}_{j}\right)$ for some $b_{n}\geq 1$, where $\bar{X}_{j}=n^{-1}\sum_{i=1}^{n} X_{ij}$.

\begin{assumption}[Temporal dependence]\label{asmp:corr_temporal}
There exist some constants $q>4$ and $\mathfrak{a}>5/2$ such that $a_{i}(q)\lesssim i^{-\mathfrak{a}}$. In addition, $\liminf_{p\to\infty}\min_{j=1,\ldots,p}\sigma_{j}\geq\underline{\sigma}$ for some constant $\underline{\sigma}>0$.
\end{assumption}

\begin{assumption}[Componentwise correlations]\label{asmp:corr_spatial}
(i) $|\rho_{jj'}|\leq\varrho$ for $1\leq j\neq j'\leq p$ and some constant $\varrho\in (0,1)$; (ii) $|C_p|/p\to 0$ for some $\delta_p=o\{(\log p)^{-1}\}$ and $\kappa_p\to 0$, as $p\to\infty$.
\end{assumption}

\begin{remark}
Assumption \ref{asmp:corr_temporal} imposes a polynomial decay restriction on the temporal dependence, and was considered by \cite{10.1214/15-aos1347}. A huge variety of popular linear and nonlinear time series models meet Assumption \ref{asmp:corr_temporal} \citep{MR2172215,10.1214/15-aos1347}. In addition, it implies that $\sigma_{jj'}$'s are well defined \citep{10.1214/15-aos1347}.
Assumption \ref{asmp:corr_spatial}--(ii) demands the number of variables that are strongly-correlated (i.e., $|\rho_{jj'}|\geq\delta_p$) with many ($\geq p^{\kappa_p}$) other variables should not be too much ($o(p)$). It is more relaxed than the logarithmic decay assumption in \cite{10.1214/15-aos1347}, to wit, $\rho_{jj'}\lesssim\log^{-2-\zeta}(|j-j'|+2)$ for some $\zeta>0$. To see this, the latter requires a natural ordering among all variables that is not always easy to meet in practice.
\end{remark}

\begin{theorem}\label{null:Max}
Suppose $H_0$ and Assumptions \ref{asmp:corr_temporal}--\ref{asmp:corr_spatial} hold. {Assume $b_{n}\sim n^{\mathfrak{b}}$ for some $0<\mathfrak{b}<1$ and $p\lesssim n^{\nu}$ for some $0<\nu<\min\{p/2-2,(1-\mathfrak{b})p/2-1\}$.}
\begin{itemize}
\item[(i)] As $(n,p)\to\infty$,
\[
    \pr\left(M_{n,p}\leq u_p\{\exp(-x)\}\right) \to \exp\{-\exp(-x)\},
\]
where $u_p\{\exp(-x)\}=\sqrt{\{x+\log(2p)\}/2}$.

\item[(ii)] If $\lambda_n\sim n^{\lambda}$ for some $\lambda\in(0,1)$, then, as $(n,p)\to\infty$,
\[
    \pr\left(M^{\dagger}_{n,p}\leq \frac{x+D(p\log h_n)}{A(p\log h_n)}\right) \to \exp\{-\exp(-x)\},
\]
where $A(x)=\sqrt{2\log x}$, $D(x)=2\log x+2^{-1}\log\log x-2^{-1}\log\pi$ and $h_n=\left\{(\lambda_n/n)^{-1}-1\right\}^2$.
\end{itemize}
\end{theorem}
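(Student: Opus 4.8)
The plan is to prove both limits through one common three-stage reduction: (a) replace the Bartlett estimators $\widehat\sigma_j$ by the true long-run standard deviations $\sigma_j$; (b) pass from the CUSUM array to the corresponding functional of a Gaussian process by a Gaussian approximation, together with---for $M^\dagger_{n,p}$---a logarithmic time change; and (c) extract the extreme-value limit of that Gaussian functional by an inclusion-exclusion argument over the $p$ coordinates. The two parts differ only in the Gaussian object produced at stage (b)---a vector of dependent standard Brownian bridges for $M_{n,p}$, versus a vector of dependent stationary Ornstein-Uhlenbeck processes on a growing interval for $M^\dagger_{n,p}$---which is precisely what drives the difference between the two normalizing sequences.

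For stage (a) I would establish $\max_{1\le j\le p}|\widehat\sigma_j^2/\sigma_j^2-1|=o_{\pr}\{(\log p)^{-1}\}$ under Assumption~\ref{asmp:corr_temporal} with $b_n\sim n^{\mathfrak b}$; the $(\log p)^{-1}$ rate is exactly what is needed because the statistics live at level $\asymp\sqrt{\log p}$, respectively $\asymp\sqrt{\log(p\log h_n)}$, so that a relative error $o\{(\log p)^{-1}\}$ in $\sigma_j$ is absorbed. For stage (b) I would invoke a Gaussian approximation for the partial-sum process of the functionally dependent array $\{\epsilon_{ij}\}$: there exist Brownian motions $\{W_j\}_{j=1}^p$ with $\cov\{W_j(s),W_{j'}(t)\}=\rho_{jj'}(s\wedge t)$ such that, writing $B_j(t)=W_j(t)-tW_j(1)$, one has $\max_{1\le k\le n-1}\max_{1\le j\le p}\big|n^{-1/2}(S_{kj}-(k/n)S_{nj})/\sigma_j-B_j(k/n)\big|=o_{\pr}\{(\log p)^{-1/2}\}$ (and its analogue with $\log(p\log h_n)$ for the weighted statistic); the exponents $q>4$, $\mathfrak a>5/2$ and the polynomial-dimension restriction $p\lesssim n^{\nu}$, with $\nu$ below the stated threshold, are precisely the budget this approximation consumes, and replacing the discrete maximum over $k$ by the continuous supremum over $t\in[0,1]$ costs only $o_{\pr}(1)$ after normalization, by the modulus of continuity of the $W_j$. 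Thus $M_{n,p}$ inherits the law of $\max_{1\le j\le p}\sup_{0\le t\le 1}|B_j(t)|$. For $M^\dagger_{n,p}$, with $\Lambda_n=\lambda_n/n$ and the substitution $s=\log\{t/(1-t)\}$, the weighted bridge $\{t(1-t)\}^{-1/2}B_j(t)$ turns into a stationary Gaussian process $U_j$ with $\cov\{U_j(s),U_{j'}(s')\}=\rho_{jj'}e^{-|s-s'|/2}$ on $|s|\le L_n$, where $2L_n=\log\{(\Lambda_n^{-1}-1)^2\}=\log h_n\to\infty$; the restriction $\lambda_n\sim n^{\lambda}$ with $\lambda\in(0,1)$ is what guarantees both that $\log h_n\to\infty$ (the Darling-Erd\"{o}s regime) and that every retained partial-sum index $k\ge\lambda_n\to\infty$ is large enough for the Gaussian approximation to survive the $\{t(1-t)\}^{-1/2}$ amplification near the removed boundary.

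Stage (c) is the heart of the matter. For part (i), the exact one-coordinate tail $\pr\{\sup_{0\le t\le 1}|B_j(t)|>u\}=2\sum_{m\ge 1}(-1)^{m+1}e^{-2m^2u^2}\sim 2e^{-2u^2}$ shows that with $u_p=u_p\{\exp(-x)\}=\sqrt{\{x+\log(2p)\}/2}$ one has $\sum_{j=1}^p\pr\{\sup_t|B_j(t)|>u_p\}\to e^{-x}=:\lambda$. I would then discard the coordinates in $C_p$ (a union bound together with $|C_p|/p\to 0$ gives $\pr\{\max_{j\in C_p}\sup_t|B_j(t)|>u_p\}=o(1)$), apply the Bonferroni inequalities on $[p]\setminus C_p$ so that it suffices to prove $S_d:=\sum\pr\{\bigcap_{i=1}^d\{\sup_t|B_{j_i}(t)|>u_p\}\}\to\lambda^d/d!$ for each fixed $d$ (the sum over $j_1<\cdots<j_d$ with all $j_i\notin C_p$), and split those $d$-tuples according to whether all of their pairwise correlations fall below $\delta_p$. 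For the all-weak tuples, the choice $\delta_p=o\{(\log p)^{-1}\}$ keeps the correlation correction at level $u_p$ of order $\delta_p u_p^2=o(1)$, so the $d$ supremum events are asymptotically mutually independent and these tuples contribute $\binom pd(2e^{-2u_p^2})^d\{1+o(1)\}\to\lambda^d/d!$; for the remaining tuples, each $j\notin C_p$ satisfies $|B_{p,j}|<p^{\kappa_p}$, so there are only $\lesssim p^{1+\kappa_p}=p^{1+o(1)}$ strongly correlated pairs inside $[p]\setminus C_p$, and the bivariate bound $\pr\{\sup_t|B_j(t)|>u,\ \sup_t|B_{j'}(t)|>u\}\lesssim\exp\{-4u^2/(1+|\rho_{jj'}|)\}\,\mathrm{poly}(u)\le p^{-2/(1+\varrho)}\mathrm{poly}(\log p)$, which is legitimate because $|\rho_{jj'}|\le\varrho<1$, makes their total contribution $p^{1+o(1)-2/(1+\varrho)}\mathrm{poly}(\log p)\to 0$. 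Hence $\pr(M_{n,p}\le u_p)\to e^{-\lambda}=\exp\{-\exp(-x)\}$. For part (ii) the skeleton is identical, but the one-coordinate object $\sup_{|s|\le L_n}|U_j(s)|$ itself diverges: a Pickands-type estimate (the local behaviour $\cov\{U_j(s),U_j(s')\}\approx 1-|s-s'|/2$ places us at Pickands index $1$) gives $\pr\{\sup_{|s|\le L_n}|U_j(s)|>u\}\sim c\,(\log h_n)\,u\,e^{-u^2/2}$, and solving $p\,(\log h_n)\,u\,e^{-u^2/2}\asymp e^{-x}$ returns precisely $u=\{x+D(p\log h_n)\}/A(p\log h_n)$ with $A,D$ as in the statement; the inclusion-exclusion over $j\notin C_p$ then proceeds exactly as before, with the within-coordinate continuous supremum controlled by the upcrossing/Pickands machinery, yielding $\exp\{-\exp(-x)\}$ again.

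The main obstacle is the pair of steps inside stage (c) asserting asymptotic independence of suprema of weakly correlated Gaussian processes and negligibility of the few strongly correlated pairs. The difficulty is that the events of interest are suprema, not single Gaussian coordinates, so an off-the-shelf Slepian or normal-comparison inequality---which perceives only a generic $e^{-u^2/(1+\delta_p)}$ in place of the true $e^{-2u^2}$ for the bridge or the Pickands rate for the Ornstein-Uhlenbeck process---is too blunt to kill the cross-terms. Carrying this out under the weak Assumption~\ref{asmp:corr_spatial}, which (unlike the logarithmic-decay condition of \cite{10.1214/15-aos1347}) imposes no ordering on the coordinates, is the genuinely new ingredient: it calls for a structure-aware comparison---localizing each supremum and comparing the joint law of the near-extremal values---together with bookkeeping tight enough that the thresholds $\delta_p=o\{(\log p)^{-1}\}$, $\kappa_p\to 0$, $\varrho<1$ and $|C_p|/p\to 0$ each exactly pay for their corresponding error term. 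For part (ii) there is the extra challenge of running the continuous-time Darling-Erd\"{o}s/Pickands asymptotics simultaneously with the discrete spatial inclusion-exclusion, and of controlling the Gaussian-approximation error near the removed boundary, which is why $\lambda_n\sim n^{\lambda}$ is confined to $\lambda$ bounded away from $0$ and $1$.
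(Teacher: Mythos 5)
Your three-stage architecture (variance estimation, Gaussian approximation to bridges/Ornstein--Uhlenbeck processes, then inclusion--exclusion over coordinates after discarding $C_p$) is exactly the paper's route, and your identification of where the normalizations come from is correct. However, stage (c) as written has two genuine gaps. First, for the all-weak $d$-tuples you assert that the $d$ supremum events factorize because $\delta_p u_p^2=o(1)$, and you correctly observe that Slepian/normal-comparison inequalities are too blunt to prove this --- but you never supply the replacement. The paper's device is elementary and worth naming: under Gaussianity, regress the whole partial-sum process of coordinate $j_l$ on that of $j_1$, writing $X_{ij_l}=\rho_{j_1j_l}X_{ij_1}+\sqrt{1-\rho_{j_1j_l}^2}\,\widetilde X_{ij_l}$ with $\widetilde X_{ij_l}$ exactly independent of $X_{ij_1}$; this yields $Z_{0,j_l}\leq \sqrt{1-\rho_{j_1j_l}^2}\,\widetilde Z_{0,j_l}+|\rho_{j_1j_l}|Z_{0,j_1}$, and after truncating $Z_{0,j_1}\leq Cu_p$ the weak correlation only shifts the threshold by $O(\delta_p u_p)=o\{(\log p)^{-1/2}\}$, which the one-dimensional tail tolerates. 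Induction on $d$ then gives the factorization (the paper's Lemmas S.3--S.4).

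Second, and more seriously, your accounting of the strongly correlated tuples only works for $d=2$. For the Bonferroni sum $S_d$ with $d\geq 3$, the tuples containing at least one strong pair number up to $p^{1+\kappa_p}\cdot p^{d-2}=p^{d-1+\kappa_p}$, and bounding each joint probability by the bivariate bound $p^{-1-\eta}$ on the strong pair alone gives a total of order $p^{d-2-\eta+\kappa_p}$, which diverges for $d\geq 3$. You must exploit the exceedances of the \emph{other} $d-2$ coordinates as well, which forces a case analysis on the correlation graph of the tuple: cliques, tuples possessing a large mutually-weak subset $S$ with every remaining vertex strongly tied to at least two members of $S$, and tuples where some remaining vertex is tied to a unique member of $S$. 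Each case pairs a configuration count (at most $p^{1+t\kappa_p}$, $p^{j-1+(t-j+1)\kappa_p}$, $p^{j+(t-j)\kappa_p}$ respectively, via a graph-counting lemma using that every vertex outside $C_p$ has fewer than $p^{\kappa_p}$ strong neighbours) with a matching multi-coordinate probability bound ($p^{-j}$ for $j$ mutually-weak coordinates, $p^{-(j+1-\zeta)}$ when one strongly-attached coordinate is adjoined). Without this interplay the inclusion--exclusion does not close, so the proposal as it stands does not yet constitute a proof.
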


\begin{remark}
One of the key ingredients in our proof for Theorem \ref{null:Max} is the study of the asymptotic distribution of $\max_{j=1,\ldots,p} Z_{\gamma,j}$ with $\gamma=0$ or 0.5 if $X_i\sim N(0,\Sigma)$ independently with $\Sigma=(\sigma_{jj'})_{p\times p}$, where $Z_{0,j}=\max_{k=1,\ldots,n-1}|C_{0, j}(k)|$ and $Z_{0.5,j}=\max_{\lambda_n\leq k\leq n-\lambda_n}|C_{0.5, j}(k)|$. Then, under non-Gaussian scenario, we can apply the Gaussian approximation in conjunction with the truncation arguments \citep{MR3161448,10.1214/15-aos1347}. However, under Assumption \ref{asmp:corr_spatial}, the derivations are highly non-trivial even for independent Gaussian data sequences. \cite{10.1214/15-aos1347} studied the limit distribution of $\max_{j=1,\ldots,p} Z_{0,j}$ under the logarithmic decay assumption by supplying the blocking arguments \citep{MR362465}, which is infeasible under Assumption \ref{asmp:corr_spatial} due to the absence of the ordering of $p$ variables. To this end, we develop new tools and techniques based on the \textit{inclusion-exclusion principle} to make it reachable.
To be specific, let $u_p:=u_p\{\exp(-x)\}$. According to Step 1 in the proof of Proposition \ref{null:Mnp-Gaussian} in Supporting Information, it suffices to show that
\begin{align}\label{null:Max:proof}
    \pr\big(\max_{j\not\in C_p}Z_{0,j}>u_p\big)\to 1-\exp\{-\exp(-x)\},
\end{align}
By the inclusion-exclusion principle, we can construct sharp lower and upper bounds of $\pr\big(\max_{j\not\in C_p}Z_{0,j}>u_p\big)$. To wit, for any $k\geq 1$,
\[
    \sum_{t=1}^{2k}(-1)^{t-1}\alpha_t\leq \pr\big(\max_{j\not\in C_p}Z_{0,j}>u_p\big)\leq \sum_{t=1}^{2k+1}(-1)^{t-1}\alpha_t,
\]
where $\alpha_t=\sum \pr\big(Z_{0,j_1}>u_p,\ldots,Z_{0,j_t}>u_p\big)$ and the sum runs over all combinations $j_1,\ldots,j_t\in C_p^c$ such that $j_1<\cdots<j_t$. The main difficulties lie in the verification of the fact that $\alpha_t\to \frac{1}{t!}\exp(-tx)$, see Steps 3--6 in the proof of Proposition \ref{null:Mnp-Gaussian}. As a consequence, by letting $k\to\infty$ and using the Taylor expansion of the function $1-\exp(-x)$, \eqref{null:Max:proof} immediately follows.
Some intermediate conclusions on asymptotic properties of random variables $\sup_{0\leq t\leq 1}|\mathcal{W}_{tj}-t\mathcal{W}_{1j}|$ for $j=1,\ldots,p$ are demanded, see Lemmas \ref{lemma:B>=up}--\ref{lemma:Z0j_strong} in Supporting Information, where $\{\mathcal{W}_{tj}\}_{1\leq j\leq p}$ is a sequence of dependent Brownian motions.
Besides, we apply such ideas to study the limit distribution of $\max_{j=1,\ldots,p} Z_{0.5,j}$ in conjunction with an investigation of asymptotic properties on dependent random variables $\sup_{\Lambda_p\leq t\leq 1-\Lambda_p}\{t(1-t)\}^{-1/2}|\mathcal{W}_{tj}-t\mathcal{W}_{tj}|$, with suitably chosen $\Lambda_p=o(1)$.
\end{remark}

Based on Theorem \ref{null:Max}, we can easily obtain the p-value associated with either $M_{n,p}$ or $M^{\dagger}_{n,p}$, namely,
\begin{align*}
    {\rm p}_{M_{n,p}} &:= 1 - G\big(2M_{n,p}^2-\log(2p)\big)\ \text{or}\\
    {\rm p}_{M^{\dagger}_{n,p}} &:= 1 - G\big(A(p\log h_n)M^{\dagger}_{n,p}-D(p\log h_n)\big),
\end{align*}
respectively, where $G$ denotes the standard Gumbel distribution, i.e., $G(x)=\exp\{-\exp(-x)\}$.
If the p-value is below some pre-specified significant level, say $\alpha\in(0,1)$, then we rejected the null hypothesis that there is no changepoint in the data sequence.
It can be expected that either max-$L_\infty$-based testing procedure would be effective in detecting sparse and strong change signals.

\section{The DMS method}\label{sec:DMS}

\subsection{Sum-$L_2$-aggregation}
To detect dense but possibly weak changes, we consider a sum-$L_2$-based aggregation approach that was proposed by \cite{10.1214/17-aos1610} for multinomial data and \cite{10.1142/s201032631950014x} for sub-Gaussian data, respectively. To be specific, we use
\begin{align}\label{S}
    S_{n,p} := \sum_{1\leq k< n}\sum_{j=1,\ldots,p} C^2_{0.5, j}(k).
\end{align}
Recall that $R=(\rho_{jj'})_{p\times p}$.
Lemma \ref{null:Sum} restates the asymptotic null distribution of $S_{n,p}$ derived in \cite{10.1142/s201032631950014x}.
\begin{lemma}\label{null:Sum}
Assume that (i) $\epsilon_{ij}=\Sigma^{1/2}\varepsilon_{ij}$, where $\varepsilon_{ij}$ are i.i.d. sub-Gaussian variables; (ii) $\tr(R^4)=o\{\tr^2(R^2)\}$ as $p\to\infty$; (iii) $p/n^{3-\upsilon}\to 0$ for some $\upsilon>0$. Then, as $(n,p)\to\infty$,
\[
    \{S_{n,p}-(n+2)p\}/\sqrt{\var(S_{n,p})}\to N(0,1)
\]
in distribution, and $\var(S_{n,p}) = \left[\frac{2\pi^2-18}{3}n^2\tr(R^2)+\frac{15-\pi^2}{3}n\left\{\E(\varepsilon^\T R\varepsilon)^2-p^2\right\}\right]\{1+o(1)\}$.
Further, if $\widehat{\tr(R^2)}/\tr(R^2)\to 1$ and $\widehat{\E(\varepsilon^\T R\varepsilon)^2}/\E(\varepsilon^\T R\varepsilon)^2\to 1$ in probability, then $\{S_{n,p}-(n+2)p\}/V^{1/2}_{p,n}\to N(0,1)$ in distribution, where $V_{n,p}=\frac{2\pi^2-18}{3}n^2\widehat{\tr(R^2)}+\frac{15-\pi^2}{3}n\left\{\widehat{\E(\varepsilon^\T R\varepsilon)^2}-p^2\right\}$.
\end{lemma}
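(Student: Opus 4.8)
\noindent The plan is as follows. Although Lemma~\ref{null:Sum} restates a result of \cite{10.1142/s201032631950014x}, I would reconstruct the argument in three moves: (i) strip off the studentization and reduce $S_{n,p}$ to a degenerate quadratic form in the standardized noise; (ii) read off its mean and variance through Riemann-sum limits; and (iii) establish asymptotic normality via a martingale-difference decomposition indexed by the time coordinate. Throughout, under $H_0$ the baseline level $\mu_0$ cancels in every CUSUM, and since the noise is i.i.d.\ across $i$ there is no long-run variance subtlety, $\widehat\sigma_j^{\,2}\to\sigma_{jj}$.

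\emph{Reduction.} Set $Y_{ij}:=\epsilon_{ij}/\sigma_j$ (unit variance, correlations $\rho_{jj'}$), $a_{ki}:=\ind{i\le k}-k/n$ and $B_{kj}:=\{(k/n)(1-k/n)\}^{-1/2}n^{-1/2}\sum_{i=1}^{n}a_{ki}Y_{ij}$, so $C_{0.5,j}(k)^2=B_{kj}^2\,(\sigma_j^2/\widehat\sigma_j^{\,2})$. Writing $\sigma_j^2/\widehat\sigma_j^{\,2}=1+\xi_j$, split $S_{n,p}=S_{n,p}^{\circ}+R_n$ with $S_{n,p}^{\circ}:=\sum_{j=1}^{p}\sum_{k=1}^{n-1}B_{kj}^2$ and $R_n:=\sum_{j=1}^{p}\xi_j\sum_{k=1}^{n-1}B_{kj}^2$. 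Using the consistency and bias expansion of Bartlett's estimator for i.i.d.\ sub-Gaussian data together with $p/n^{3-\upsilon}\to0$, I would show $R_n=3p+o_p(\sqrt{\var(S_{n,p})})$; since $\E(S_{n,p}^{\circ})=(n-1)p$ (next step), this pins the centering at $(n+2)p$.

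\emph{Mean and variance.} Exchanging summations, $S_{n,p}^{\circ}=\sum_{i,i'=1}^{n}\Phi_{ii'}\langle Y_i,Y_{i'}\rangle$ with $\langle Y_i,Y_{i'}\rangle:=\sum_{j=1}^{p}Y_{ij}Y_{i'j}$ and $\Phi_{ii'}:=\sum_{k=1}^{n-1}a_{ki}a_{ki'}\{n(k/n)(1-k/n)\}^{-1}$, which is symmetric with $\sum_i\Phi_{ii}=n-1$, hence $\E(S_{n,p}^{\circ})=(n-1)p$. The off-diagonal part $2\sum_{i<i'}\Phi_{ii'}\langle Y_i,Y_{i'}\rangle$ is centred with variance exactly $4\sum_{i<i'}\Phi_{ii'}^2\tr(R^2)$, the diagonal part $\sum_i\Phi_{ii}(\|Y_i\|^2-p)$ is centred with variance $\{\E(\varepsilon^\T R\varepsilon)^2-p^2\}\sum_i\Phi_{ii}^2$, and the two are uncorrelated. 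Passing to Riemann integrals, $\Phi_{ii}\to-\log\{(i/n)(1-i/n)\}-1$ and $\sum_{i,i'}\Phi_{ii'}^2=\sum_{k,k'}r(k/n,k'/n)^2=\{1+o(1)\}\,n^2\int_0^1\!\int_0^1 r(s,t)^2\,ds\,dt$, where $r(s,t):=\{\min(s,t)-st\}/\sqrt{s(1-s)t(1-t)}$. The elementary values $\int_0^1\!\int_0^1 r(s,t)^2\,ds\,dt=(\pi^2-9)/3$ and $\int_0^1[\log\{t(1-t)\}+1]^2\,dt=(15-\pi^2)/3$ then reproduce $\var(S_{n,p})=\{\tfrac{2\pi^2-18}{3}n^2\tr(R^2)+\tfrac{15-\pi^2}{3}n[\E(\varepsilon^\T R\varepsilon)^2-p^2]\}\{1+o(1)\}$ (the $O(n\tr(R^2))$ Gaussian piece of the diagonal variance being absorbed into the $\{1+o(1)\}$).

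\emph{CLT and plug-in.} Ordering the sample by time and setting $\mathcal F_{i'}:=\sigma(Y_1,\ldots,Y_{i'})$, $D_{i'}:=\Phi_{i'i'}(\|Y_{i'}\|^2-p)+2\sum_{i<i'}\Phi_{ii'}\langle Y_i,Y_{i'}\rangle$ is a martingale-difference array with $\sum_{i'}D_{i'}=S_{n,p}^{\circ}-(n-1)p$. I would invoke the martingale CLT after checking (a) $\sum_{i'}\E(D_{i'}^2\mid\mathcal F_{i'-1})/\var(S_{n,p})\to1$ in probability and (b) the Lyapunov bound $\sum_{i'}\E(D_{i'}^4)=o(\var^2(S_{n,p}))$; expanding $\E(D_{i'}^2\mid\mathcal F_{i'-1})$ produces quadratic forms $Y_{i_1}^\T R Y_{i_2}$ whose fluctuations scale like $\tr(R^4)$ (plus lower-order kurtosis terms), so $\tr(R^4)=o\{\tr^2(R^2)\}$ delivers both (a) and (b). This gives $\{S_{n,p}^{\circ}-(n-1)p\}/\sqrt{\var(S_{n,p})}\to N(0,1)$, hence with the reduction step the first assertion; the final claim then follows from Slutsky's lemma, since the assumed consistency of $\widehat{\tr(R^2)}$ and $\widehat{\E(\varepsilon^\T R\varepsilon)^2}$ forces $V_{n,p}/\var(S_{n,p})\to1$. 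I expect the binding difficulties to be the reduction step and the concentration of $\sum_{i'}\E(D_{i'}^2\mid\mathcal F_{i'-1})$: controlling the studentization remainder below $\sqrt{\var(S_{n,p})}$ and fixing its $O(p)$ mean correction demand a careful second-order expansion of $\widehat\sigma_j^{-2}$ and of its covariance with $\sum_k B_{kj}^2$, while the conditional variance must be handled from the trace condition $\tr(R^4)=o\{\tr^2(R^2)\}$ alone, with no control on the largest eigenvalue of $R$.
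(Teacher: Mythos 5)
Your reconstruction is essentially the argument behind this lemma: the paper itself does not prove it (it is restated from Wang et al.\ (2019)), but the decomposition the supplement quotes from that reference, $S_{n,p}=\sum_i a_i\xi_i^\T\xi_i+\sum_{i<k}b_{i,k}\xi_i^\T\xi_k+o_p\{\var^{1/2}(S_{n,p})\}$, is exactly your $\Phi$-kernel split into diagonal and off-diagonal parts (with $a_i=\Phi_{ii}$, $b_{i,k}=2\Phi_{ik}$), and your Riemann-sum constants $(\pi^2-9)/3$ and $(15-\pi^2)/3$ and the martingale CLT under $\tr(R^4)=o\{\tr^2(R^2)\}$ all check out. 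The only point to nail down carefully is the one you already flag: the $+3p$ centering correction from the studentization must be derived exactly (not just bounded), since for $p$ between $n^2$ and $n^{3-\upsilon}$ an $O(p)$ error is not $o\{\var^{1/2}(S_{n,p})\}$.
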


\begin{remark}
\cite{10.1142/s201032631950014x} suggested using $\widehat{\sigma}_j^2=\{2(n-1)\}^{-1}\sum_{i=2}^{n}(X_{ij}-X_{i-1,j})^2$ for $j=1,\ldots,p$. Further, $\tr(R^2)$ and $\E(\varepsilon^\T R\varepsilon)^2$ can be estimated by
\begin{align*}
    \widehat{\tr(R^2)} &=
    \frac{1}{4(n-3)} \sum_{i=1}^{n-3}\left\{
    (X_{i}-X_{i+1})^\T \widehat{D}_{(i,i+1,i+2,i+3)}^{-1} (X_{i+2}-X_{i+3})
    \right\}^2\ \text{and}\ \\
    \widehat{\E(\varepsilon^\T R\varepsilon)^2} &=
    \frac{1}{(n-2)} \sum_{i=1}^{n-2}\left\{
    (X_{i}-X_{i+1})^\T \widehat{D}_{(i,i+1,i+2)}^{-1} (X_{i+1}-X_{i+2})
    \right\}^2
    - 3 \widehat{\tr(R^2)},
\end{align*}
respectively, where for any $(i_1,\ldots,i_m)\subset\{1,\ldots,n\}$ with $m\geq 1$,
\[
\widehat{D}_{(i_1,\ldots,i_m)}=\diag\left(\widehat{\sigma}_{1(i_1,\ldots,i_m)}^{2}, \ldots, \widehat{\sigma}_{p(i_1,\ldots,i_m)}^{2}\right),
\]
and $\widehat{\sigma}_{j(i_1,\ldots,i_m)}^{2}=\{2|\mathcal{A}_m|\}^{-1}\sum_{i\in\mathcal{A}_m}(X_{ij}-X_{i-1,j})^2$ with $\mathcal{A}_m=\{2,\ldots,n\}\backslash\{i_1,\ldots,i_m\}$ for $j=1,\ldots,p$. With the usages of such difference-based estimators, Lemma \ref{null:Sum} holds.
\end{remark}

By Lemma \ref{null:Sum}, the p-value associated with $S_{n,p}$ is
\begin{align*}
    {\rm p}_{S_{n,p}} &:= 1 - \Phi\left(\{S_{n,p}-(n+2)p\}/V_{n,p}^{1/2}\right),
\end{align*}
where $\Phi$ is the cumulative distribution function (CDF) of $N(0,1)$. Again, small values of ${\rm p}_{S_{n,p}}$ indicate rejecting the null hypothesis.

\subsection{Adaptive strategy}

In practice, we seldom know whether the potential change signal is sparse or dense. To adapt to various alternative behaviors, we propose combining the max-$L_\infty$- and sum-$L_2$-based testing procedures. The key message in this paper is that the max-$L_\infty$- and sum-$L_2$-type statistics are asymptotically independent under some mild conditions if $H_0$ holds.
Following most works on adaptive changepoint testing such as \cite{10.1111/rssb.12375} and \cite{doi:10.1080/01621459.2021.1884562}, we restrict ourselves to the setting of independent observations. In such scenario, we propose to use the difference-based variance estimators, i.e., $\widehat{\sigma}_j^2=\{2(n-1)\}^{-1}\sum_{i=2}^{n}(X_{ij}-X_{i-1,j})^2$ for $j=1,\ldots,p$.
To ease the presentation and highlight key ideas in the proof, we also introduce the following assumptions.

\begin{assumption}\label{asmp:noise}
The noises $\epsilon_{ij}=\Sigma^{1/2}\varepsilon_{ij}$, where $\varepsilon_{ij}$ are i.i.d. sub-Gaussian variables, i.e., there exists a constant $\zeta>0$ such that $\E\{\exp(t\varepsilon_{ij})\}\leq \exp(\zeta t^2)$ for all $t\in\mathbb{R}$.
\end{assumption}

\begin{assumption}\label{asmp:corr_spatial-ev}
(i) $\liminf_{p\to\infty}\min_{j=1,\ldots,p}\sigma_{j}\geq\underline{\sigma}$ for some constant $\underline{\sigma}>0$;
(ii) $|\rho_{jj'}|\leq\varrho$ for $1\leq j\neq j'\leq p$ and some constant $\varrho\in (0,1)$;
(iii) There exist some constants $0<\underline{c}<\overline{c}<\infty$ such that $\underline{c}\leq\lambda_{\min}(R)\leq\lambda_{\max}(R)\leq\overline{c}$, where $\lambda_{\min}(R)$ and $\lambda_{\max}(R)$ denote the minimal and maximal eigenvalues of $R$.
\end{assumption}

\begin{remark}
Under Assumptions \ref{asmp:noise}--\ref{asmp:corr_spatial-ev}, Lemma \ref{null:Sum} trivially holds if $p/n^{3-\upsilon}\to 0$ for some $\upsilon>0$. In addition, Assumption \ref{asmp:corr_spatial-ev}--(iii) implies that Assumption \ref{asmp:corr_spatial}--(ii) holds, and thus it can be shown that Theorem \ref{null:Max} holds if $p\lesssim n^{\nu}$ for some $\nu>0$. To see this, let $R=Q^\T \Lambda Q$, where $Q=(q_{jj'})_{p\times p}$ is an orthogonal matrix and $\Lambda=\diag(\lambda_1,\ldots,\lambda_p)$ with $\lambda_j$'s being the eigenvalues of $R$. Observe that $\sum_{j'=1}^p \rho_{jj'}^2=\sum_{j'=1}^p q_{j'j}^2\lambda_{j'}^2\leq\overline{c}^2$. As a consequence, $|B_{p,j}|\delta_p^2\leq \sum_{j'=1}^p \rho_{jj'}^2\leq \overline{c}^2$. Taking $\delta_p=(\log p)^{-1-c}$ for some constant $c>0$ and select $\kappa_p=4(1+c)\log\log p/\log p\to 0$, we conclude that $|B_{p,j}|<p^{\kappa_p}$ for sufficiently large $p$. Hence $|C_p|=0$ and Assumption \ref{asmp:corr_spatial}--(ii) holds.
\end{remark}

\begin{theorem}\label{null:DMS}
Suppose $H_0$ and Assumptions \ref{asmp:noise}--\ref{asmp:corr_spatial-ev} hold. Assume $p\lesssim n^{\nu}$ for some $0<\nu<\min\{p/2-2,3-\upsilon\}$ with some $\upsilon>0$.
\begin{itemize}
\item[(i)] As $(n,p)\to\infty$, $M_{n,p}$ is asymptotically independent of $S_{n,p}$ in the sense that
\[
    \pr\Big(M_{n,p}\leq u_p\{\exp(-x)\}, \frac{S_{n,p}-(n+2)p}{V_{n,p}^{1/2}}\leq y\Big) \to \exp\{-\exp(-x)\}\cdot\Phi(y);
\]

\item[(ii)] If $\lambda_n\sim n^{\lambda}$ for some $\lambda\in(0,1)$, then, as $(n,p)\to\infty$, $M^{\dagger}_{n,p}$ is asymptotically independent of $S_{n,p}$ in the sense that
\[
    \pr\Big(M^{\dagger}_{n,p}\leq \frac{x+D(p\log h_n)}{A(p\log h_n)}, \frac{S_{n,p}-(n+2)p}{V_{n,p}^{1/2}}\leq y\Big) \to \exp\{-\exp(-x)\}\cdot\Phi(y).
\]
\end{itemize}
\end{theorem}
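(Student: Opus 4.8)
The idea is to graft the inclusion--exclusion machinery already used for Theorem~\ref{null:Max} onto a \emph{conditional} central limit theorem for the trimmed sum--$L_2$ statistic. I will describe part~(i); part~(ii) is obtained by the same argument after replacing the per-coordinate maximum $Z_{0,j}=\max_{1\le k\le n-1}|C_{0,j}(k)|$ and the threshold $u_p:=u_p\{\exp(-x)\}$ by $Z_{0.5,j}=\max_{\lambda_n\le k\le n-\lambda_n}|C_{0.5,j}(k)|$ and $\{x+D(p\log h_n)\}/A(p\log h_n)$, and by invoking the Darling--Erd\"os-type analysis from the proof of Theorem~\ref{null:Max}(ii) wherever the Brownian-bridge one was used; the sum side of the argument is untouched by this change. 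First I would clear away the standardizations: as in the proof of Theorem~\ref{null:Max}, replacing each $\widehat\sigma_j$ by $\sigma_j$ perturbs $M_{n,p}$ and $S_{n,p}/V_{n,p}^{1/2}$ only by $o_P(1)$, and by Lemma~\ref{null:Sum} one may use $V_{n,p}$ in place of $\var(S_{n,p})$; hence it suffices to treat the oracle-standardized statistics, for which $\var(S_{n,p})\asymp n^2\tr(R^2)\asymp n^2p$ under Assumptions~\ref{asmp:noise}--\ref{asmp:corr_spatial-ev}. Writing $S_{n,p}=\sum_{j=1}^p S_{n,p}^{(j)}$ with $S_{n,p}^{(j)}=\sum_{k=1}^{n-1}C_{0.5,j}^2(k)$, one has $\E S_{n,p}^{(j)}=n+O(1)$ and $S_{n,p}^{(j)}=O_P(n)$, so any fixed block $J\subset\{1,\dots,p\}$ of bounded size contributes $\sum_{j\in J}S_{n,p}^{(j)}=O_P(n)=o_P(V_{n,p}^{1/2})$; thus $\widetilde S:=\{S_{n,p}-(n+2)p\}/V_{n,p}^{1/2}$ differs by $o_P(1)$ from its trimmed version $\widetilde S_{-J}:=\{S_{n,p}^{(-J)}-(n+2)(p-|J|)\}/V_{n,p}^{1/2}$, $S_{n,p}^{(-J)}:=\sum_{j\notin J}S_{n,p}^{(j)}$, and by continuity of $\Phi$ this interchange costs nothing below.

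Exactly as in the proof of Proposition~\ref{null:Mnp-Gaussian} (first discard $C_p$, then apply Bonferroni), for each fixed $k\ge1$ one sandwiches
\[
  \sum_{t=1}^{2k}(-1)^{t-1}\alpha_t(y)\ \le\ \pr\big(M_{n,p}>u_p,\ \widetilde S\le y\big)\ \le\ \sum_{t=1}^{2k+1}(-1)^{t-1}\alpha_t(y),
\]
where $\alpha_t(y):=\sum\pr\big(Z_{0,j_1}>u_p,\dots,Z_{0,j_t}>u_p,\ \widetilde S\le y\big)$ is summed over $j_1<\cdots<j_t$ in $C_p^c$. Everything then reduces to showing $\alpha_t(y)\to\frac1{t!}e^{-tx}\Phi(y)$ for each fixed $t$: granting this, letting $k\to\infty$ and using $\sum_{t\ge1}(-1)^{t-1}z^t/t!=1-e^{-z}$ with $z=e^{-x}\Phi(y)$ gives $\pr(M_{n,p}>u_p,\widetilde S\le y)\to\Phi(y)\{1-\exp(-e^{-x})\}$, and subtracting from $\pr(\widetilde S\le y)\to\Phi(y)$ (Lemma~\ref{null:Sum}) yields the claimed joint limit $\exp\{-\exp(-x)\}\,\Phi(y)$.

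To prove $\alpha_t(y)\to\frac1{t!}e^{-tx}\Phi(y)$, fix $t$ and a block $J=\{j_1,\dots,j_t\}\subset C_p^c$. Because the difference-based estimator makes $C_{0,j}(k)$ and $C_{0.5,j}(k)$ functions of column $j$ alone, $(Z_{0,j_1},\dots,Z_{0,j_t})$ is $\mathcal G_{n,J}:=\sigma\{X_{ij}:1\le i\le n,\ j\in J\}$-measurable while $S_{n,p}^{(-J)}$ depends only on the complementary columns. The point is that, since the rows $X_1,\dots,X_n$ are independent, conditioning on $\mathcal G_{n,J}$ leaves the complementary columns with \emph{row-independent} entries, so the martingale/Lindeberg argument behind Lemma~\ref{null:Sum} applies conditionally. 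On the event $\{Z_{0,j_l}>u_p,\ l\le t\}$ the conditional mean of $S_{n,p}^{(-J)}$ shifts by at most $\sum_{j\notin J}\sum_k\{\E(C_{0.5,j}(k)\mid\mathcal G_{n,J})\}^2\lesssim\sum_{j\notin J}\sum_{l\le t}\rho_{jj_l}^2\,S_{n,p}^{(j_l)}\lesssim n\sum_{l\le t}\sum_{j=1}^p\rho_{jj_l}^2\lesssim tn=o_P(V_{n,p}^{1/2})$, where $\sum_j\rho_{jj_l}^2\le\overline c^2$ is exactly the bound used in the remark following Assumption~\ref{asmp:corr_spatial-ev} and $S_{n,p}^{(j_l)}=O_P(n)$ persists under that event because the extra mass near the argmax is $O(\log p)=o(n)$; the conditional variance is still $V_{n,p}\{1+o(1)\}$; and the conditional covariance of the complementary columns is a Schur complement of $R$, hence by Assumption~\ref{asmp:corr_spatial-ev}(iii) has eigenvalues in $[\underline c,\overline c]$ and off-diagonals bounded away from $1$, \emph{uniformly over $J$}. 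Consequently $\pr(\widetilde S_{-J}\le y\mid Z_{0,j_1}>u_p,\dots,Z_{0,j_t}>u_p)\to\Phi(y)$ uniformly over $J\subset C_p^c$ with $|J|=t$, so $\alpha_t(y)=\{\Phi(y)+o(1)\}\sum\pr(Z_{0,j_1}>u_p,\dots,Z_{0,j_t}>u_p)$, and the remaining sum tends to $\frac1{t!}e^{-tx}$ by the proof of Proposition~\ref{null:Mnp-Gaussian}.

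I expect the genuine obstacle to be the \emph{uniformity} of that conditional CLT: a non-uniform $o(1)$ cannot be summed over the $\binom{|C_p^c|}{t}$ blocks $J$, so one must show that the conditional limit for $\widetilde S_{-J}$ holds with a rate depending on $R$ only through $\underline\sigma,\varrho,\underline c,\overline c$ and the sub-Gaussian constant $\zeta$ — which is precisely what the uniform control of the Schur-complement covariances and of the conditional-mean shift above is meant to deliver, through a quantitative (Berry--Esseen-type) version of the martingale CLT underlying Lemma~\ref{null:Sum}. Everything else is bookkeeping: the discarding of $C_p$ and the Bonferroni sandwich intersected with the measurable event $\{\widetilde S\le y\}$ go through exactly as in Theorem~\ref{null:Max}, and for part~(ii) the boundary-removal weights and the normalizing constants $A(p\log h_n),D(p\log h_n)$ are carried through the per-coordinate analysis unchanged while the conditional CLT for the sum is verbatim the same.
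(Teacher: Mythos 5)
Your overall architecture coincides with the paper's: both reduce the joint limit, via an inclusion--exclusion sandwich, to showing that for each fixed $t$ the probability of $\{Z_{0,j_1}>u_p,\dots,Z_{0,j_t}>u_p\}$ intersected with $\{\widetilde S\le y\}$ factorizes asymptotically, and both exploit that, once the block $J=\{j_1,\dots,j_t\}$ is set aside, the remaining sum statistic is essentially unaffected by it. The first genuine gap is the step you yourself flag as ``the genuine obstacle'': uniformity over the $\binom{p}{t}$ blocks of a conditional CLT for $\widetilde S_{-J}$ given the rare event $\{Z_{0,j_l}>u_p,\ l\le t\}$. Conditioning on an event of probability $\asymp p^{-t}$ tilts the law of the block columns heavily, so a conditional Berry--Esseen bound would have to hold uniformly over essentially \emph{all} realizations of $\mathcal{G}_{n,J}$, not just typical ones, and you do not supply it. The paper avoids conditioning altogether: in Lemma \ref{lem:indep:A-B} it decomposes $S^*_{n,p}=S^*_1+S^*_2$, where $S^*_1$ is built from the exact residuals $U_i=\xi_{i,(2)}-R_{21}R_{11}^{-1}\xi_{i,(1)}$ and is therefore genuinely independent of the block, and it proves the tail bound \eqref{claim:S2}, namely $\pr(|\Theta_i|\ge\epsilon\nu_{n,p})\le p^{-t_p}$ with $t_p\to\infty$, so that the resulting error $2\binom{p}{d}p^{-t_p}$ is annihilated by the combinatorial factor. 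Your $o_P(V_{n,p}^{1/2})$ mean-shift heuristic is the right intuition but is not a substitute for this exponential, event-free control.

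The second gap is that your conditional argument is Gaussian-specific. The assertions that the conditional covariance of the complementary columns is the Schur complement $R_{22}-R_{21}R_{11}^{-1}R_{12}$ and that the conditional mean is linear in the block columns are properties of the normal distribution; under Assumption \ref{asmp:noise} with $\xi_i=\Sigma^{1/2}\varepsilon_i$ and general sub-Gaussian $\varepsilon$, the residual $U_i$ is merely uncorrelated with, not independent of, $\xi_{i,(1)}$, and ``conditioning leaves row-independent entries with Schur-complement covariance'' fails. The paper therefore proves the theorem first for Gaussian data and then transfers the \emph{joint} limit of the pair $(W,V)$ to the sub-Gaussian case by a two-dimensional Lindeberg swap using the smooth-max approximation $F_\beta$ of \cite{MR3161448}; your proposal leaves this entire second half unaddressed (the remark that ``the martingale/Lindeberg argument applies conditionally'' does not survive the loss of conditional independence). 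Until both points are filled in, the proof is incomplete.
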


\begin{remark}
Proof of Theorem \ref{null:DMS} is first carried under Gaussian data sequences, i.e., $X_i\sim N(0,\Sigma)$, by leveraging the \textit{inclusion-exclusion principle} and projection arguments. Then it is followed by the Gaussian approximation in conjunction with a smooth approximation of the maximum function \citep{MR3161448} to handle sub-Gaussian data.
To fix ideas, we consider Theorem \ref{null:DMS}--(i) under Gaussianity, and the goal is to demonstrate asymptotic independence between $\max_{j=1,\ldots,p} Z_{0,j}$ with $Z_{0,j}=\max_{k=1,\ldots,n-1}|C_{0, j}(k)|$ and $\widetilde{S}_{n,p}:=\{S_{n,p}-(n+2)p\}/{V_{n,p}^{1/2}}$. For any fixed $x,y\in\mathbb{R}$, define $A_p(x)=\{\widetilde{S}_{n,p}\leq x\}$ and $B_j(y)=\{Z_{0,j}>u_p\{\exp(-y)\}\}$ for $j=1,\ldots,p$. By Theorem \ref{null:Max} and Lemma \ref{null:Sum}, it suffices to show that
\[
    \pr\Big(\bigcup_{j=1}^p A_p(x)B_j(y)\Big)\to\Phi(x)\cdot\{1-\exp\{-\exp(-y)\}.
\]
By applying the inclusion-exclusion principle to $\pr\left(\bigcup_{j=1}^p A_p(x)B_j(y)\right)$ and $\pr\left(\bigcup_{j=1}^p B_j(y)\right)$, we can show
\begin{align}\label{null:DMS:proof}
    \pr\Big(\bigcup_{j=1}^p A_p(x) B_j(y)\Big)
    \leq \pr\left(A_p(x)\right)\pr\Big(\bigcup_{j=1}^p B_j(y)\Big) + \sum_{t=1}^{2k} U(p,t) + V(p,2k+1),
\end{align}
where for each $t\geq 1$,
\begin{align*}
    U(p,t) &:= \sum_{1\leq j_1<\cdots<j_t\leq p}\left|\pr\left(A_p(x) B_{j_1}(y)\cdots B_{j_t}(y)\right) - \pr\left(A_p(x)\right)\cdot\pr\left(B_{j_1}(y)\cdots B_{j_t}(y)\right)\right|
\end{align*}
and $V(p,t):=\sum_{1\leq j_1<\cdots<j_t\leq p}\pr\left(B_{j_1}(y)\cdots B_{j_t}(y)\right)$. It can be verified that for each $t\geq 1$, $U(p,t)\to 0$ by using the projection method (cf. Lemma \ref{lem:indep:A-B} in Supporting Information) and $V(p,t)\to \frac{1}{t!}\exp(-tx/2)$ (see the proof of Theorem \ref{null:DMS}), as $(n,p)\to \infty$. By letting $k\to\infty$, it can be concluded from \eqref{null:DMS:proof} that $\limsup_{(n,p)\to\infty}\pr\Big(\bigcup_{j=1}^p A_p(x)B_j(y)\Big)\leq\Phi(x)\cdot\{1-\exp\{-\exp(-y)\}\}$. Similarly, we have $\liminf_{(n,p)\to\infty}\pr\Big(\bigcup_{j=1}^p A_p(x)B_j(y)\Big)\geq\Phi(x)\cdot\{1-\exp\{-\exp(-y)\}\}$. Hence the conclusion follows.
\end{remark}

According to Theorem \ref{null:DMS}, we suggest combining the corresponding p-values by using Fisher's method \citep{MR312634,MR375577}. To wit,
\begin{align*}
    {\rm p}_{M,S} &:= 1 - F_{\chi^2_4}\Big(-2(\log {\rm p}_{M_{n,p}} + \log {\rm p}_{S_{n,p}})\Big)\ \text{or}\\
    {\rm p}_{M^{\dagger},S} &:= 1 - F_{\chi^2_4}\Big(-2(\log {\rm p}_{M^{\dagger}_{n,p}} + \log {\rm p}_{S_{n,p}})\Big),
\end{align*}
where $F_{\chi^2_4}$ is the CDF of the chi-squared distribution with 4 degrees of freedom.
The rationality is that either $-2(\log {\rm p}_{M_{n,p}} + \log {\rm p}_{S_{n,p}})$ or $-2(\log {\rm p}_{M^{\dagger}_{n,p}} + \log {\rm p}_{S_{n,p}})$ converges in distribution to $F_{\chi^2_4}$ under $H_0$ due to Theorem \ref{null:DMS}.
Then either ${\rm p}_{M,S}$ or ${\rm p}_{M^{\dagger},S}$ can be used as the final p-value for testing $H_0$. If it is less than some pre-specified significant level $\alpha\in(0,1)$, then we reject $H_0$.

\begin{remark}
The sum-$L_2$-type statistics were proposed by \cite{10.1214/17-aos1610} and \cite{10.1142/s201032631950014x}, and have been shown to admit an asymptotic Gaussian distribution under $H_0$ (cf. Lemma \ref{null:Sum}). The authors also considered a power-enhancement term \citep{10.3982/ecta12749} based on the max-$L_\infty$-type statistic to form a new testing procedure. However, the rationality is rather different from ours. To wit, under $H_0$, the power-enhancement term does not alter the asymptotic Gaussianity of the test statistic. Under $H_1$, it would generally diverge and dominate the testing procedure if the change signal is sparse and strong, and thus enhance the testing power.
In contrast, we exploit the joint distribution of both max-$L_\infty$- and sum-$L_2$-type statistics to adapt to different levels of sparsity. Our numerical studies reveal that the DMS method outperforms such power-enhancement ones in the sparse regime since it leverages the information from the max-$L_\infty$-type statistic more efficiently.
\end{remark}

\begin{remark}
Sparsity-adaptive changepoint testing has been widely discussed in the literature, as reviewed in Section \ref{sec:introduction}.
To make comparisons among state of the art approaches, we consider the max-$L_{(s,q)}$-type statistic \citep{10.1111/rssb.12375} which aggregates the CUSUMs or their variants, say $Z_{j}(k)$, over $p$ dimensions by an adjusted $L_q$ norm, namely,
$T_{q,s} := \max_{k}\big\{\sum_{j=1}^{s}|Z_{(j)}(k)|^q\big\}^{1/q}$,
where $|Z_{(1)}(k)|\geq\cdots\geq|Z_{(p)}(k)|$ is the order statistics of $\{|Z_{j}(k)|\}_{j=1}^p$.
\cite{10.1214/18-aos1740} and \cite{10.1214/20-aos1994} proposed scan statistics that searches for the maximum of appropriately normalized $T_{2,s}$'s over a range of $s$ values with well-calibrated normalizing factors. In addition, they showed that their methods can achieve the minimax testing rate for Gaussian data sequences; however, the corresponding asymptotic distributions are unknown.
\cite{10.1111/rssb.12375} suggested combining different $T_{q,s}$ over $q$'s with a fixed $s=s_0$ by taking the minimum of associated p-values, which is obtained via the multiplier bootstrap approximation.
The validity of bootstrap approximation was also justified, without the need of knowing the complicated joint distribution of all $T_{q,s_0}$'s over $q$'s.
For practical applications, they recommended using $q\in\{1,2,3,4,5,\infty\}$ and $s_0=p/2$ (although this choice results in a diverging $s$).
\cite{doi:10.1080/01621459.2021.1884562} investigated the asymptotic distribution of a specialized max-$L_q$-type statistic (viz., a self-normalized U-statistic) with an even $q$,
and further justified the asymptotic independence of a finite number of the statistics indexed by different even $q$'s,
under the null hypothesis.
They then came up with the minimum p-value combination to aggregate different $q$ level statistics.
Notice that they excluded the statistic with $q=\infty$. In practice, they recommended using $q\in\{2,6\}$.
\end{remark}

The proposed DMS method only involves the max-$L_\infty$- and sum-$L_2$-type statistics, which can be computed in $O(np)$ time.
Due to the asymptotic independence of the two statistics (cf. Theorem \ref{null:DMS}) under $H_0$ and the associated reachable p-values (cf. Theorem \ref{null:Max} and Lemma \ref{null:Sum}),
it is very convenient to combine both testing procedures.
Consequently, the DMS is simple to use and computationally efficient;
it is free of well-calibrated thresholds or any numerical approximations such as by using multiplier bootstrap or by simulating Gaussian analogues.
As we will show in Section \ref{sec:num}, the power of the DMS is comparable or sometimes outperforms existing approaches; thus the DMS is also effective.

To theoretically investigate the power property of the DMS method, we consider the following sequence of local alternative hypotheses
\begin{align}\label{H1}
    H_{1;n,p}: |\mathcal{A}|=o\{p/(\log \log p)^2\}\ \text{and}\ \Delta_{S}=O\{\var^{1/2}(S_{n,p})\},
\end{align}
where $\mathcal{A}=\{1\leq j\leq p: \delta_j\neq 0\}$ is the support of $\delta$ and
\[
    \Delta_{S}=\sum_{\tau=1}^{n-1}\sum_{j=1}^{p}\frac{\tau(n-\tau)}{n}\Big\{
    \frac{1}{\tau}\sum_{i=1}^{\tau}\E(X_{ij})-\frac{1}{n-\tau}\sum_{i=\tau+1}^{n}\E(X_{ij})/\sigma_{j}^{2}
    \Big\}^2
\]
represents the scaled change signal corresponding to the sum-$L_2$-type statistic $S_{n,p}$.

\begin{theorem}\label{alter:DMS}
Suppose $H_{1;n,p}$ and Assumptions \ref{asmp:noise}--\ref{asmp:corr_spatial-ev} hold. Assume $p\lesssim n^{\nu}$ for some $0<\nu<\min\{p/2-2,3-\upsilon\}$ with some $\upsilon>0$.
\begin{itemize}
\item[(i)] As $(n,p)\to\infty$, $M_{n,p}$ is asymptotically independent of $S_{n,p}$ in the sense that
\[
    \pr\Big(M_{n,p}\leq u_p\{\exp(-x)\}, \frac{S_{n,p}-E(S_{n,p})}{V_{n,p}^{1/2}}\leq y\Big) - \pr\left(M_{n,p}\leq u_p\{\exp(-x)\}\right) \pr\Big(\frac{S_{n,p}-E(S_{n,p})}{V_{n,p}^{1/2}}\leq y\Big) \to 0;
\]
\item[(ii)] If $\lambda_n\sim n^{\lambda}$ for some $\lambda\in(0,1)$, then, as $(n,p)\to\infty$, $M^{\dagger}_{n,p}$ is asymptotically independent of $S_{n,p}$ in the sense that
\[
    \pr\Big(M^{\dagger}_{n,p}\leq v_{p,n}(x), \frac{S_{n,p}-E(S_{n,p})}{V_{n,p}^{1/2}}\leq y\Big) - \pr\left(M^{\dagger}_{n,p}\leq v_{p,n}(x)\right) \pr\Big(\frac{S_{n,p}-E(S_{n,p})}{V_{n,p}^{1/2}}\leq y\Big) \to 0,
\]
where $v_{p,n}(x):=\{x+D(p\log h_n)\}/{A(p\log h_n)}$.
\end{itemize}
\end{theorem}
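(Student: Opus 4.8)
The plan is to reduce to the Gaussian case, exactly as in the proof of Theorem~\ref{null:DMS} (Gaussian approximation together with a smooth approximation of the maximum function), and then to transport the asymptotic independence from $H_0$ to the local alternative $H_{1;n,p}$ by splitting every statistic along the support $\mathcal{A}=\{j:\delta_j\neq 0\}$ of the change and its complement $\mathcal{A}^c$. The guiding observation is that along $\mathcal{A}^c$ the data are distributionally identical to those under $H_0$, whereas $\mathcal{A}$ is ``small'' in that $|\mathcal{A}|=o\{p/(\log\log p)^2\}$ and its imprint on the sum statistic is ``tame'' in that $\Delta_S=O\{\var^{1/2}(S_{n,p})\}$; neither effect should disturb the Gumbel or normal normalizations inherited from the $\mathcal{A}^c$-block, nor create dependence between the max- and sum-type statistics. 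Note that, unlike under $H_0$, the marginal law of $M_{n,p}$ may itself degenerate when the change on $\mathcal{A}$ is strong, which is why the conclusion is stated as a vanishing difference rather than as convergence to a product of named limits.

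First I would dispose of the sum statistic. Write $S_{n,p}=S_{n,p}^{\mathcal{A}^c}+S_{n,p}^{\mathcal{A}}$ in the obvious way; from $|\mathcal{A}|=o(p)$, $\Delta_S=O\{\var^{1/2}(S_{n,p})\}$, and $\tr(R^2)\asymp p$ under Assumption~\ref{asmp:corr_spatial-ev}(iii), one gets $\var(S_{n,p}^{\mathcal{A}})=o\{\var(S_{n,p})\}$, hence $\{S_{n,p}^{\mathcal{A}}-\E(S_{n,p}^{\mathcal{A}})\}/V_{n,p}^{1/2}=o_p(1)$, while the difference-based estimators underlying $V_{n,p}$ stay consistent because a mean change supported on $\mathcal{A}$ contributes only an $O(|\mathcal{A}|\max_j\delta_j^2/n)$ bias to each of them. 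Next, using the Gaussian conditional decomposition $\epsilon_{i,\mathcal{A}^c}=B\,\epsilon_{i,\mathcal{A}}+\eta_i$ with $B=\Sigma_{\mathcal{A}^c\mathcal{A}}\Sigma_{\mathcal{A}\mathcal{A}}^{-1}$ and $\eta_i$ independent of $\epsilon_{i,\mathcal{A}}$, I would expand $S_{n,p}^{\mathcal{A}^c}$ into an $\eta$-only quadratic form $S_\eta$ plus terms carrying a factor $B\,\epsilon_{i,\mathcal{A}}$; Assumption~\ref{asmp:corr_spatial-ev}(iii) forces $\|B\|_{\mathrm{op}}=O(1)$ and $\tr(B^{\T}B)=O(|\mathcal{A}|)$, so those extra terms, once centered, have standard deviation $O(n\sqrt{|\mathcal{A}|})=o(n\sqrt p)=o\{\var^{1/2}(S_{n,p})\}$. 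Therefore $\widetilde S_{n,p}:=\{S_{n,p}-\E(S_{n,p})\}/V_{n,p}^{1/2}=\{S_\eta-\E(S_\eta)\}/V_{n,p}^{1/2}+o_p(1)$, a functional of $\{\eta_i\}_i$ alone, hence independent of the $\mathcal{A}$-block $\{\epsilon_{ij}:j\in\mathcal{A}\}$; the same operator-norm bounds give $\tr(\widetilde R^2)=\tr(R^2)\{1+o(1)\}$ for the reduced correlation matrix $\widetilde R$, so Lemma~\ref{null:Sum} continues to govern $S_\eta$, although only the coupling will be used.

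Next I would treat the max statistic by conditioning on the $\mathcal{A}$-block $\{\epsilon_{ij}:j\in\mathcal{A}\}$. Given it, $M_{n,p}^{\mathcal{A}}$ is determined, $M_{n,p}^{\mathcal{A}^c}$ becomes a maximum of the Gaussian (weighted) CUSUMs of $\{\eta_i\}_i$ shifted by the CUSUMs of $\{B\,\epsilon_{i,\mathcal{A}}\}_i$, and $\widetilde S_{n,p}$ remains, up to $o_p(1)$, the same functional of $\{\eta_i\}_i$. The crux is that the inclusion--exclusion and projection machinery behind Theorem~\ref{null:DMS}, applied to the $|\mathcal{A}^c|\sim p$ coordinates of $\mathcal{A}^c$ (which behave exactly as under $H_0$), still yields the \emph{conditional} asymptotic independence of $\{M_{n,p}^{\mathcal{A}^c}\le u_p\}$ and $\{\widetilde S_{n,p}\le y\}$, uniformly over conditioning values lying in a probability-$(1-o(1))$ event on which the CUSUMs of $\{B\,\epsilon_{i,\mathcal{A}}\}_i$ stay bounded --- because deterministic mean shifts do not affect the asymptotic orthogonality that the projection lemma exploits. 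Averaging back over the $\mathcal{A}$-block and using $\{M_{n,p}\le u_p\}=\{M_{n,p}^{\mathcal{A}}\le u_p\}\cap\{M_{n,p}^{\mathcal{A}^c}\le u_p\}$ (with $u_p=u_p\{\exp(-x)\}$, the discrepancy between $\log(2p)$ and $\log(2|\mathcal{A}^c|)$ being negligible) then gives
\[
 \pr\big(M_{n,p}\le u_p,\ \widetilde S_{n,p}\le y\big)-\pr\big(M_{n,p}\le u_p\big)\,\pr\big(\widetilde S_{n,p}\le y\big)\longrightarrow 0 ,
\]
which is part~(i). Part~(ii) is the same argument with $C_{0.5,j}$ and the boundary removal $\lambda_n\sim n^{\lambda}$, with the Darling--Erd\"{o}s normalization $v_{p,n}(x)$ in place of $u_p$, and with the $M^{\dagger}$-versions of the auxiliary estimates on $\sup_{\Lambda_p\le t\le 1-\Lambda_p}\{t(1-t)\}^{-1/2}|\mathcal{W}_{tj}-t\mathcal{W}_{1j}|$ used in Theorem~\ref{null:Max}(ii).

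The main obstacle is twofold. First, the conditional projection step must be made genuinely uniform in the conditioning value: the deterministic shift on the $\mathcal{A}^c$-CUSUMs multiplies each conditional tail probability $\pr(Z_{0,j}>u_p\mid \text{$\mathcal{A}$-block})$ by a coordinate-dependent factor, and one must show that these factors neither accumulate across the $\sim p$ coordinates nor interact with the quadratic form --- which is where the bounded operator norm of $B$ and the weak-correlation structure of Assumption~\ref{asmp:corr_spatial-ev}(ii)--(iii) re-enter, via a shifted rerun of the bookkeeping for the quantities $\alpha_t$ and $U(p,t),V(p,t)$ from the proof of Theorem~\ref{null:DMS}. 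Second, for part~(ii) one must check that substituting $|\mathcal{A}^c|=p-|\mathcal{A}|$ for $p$ inside the slowly varying normalizers $A(\cdot)$ and $D(\cdot)$, which involve $\log$ and $\log\log$, perturbs the threshold by only $o(1)$ on the Gumbel scale; this is precisely where the sharpened sparsity rate $|\mathcal{A}|=o\{p/(\log\log p)^2\}$ in $H_{1;n,p}$, rather than merely $|\mathcal{A}|=o(p)$, is needed.
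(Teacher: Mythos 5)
Your reduction of the sum statistic coincides with the paper's: expand $S_{n,p}-\E(S_{n,p})$ as the bilinear form $\sum_{i<k}b_{i,k}\xi_i^\T\xi_k+o_p\{\var^{1/2}(S_{n,p})\}$, drop the coordinates in $\mathcal{A}$ because $\tr(R_{\mathcal{A}}^2)=O(|\mathcal{A}|)=o\{\tr(R^2)\}$, and then use the Gaussian regression decomposition $\xi_{i,(2)}=U_i+R_{21}R_{11}^{-1}\xi_{i,(1)}$ (your $\eta_i+B\epsilon_{i,\mathcal{A}}$) to discard the cross terms, leaving a quadratic form in the $U_i$'s alone, exactly independent of the $\mathcal{A}$-block. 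Where you part company with the paper is the max statistic. The paper does not condition on the $\mathcal{A}$-block: it writes $M_{n,p}=\max\{\max_{j\in\mathcal{A}},\max_{j\in\mathcal{A}^c}\}$, invokes Theorem \ref{null:DMS} directly for the asymptotic independence of the reduced sum from the $\mathcal{A}^c$-maximum (those coordinates are distributed exactly as under $H_0$), pairs this with the exact independence of the $U$-quadratic form from $\{\xi_{i,(1)}\}$ and hence from the $\mathcal{A}$-maximum, and closes with Lemma \ref{lem:asy_indep}. Your conditional rerun of the inclusion--exclusion and projection machinery, with a random mean shift on every $\mathcal{A}^c$-CUSUM, is a genuinely harder road and is the one step you do not carry out: the shift added to the $j$th CUSUM, namely the CUSUM of $(B\epsilon_{i,\mathcal{A}})_j$, need not be uniformly small over $j\in\mathcal{A}^c$ --- only the trace $\sum_{j}(R_{21}R_{11}^{-1}R_{12})_{jj}=O(|\mathcal{A}|)$ is controlled, so individual coordinates can carry an $O_p(1)$ shift and the collection of strongly cross-correlated coordinates an even larger one --- so the conditional analogues of Lemmas \ref{lemma:B>=up}--\ref{lemma:Z0j_strong} and Proposition \ref{null:Mnp-Gaussian} with shifted thresholds would all have to be re-proved uniformly in the conditioning value. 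That is a real gap in your write-up, even though the overall architecture (split along $\mathcal{A}$, exploit the $H_0$ behaviour of $\mathcal{A}^c$, and the exact independence supplied by the Gaussian projection) is the paper's.

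A smaller point: the sharpened rate $|\mathcal{A}|=o\{p/(\log\log p)^2\}$ is not needed where you place it. Replacing $p$ by $|\mathcal{A}^c|$ inside $u_p\{\exp(-x)\}$ or inside $A(\cdot)$ and $D(\cdot)$ perturbs the threshold by $o\{(\log p)^{-1/2}\}$ as soon as $|\mathcal{A}|=o(p)$, which Lemma \ref{lemma:B>=up} already tolerates. The $(\log\log p)^2$ enters instead in the tail bounds for the discarded cross terms, which in the paper take the form $\log n\cdot\exp(-c_\epsilon\,p^{1/2}/|\mathcal{A}|^{1/2})$ via the law of the iterated logarithm for martingales; one needs $p^{1/2}/|\mathcal{A}|^{1/2}$ to dominate $\log\log n$ in order to beat the $\log n$ prefactor.
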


\begin{remark}
Theorem \ref{alter:DMS} extends Theorem \ref{null:DMS} under the local alternatives $H_{1;n,p}$, which generally requires that the number of variables that occur changes should be not too much, i.e., $|\mathcal{A}|=o\{p/(\log \log p)^2\}$. In conjunction with $\Delta_{S}=O\{\var^{1/2}(S_{n,p})\}$, we can restrict ourselves to $S_{n,p}$ on $\mathcal{A}^c$, say $S_{n,p}(\mathcal{A}^c) := \sum_{1\leq k< n}\sum_{j\not\in\mathcal{A}} C^2_{0.5, j}(k)$, which is provably asymptotic independent of either $M_{n,p}$ or $M^{\dagger}_{n,p}$ on $\mathcal{A}$ by following similar arguments used in the proof of Theorem \ref{null:DMS}.
\end{remark}

How does the proposed DMS test perform compared to a single use of the max-$L_\infty$- or sum-$L_2$-type test? To ease notations, let $M$ be either $M_{n,p}$ or $M^{\dagger}_{n,p}$, and $S=S_{n,p}$. Denote the associated p-values by ${\rm p}_M$ and ${\rm p}_S$, respectively.
For a pre-specified significant level $\alpha\in(0,1)$, let $\beta_{M,\alpha}$ and $\beta_{S,\alpha}$ be the corresponding power functions based on $M$ and $S$, respectively.
According to \cite{MR312634,MR375577}, the power of Fisher's p-value combination-based test, say $\beta_{\alpha}$, should be larger than that of the test based on $\min\{p_{M},p_{S}\}$ (referred to as the minimal p-value combination), say $\beta_{M\wedge S, \alpha}$. Due to Theorem \ref{alter:DMS} together with the inclusion-exclusion principle, we have
\begin{align}\label{power}
    \beta_{\alpha}\geq \beta_{M\wedge S, \alpha}\geq \beta_{M,\alpha/2}+\beta_{S,\alpha/2}-\beta_{M,\alpha/2}\beta_{S,\alpha/2}.
\end{align}
For small $\alpha$, the difference between $\beta_{M,\alpha}$ and $\beta_{M,\alpha/2}$ should be small, and the same fact applies to $\beta_{S,\alpha}$. Consequently, by \eqref{power}, the power of the adaptive DMS test could be larger than that of either max-$L_\infty$- or sum-$L_2$-type test by a large margin.

\section{Numerical studies}\label{sec:num}

In this section, we investigate the finite-sample performance of the proposed DMS method. A collection of changepoint testing approaches based on various ways of aggregating the CUSUM statistics $C_{\gamma,j}(k)$'s or their variants are considered as benchmarks. To wit, with $\gamma=0$, the max-$L_\infty$-type procedure based on $M_{n,p}$ \citep{10.1214/15-aos1347} is included and we adopt the limit null distribution according to Theorem \ref{null:Max}--(i) in this paper; this method is termed as Max(0).
We also cover the adaptive procedure proposed by \cite{10.1111/rssb.12375} that is based on $T_{q,s_0}=\max_{\lambda\leq k\leq n-\lambda}\big\{\sum_{j=1}^{s_0}|C_{0, (j)}(k)|^q\big\}^{1/q}$ over $q\in\{1,2,3,4,5,\infty\}$ with $s_0=p/2$, where $q$'s and $s_0$ are chosen according to the authors' recommendation. The distribution of the involved test statistic is approximated by the multiplier bootstrap, with a bootstrap replication number $B=500$. This method appears to perform very well over a range choices of competitors, see Section 4 in \cite{10.1111/rssb.12375}, and thus is considered here as one of many state of the art adaptive approaches. We term it as LZZL.
With $\gamma=0.5$, we first cover a max-$L_\infty$- and sum-$L_2$-type statistics, i.e., $M^{\dagger}_{n,p}$ investigated in this paper and $S_{n,p}$ proposed by \cite{10.1142/s201032631950014x}, respectively.
The limit null distributions are specified according to Theorem \ref{null:Max}--(ii) and Lemma \ref{null:Sum}, respectively.
The corresponding approaches are referred to as Max(0.5) and Sum, respectively.
A bootstrap-based method in conjunction with $M^{\dagger}_{n,p}$ proposed by \cite{10.1111/rssb.12406} is also covered and termed as YC; as recommended by the authors, we set the number of bootstrap replications as $B=200$.
We in addition consider a power-enhanced testing approach proposed by \cite{10.1142/s201032631950014x}, which advocated the usage of the test statistic
\begin{align*}
    \{S_{n,p}-(n+2)p\}/V^{1/2}_{p,n} + c_{n,p}V^{1/2}_{p,n}\ind{M^{\dagger}_{n,p}>h_{n,p}}.
\end{align*}
As per the authors' suggestion, we specify $h_{n,p}=\sqrt{\{2\log(np)\}^{1.1}}$ and $c_{n,p}=100$. This method is termed as WZWY.
For the proposed DMS method, we use DMS(0) and DMS(0.5) to distinguish the scenarios with $\gamma=0$ and $\gamma=0.5$, respectively.
Table \ref{tab:benchmark} presents an overview of the above testing approaches that will be evaluated, together with the specification of involved nuisance parameters.
In particular, the LZZL, Max(0.5), YC, WZWY and DMS(0.5) approaches demand a boundary removal parameter (i.e., $\lambda_n$ or $\lambda$), which are set to be the same value for a fair comparison, for example, $\lambda_n=\lambda=40$ for $n=200$. We will fix $n=200$ for illustration.

\begin{table}[!ht]
\caption{An overview of testing approaches considered in simulated examples.}
\centering
\begin{tabular}{ccccc}
\toprule
Method & Critical value & Nuisance parameters & Related literature\\
\midrule
\multicolumn{4}{l}{{$\gamma = 0$}}
\\
\medskip
Max(0) & Asymptotic distribution & None & Our paper; \cite{10.1214/15-aos1347}\\
\smallskip

LZZL & Bootstrap & \begin{tabular}{@{}c@{}}$q\in\{1,2,3,4,5,\infty\}$,\\ $s_0=p/2$, $\lambda=40$, $B=500$\end{tabular} & \cite{10.1111/rssb.12375}\\
\smallskip

DMS(0) & Asymptotic distribution & None & Our paper\\
\smallskip

\\
\multicolumn{4}{l}{{$\gamma = 0.5$}}\\
\medskip
Max(0.5) & Asymptotic distribution & $\lambda_n=40$ & Our paper\\
\smallskip

YC & Bootstrap & $\lambda_n=40$, $B=200$ & \cite{10.1111/rssb.12406}\\
\smallskip

Sum & Asymptotic distribution & None & \cite{10.1142/s201032631950014x}\\
\smallskip

WZWY & Asymptotic distribution & \begin{tabular}{@{}c@{}}$\lambda_n=40$, $c_{n,p}=100$,\\ $h_{n,p}=\sqrt{\{2\log(np)\}^{1.1}}$\end{tabular} & \cite{10.1142/s201032631950014x}\\
\smallskip

DMS(0.5) & Asymptotic distribution & $\lambda_n=40$ & Our paper\\
\bottomrule
\end{tabular}
\label{tab:benchmark}
\end{table}

To generate the data, we set $\mu_0=0$ and two scenarios for the noises $\epsilon_{ij}=\Sigma^{1/2}\varepsilon_{ij}$ with $\Sigma=(\sigma_{jj'})_{p\times p}$ are considered, i.e,
\begin{itemize}
    \item[(I)] $\varepsilon_{ij}\sim N(0,1)$ independently and $\sigma_{jj'}=0.5^{|j-j'|}$ for $1\leq j,j'\leq p$;
    \item[(II)] $\varepsilon_{ij}$ are i.i.d. from $t$-distribution with degree of freedom 5 and are standardized such that $\var(\varepsilon_{ij})=1$, and $\Sigma$ admits a blocked diagonal structure, i.e., $\sigma_{jj}=1$, $\sigma_{jj'}=0.5$ for $5(k-1)<j\neq j'\leq 5k$ ($k=1,\ldots,\lfloor d/5\rfloor$) and $\sigma_{jj'}=0$ otherwise.
\end{itemize}
To account for different levels of the sparsity under alternatives, we set $\delta_{j}=\sqrt{\Delta/k}$ for $j=1,\ldots,k$ and $\delta_{j}=0$ otherwise. The magnitudes of $\delta_j$'s are chosen such that $\|\delta\|^2=\Delta$; as a result, the power curve of the Sum method would be roughly flat against $k$ and thus it could be set as a benchmark to facilitate the comparison.
To reflect the impact of the location of changepoint under alternatives, we vary $\tau$ over $\tau/n\in\{0.5,0.25\}$.
In all examples, the empirical sizes or power are calculated based on 1,000 replications and the nominal significance level is set to be 5\%.

\begin{table}[!ht]
    \caption{Observed sizes (in \%) of various tests carried out at the 5\% nominal level under Scenarios (I)--(II).}
    \centering
    \begin{tabular}{lcccccccc}
        \toprule
        $p$ & Max(0) & LZZL & DMS(0) & Max(0.5) & YC & Sum & WZWY & DMS(0.5)\\
        \midrule
        \multicolumn{9}{l}{Scenario (I)}\\
        \hspace{1em}100 & 4.3 & 8.0 & 6.5 & 3.0 & 3.7 & 6.4 & 6.5 & 6.2\\
        \hspace{1em}200 & 3.5 & 6.3 & 5.7 & 3.3 & 3.7 & 5.6 & 5.6 & 5.7\\
        \hspace{1em}300 & 4.0 & 5.4 & 5.7 & 2.9 & 4.1 & 5.2 & 5.2 & 4.8\\
        \\
        \multicolumn{9}{l}{Scenario (II)}\\
        \hspace{1em}100 & 5.0 & 6.0 & 5.8 & 3.7 & 2.0 & 4.8 & 5.2 & 5.8\\
        \hspace{1em}200 & 4.7 & 5.6 & 6.5 & 5.5 & 1.5 & 4.4 & 4.8 & 6.3\\
        \hspace{1em}300 & 5.5 & 5.4 & 6.4 & 5.2 & 0.6 & 4.1 & 4.7 & 5.8\\
        \bottomrule
    \end{tabular}
    \label{tab:sizes}
\end{table}

To evaluate the size performance, we range $p$ over $\{100,200,300\}$. Table \ref{tab:sizes} depicts the empirical sizes of various tests considered in Table \ref{tab:benchmark} under Scenarios (I)--(II). It can be seen that the proposed Max(0), Max(0.5), DMS(0) and DMS(0.5) methods can roughly maintain the nominal significance level, which demonstrates the validity of Theorems \ref{null:Max} and \ref{null:DMS}. The Max(0.5) seems slightly conservative under Scenario (I); however, this is improved for DMS(0.5) that leverages both maximum- and summation-type aggregations.
The YC method also mitigates this issue; however, it becomes much conservative under Scenario (II).
The Sum and WZWY perform very well due to fine Gaussian approximations.
Besides, the LZZL method behaves well as expected and may encounter a slight size inflation for small $p$ under Scenario (I).
In conclusion, all methods have a satisfactory performance under the null hypothesis.

\begin{figure}[!ht]
    \centering
    \includegraphics[width=\linewidth]{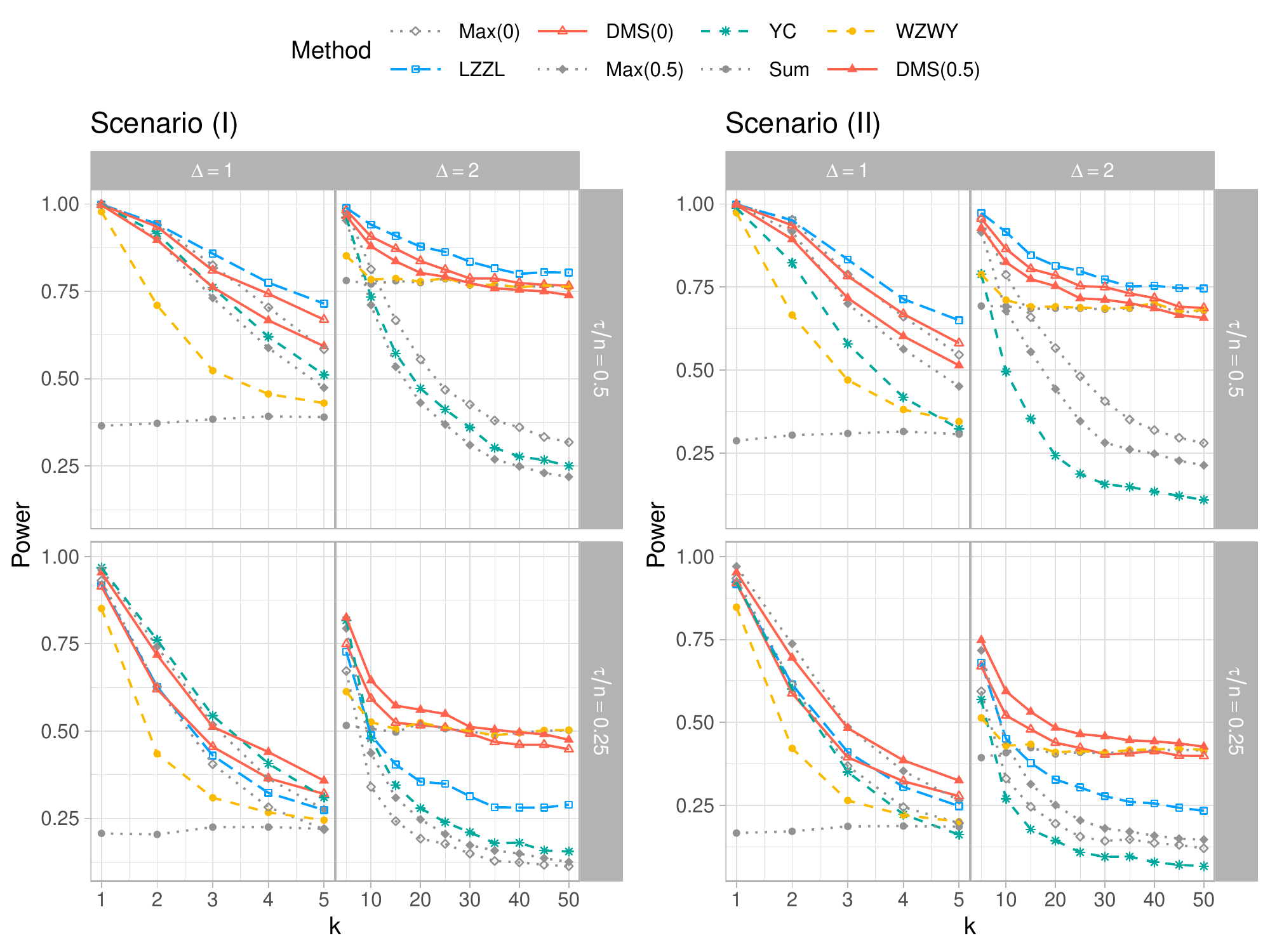}
    \caption{Observed power (in \%) of various tests carried out at the 5\% nominal level under Scenarios (I)--(II).}
    \label{fig:pow}
\end{figure}

Figure \ref{fig:pow} presents the empirical power of the tests we evaluated under Scenarios (I)--(II) with $p=200$. For a better visualization, we set the signal strength $\Delta=1$ if the sparsity level $k\in\{1,2,\ldots,5\}$ and set $\Delta=2$ if $k\in\{5,10,\ldots,50\}$, which can be roughly regarded as a sparse regime and a moderately sparse to dense regime. As a result, each power curve in Figure \ref{fig:pow} encounters a break at $k=5$. The adjustment for the signal strength is reasonable as the minimax testing rate may exhibits a phase transition against the sparsity level \citep{10.1214/18-aos1740,10.1214/20-aos1994}. Leaving it aside, we set the Sum method as a benchmark, since its power curve is flat in both regimes.
As expected, under every scenario, we first observe that the Max(0), Max(0.5) and YC methods outperform the Sum method for sparse change signals, while they fall behind the Sum in moderately sparse or dense regime.
In contrast to the Sum, the WZWY indeed enhances the power in the sparse regime and coincides with the Sum when raising $k$.
The adaptive procedure LZZL performs the best over all sparsity levels in both scenarios when $\tau/n=0.5$, i.e., the changepoint is in the middle; while the proposed adaptive DMS methods come second.
However, if the changepoint appears earlier, i.e., $\tau/n=0.25$, then the proposed DMS methods surpass the LZZL, especially in the dense regime.
In addition, the approaches with $\gamma=0$ outmatch those with $\gamma=1$ when $\tau/n=0.5$, while those with $\gamma=1$ come first when $\tau/n=0.25$.

In conclusion, the proposed DMS methods are adaptive to different levels of sparsity, and are comparable or sometimes outperform the LZZL method. Besides, the DMS is computationally efficient compared to the LZZL, which requires multiple replications of multiplier bootstrap.

\section{Concluding remarks}\label{sec:final}

In this paper, we propose a simple, fast yet effective procedure for adaptive changepoint testing. The key message here is that the max-$L_\infty$-type and sum-$L_2$-type test statistics are asymptotically independent under some conditions. We can then perform Fisher's p-value combination for the two separate p-values to form a new test, which is adaptive to different sparsity of change signals as demonstrated both theoretically and by numerical studies.

Although our discussion focuses on the mean change aspect, we believe it can be generalized to other scenarios including but not restricted to detecting changes in covariance matrix or regression coefficients. For example, we can exploit the framework of U-statistic \citep{10.1111/rssb.12375} or the score-based framework \citep{MR4065168}, and transform the problem into a general mean domain.

We have derived the asymptotic distribution for two versions of max-$L_\infty$-type statistic for both temporally and componentwisely dependent high-dimensional time series. The asymptotic independence of max-$L_\infty$-type and sum-$L_2$-type test statistics are investigated under the assumption that the data are independently collected over time. This temporal independence assumption is frequently considered in the literature of changepoint testing, see, for example, \cite{10.1111/rssb.12375}, \cite{10.1111/rssb.12406} and \cite{doi:10.1080/01621459.2021.1884562}, under which the proof of asymptotic independence is rather highly non-trivial. Extension to a more complicated scenario that allows temporal dependence certainly deserves further researches. A related issue is does the asymptotic normality still holds for the sum-$L_2$-type statistic for temporally dependent high-dimensional time series (e.g., under Assumptions \ref{asmp:corr_temporal}--\ref{asmp:corr_spatial}). We leave them as future work.

\section*{Supplementary material}

The supplementary material collects all proofs of Theorems \ref{null:Max}, \ref{null:DMS} and \ref{alter:DMS}, together with some necessary lemmas.
An R package, DMS, that implements the proposed DMS method can be found in the attached file DMS\_0.0.0.9000.tar.gz at this moment. A demonstration example is also included therein.

%

\newpage
\setcounter{section}{0}
\setcounter{equation}{0}
\setcounter{lemma}{0}
\setcounter{proposition}{0}
\renewcommand{\theequation}{S.\arabic{equation}}
\renewcommand{\thelemma}{S.\arabic{lemma}}
\renewcommand{\theproposition}{S.\arabic{proposition}}
\def\thetable{S\arabic{table}}
\def\thefigure{S\arabic{figure}}

\begin{center}
{\Large\textbf{Supplementary material for ``Computationally efficient and data-adaptive change point inference in high dimensions''}}
\end{center}

The supplementary material collects all proofs of Theorems \ref{null:Max}, \ref{null:DMS} and \ref{alter:DMS}, together with some necessary lemmas.
An R package, DMS, that implements the proposed DMS method can be found in the attached file DMS\_0.0.0.9000.tar.gz at this moment. A demonstration example is also included therein.

\textbf{Notations}: Throughout this paper, we use $\lesssim$, $\gtrsim$, ($\sim$) to denote (two-sided) inequalities invovling a multiplicative constant. We write $X\stackrel{d}{=} Y$ if two random variables $X$ and $Y$ have the same distribution.
For $a\in\mathbb{R}$, we denote by $\lfloor a\rfloor$ the lower integer part of $a$. For a set $\mathcal{A}$, we denote by $|\mathcal{A}|$ its cardinality, and by $\mathcal{A}^c$ its complementary. For a matrix $A$, let $\tr(A)$ be its trace. For a vector $a$, $\diag(a)$ represents a diagonal matrix whose diagonal elements take values in $a$.

Recall that, with $\gamma=0$ or 0.5, the CUSUM statistics are defined as
\begin{align*}
    C_{\gamma, j}(k) = \left\{\frac{k}{n}\left(1-\frac{k}{n}\right)\right\}^{-\gamma}\frac{1}{\sqrt{n}}\left(S_{kj}-\frac{k}{n}S_{nj}\right)/\widehat{\sigma}_{j},
\end{align*}
for $k=1,\ldots,n-1$ and $j=1,\ldots,p$, where $S_{kj}=\sum_{i=1}^{k}X_{ij}$ and $\widehat{\sigma}_{j}$'s are some estimators of $\sigma_j$'s.

\section{Proof of Theorem \ref{null:Max}}

We first study asymptotic null distributions of the max-$L_\infty$-type statistics $M_{n,p}$ and $M^{\dagger}_{n,p}$, to wit,
\begin{align*}
    M_{n,p}=\max_{k=1,\ldots,n-1}\max_{j=1,\ldots,p}|C_{0, j}(k)|\ \text{and}\
    M^{\dagger}_{n,p}=\max_{\lambda_n\leq k\leq n-\lambda_n}\max_{j=1,\ldots,p}|C_{0.5, j}(k)|,
\end{align*}
respectively, for independent observations that follow a Gaussian distribution; without loss of generality, we assume $X_i\sim N(0,\Sigma)$ for $i=1,\ldots,n$, where $\Sigma=(\sigma_{jj'})_{p\times p}$. We in addition replace each $\widehat{\sigma}_j$ by $\sigma_j$ in $C_{\gamma, j}(k)$ and assume that $\sigma_j=1$, for $j=1,\ldots,p$. In such settings,
\begin{align*}
   M_{n,p} &\stackrel{d}{=}\max_{j=1,\ldots,p}\sup_{1\leq k\leq n-1}|\mathcal{W}_{k/n,j}-(k/n)\mathcal{W}_{1j}|\ \text{and}\\
   M^{\dagger}_{n,p} &\stackrel{d}{=}\max_{j=1,\ldots,p}\max_{\lambda_n\leq k\leq n-\lambda_n}\{(k/n)(1-k/n)\}^{-1/2}|\mathcal{W}_{k/n,j}-(k/n)\mathcal{W}_{1j}|,
\end{align*}
where $\{\mathcal{W}_{tj}\}_{1\leq j\leq p}$ denotes a sequence of dependent Brownian motions.
Later in Section \ref{M-non-Gaussian} we will show that if $\widehat{\sigma}_j$'s are consistent estimators of $\sigma_j$'s, each corresponding asymptotic null distribution does not change.
Additionally, in Section \ref{M-non-Gaussian}, we will apply the Gaussian approximation in conjunction with the truncation arguments to study asymptotic null distributions of $M_{n,p}$ and $M^{\dagger}_{n,p}$ for non-Gaussian data sequences.

\subsection{For Gaussian data sequences}

\subsubsection{Asymptotic null distribution of $M_{n,p}$}

For any fixed $x$, let $w=\exp(-x)$ and $u_p:=u_p(w)=\sqrt{\{x+\log(2p)\}/2}$.
Let $\mathcal{B}=\sup_{0\leq t\leq 1}|\mathcal{W}_t-t\mathcal{W}_1|$ and $\mathcal{B}_n=\sup_{0\leq k\leq n}|\mathcal{W}_{k/n}-(k/n)\mathcal{W}_1|$, where $\mathcal{W}_t$ is a standard Brownian motion.
Let $Z_{0,j}=\max_{1\leq k<n}|C_{0,j}(k)|$ and thus $Z_{0,j}\stackrel{d}{=}\mathcal{B}_n$, for each $j=1,\ldots,p$.

\begin{lemma}\label{lemma:B>=up}
It holds that
\begin{align}\label{B>=up}
    \pr\left(\mathcal{B}\geq u_p\right)=\frac{w}{p}+o(p^{-1}).
\end{align}
In addition, \eqref{B>=up} also holds for any $u'_p=u_p+o\{(\log p)^{-1/2}\}$.
\end{lemma}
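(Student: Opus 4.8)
The plan is to reduce the statement to the classical Kolmogorov distribution of the supremum of the absolute Brownian bridge and then perform an elementary tail estimate. Since $\mathcal{W}_t - t\mathcal{W}_1$, $0\le t\le 1$, is exactly a standard Brownian bridge, we have $\mathcal{B}\stackrel{d}{=}\sup_{0\le t\le 1}|B(t)|$ for a standard Brownian bridge $B$, so by Kolmogorov's classical formula (see, e.g., \citet{MR838963} or \citet{MR1121940}),
\[
\pr\!\left(\sup_{0\le t\le 1}|B(t)|\le a\right)=1-2\sum_{k=1}^{\infty}(-1)^{k-1}e^{-2k^2a^2},\qquad a>0 .
\]
Taking complements, the first step is simply to record that $\pr(\mathcal{B}\ge a)=2\sum_{k\ge 1}(-1)^{k-1}e^{-2k^2a^2}$ for every $a>0$, and this is where the proof effectively ends once the right-hand side is estimated.

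Next I would control the alternating series for large arguments. For any $a\ge 1$ the terms $e^{-2k^2a^2}$ are strictly decreasing in $k$, so the series is an alternating series with monotone terms and its value is sandwiched between consecutive partial sums, giving $2e^{-2a^2}-2e^{-8a^2}\le \pr(\mathcal{B}\ge a)\le 2e^{-2a^2}$, i.e. $\pr(\mathcal{B}\ge a)=2e^{-2a^2}+O(e^{-8a^2})$ uniformly in $a\ge 1$. Now substitute $a=u_p$: by definition $2u_p^2=x+\log(2p)$, whence $2e^{-2u_p^2}=2\cdot e^{-x}/(2p)=w/p$, while the remainder is $O(e^{-8u_p^2})=O(p^{-4})=o(p^{-1})$. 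This yields $\pr(\mathcal{B}\ge u_p)=w/p+o(p^{-1})$, which is the first assertion.

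For the perturbed threshold, write $u_p'=u_p+r_p$ with $r_p=o\{(\log p)^{-1/2}\}$. Because $u_p=\{(x+\log(2p))/2\}^{1/2}\sim\{(\log p)/2\}^{1/2}$, the cross term satisfies $4u_pr_p=o(1)$ and $2r_p^2=o(1)$, so $2(u_p')^2=2u_p^2+o(1)=x+\log(2p)+o(1)$; hence $2e^{-2(u_p')^2}=(w/p)e^{o(1)}=(1+o(1))w/p$ and the remainder $O(e^{-8(u_p')^2})$ is again $o(p^{-1})$, giving $\pr(\mathcal{B}\ge u_p')=w/p+o(p^{-1})$. There is no genuine obstacle here: the only care required is the uniform monotone-alternating bound on the series (valid since $e^{-2k^2a^2}$ decreases in $k$ for $a\ge 1$) together with the observation that an $o\{(\log p)^{-1/2}\}$ shift perturbs the exponent only at order $o(1)$, which is precisely the point at which the rate $u_p\sim\{(\log p)/2\}^{1/2}$ is used. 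Should one wish to avoid invoking the closed-form Kolmogorov CDF, the only substantive work would be to re-derive the sharp two-sided tail bound $\pr(\sup_{0\le t\le 1}|B(t)|>a)=2e^{-2a^2}\{1+O(e^{-6a^2})\}$ directly from the reflection principle for the bridge.
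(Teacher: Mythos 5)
Your proof is correct and follows essentially the same route as the paper's: both invoke the Kolmogorov series $\pr(\mathcal{B}\geq a)=2\sum_{k\geq 1}(-1)^{k+1}e^{-2k^2a^2}$ (the paper's Lemma \ref{lem:sup_Bt}), identify $2e^{-2u_p^2}=w/p$, show the $k\geq 2$ tail is negligible, and handle the shift $u_p'=u_p+o\{(\log p)^{-1/2}\}$ by noting the exponent changes only by $4u_pr_p+2r_p^2=o(1)$. Your write-up is in fact slightly more explicit than the paper's about the alternating-series sandwich bound, but there is no substantive difference in approach.
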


\begin{proof}
Let $u'_p=u_p+\xi_p$ for some $\xi_p$ to be specified soon. By Lemma \ref{lem:sup_Bt}, $\pr(\mathcal{B}\geq u'_p)=2\sum_{k=1}^{\infty}(-1)^{k+1}\exp(-2k^2{u'}_p^2)$. It can be verified that $$2\exp(-2{u'}_p^2) = \frac{w}{p}\exp(-4\xi_p u_p)\exp(-2\xi_p^2) = \frac{w}{p}\{1+o(1)\},$$ if $\xi_p=o\{(\log p)^{-1/2}\}$. Further, we can show that $$2\sum_{k=2}^{\infty}(-1)^{k+1}\exp(-2k^2{u'}_p^2)=2\exp(-2{u'}_p^2)\{1+o(1)\},$$ from which the conclusion follows.
\end{proof}

\begin{lemma}\label{lemma:B-Bn}
If $\log p=o(n^{\frac{1}{2+\upsilon}})$ for some $\upsilon>0$, then
\begin{align*}
    0\leq \pr(\mathcal{B}\geq u_p+\xi_p) - \pr(\mathcal{B}_n\geq u_p+\xi_p) = o(p^{-1}),
\end{align*}
for any $\xi_p=o\{(\log p)^{-1/2}\}$.
\end{lemma}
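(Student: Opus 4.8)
The plan is to prove the two inequalities separately. The left inequality is immediate: the grid $\{k/n:0\le k\le n\}$ is contained in $[0,1]$, so $\mathcal{B}_n\le\mathcal{B}$ almost surely, whence $\pr(\mathcal{B}_n\ge c)\le\pr(\mathcal{B}\ge c)$ for every $c$, in particular $c=u_p+\xi_p$. The substance is the upper bound $\pr(\mathcal{B}\ge a)-\pr(\mathcal{B}_n\ge a)=o(p^{-1})$, where I abbreviate $a:=u_p+\xi_p$, write $B_t:=\mathcal{W}_t-t\mathcal{W}_1$ for the underlying bridge, and let $\omega(\delta):=\sup\{|B_s-B_t|:s,t\in[0,1],\ |s-t|\le\delta\}$ denote its modulus of continuity.

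The idea is to control the discretization gap by the modulus at scale $1/n$. For any $t\in[0,1]$ the nearest grid point $k/n$ satisfies $|B_t-B_{k/n}|\le\omega(1/n)$, hence $|B_t|\le\mathcal{B}_n+\omega(1/n)$, and taking the supremum over $t$ gives $\mathcal{B}\le\mathcal{B}_n+\omega(1/n)$. Therefore, on $\{\mathcal{B}\ge a,\ \mathcal{B}_n<a\}$ one has $a\le\mathcal{B}<a+\omega(1/n)$, so that for any deterministic sequence $\epsilon_n>0$,
\[
\pr(\mathcal{B}\ge a)-\pr(\mathcal{B}_n\ge a)=\pr(\mathcal{B}\ge a,\ \mathcal{B}_n<a)\le\pr\big(a\le\mathcal{B}<a+\epsilon_n\big)+\pr\big(\omega(1/n)>\epsilon_n\big).
\]
It remains to choose $\epsilon_n$ so that both terms are $o(p^{-1})$. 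For the first term I would use Lemma \ref{lemma:B>=up}: if $\epsilon_n=o\{(\log p)^{-1/2}\}$, then both $a$ and $a+\epsilon_n$ are of the form $u_p+o\{(\log p)^{-1/2}\}$, so \eqref{B>=up} gives $\pr(\mathcal{B}\ge a)=w/p+o(p^{-1})$ and $\pr(\mathcal{B}\ge a+\epsilon_n)=w/p+o(p^{-1})$ with the \emph{same} $w$, and subtracting yields $\pr(a\le\mathcal{B}<a+\epsilon_n)=o(p^{-1})$. For the second term I would invoke a standard tail bound for the modulus of continuity of Brownian motion: since $B_t=\mathcal{W}_t-t\mathcal{W}_1$ we have $\omega(1/n)\le\omega_{\mathcal{W}}(1/n)+n^{-1}|\mathcal{W}_1|$, where the last summand is negligible (a Gaussian tail at level $n\epsilon_n/2\to\infty$), while covering $[0,1]$ by $\lesssim n$ blocks of length $1/n$ and applying the reflection principle on each block gives $\pr\{\omega_{\mathcal{W}}(1/n)>\epsilon_n/2\}\lesssim n\exp(-cn\epsilon_n^2)$ for a universal $c>0$; this is $o(p^{-1})$ as soon as $n\epsilon_n^2\gg\log(np)$, i.e. $\epsilon_n\gg\{\log(np)/n\}^{1/2}$.

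The two requirements on $\epsilon_n$, namely $\{\log(np)/n\}^{1/2}\ll\epsilon_n\ll(\log p)^{-1/2}$, are simultaneously satisfiable exactly when $\log(np)\log p=o(n)$, and the choice $\epsilon_n=\{\log(np)/(n\log p)\}^{1/4}$ works. The hypothesis $\log p=o(n^{1/(2+\upsilon)})$ delivers $\log(np)\log p=o(n)$: when $\log p\ge\log n$ one has $\log(np)\log p\lesssim(\log p)^2=o(n^{2/(2+\upsilon)})=o(n)$ since $2/(2+\upsilon)<1$, and otherwise $\log(np)\log p\lesssim(\log n)^2=o(n)$ trivially. With this choice both terms above are $o(p^{-1})$, which together with $\mathcal{B}_n\le\mathcal{B}$ gives the claim. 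The proof is essentially a scale-separation argument, so there is no single hard step; the points to get right are the modulus-of-continuity tail estimate (routine via blocking and reflection) and the bookkeeping that ties $\epsilon_n$ to the growth condition so that the admissible window for $\epsilon_n$ is nonempty. I would emphasize that the decomposition above, which reduces everything to the \emph{continuous} bridge through $\pr(a\le\mathcal{B}<a+\epsilon_n)$, is what makes the argument work: a naive union bound over the $n$ inter-grid intervals for $\pr(\mathcal{B}\ge a,\mathcal{B}_n<a)$ turns out to be too lossy to reach the $o(p^{-1})$ rate.
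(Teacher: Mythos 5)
Your proof is correct and follows essentially the same route as the paper's: both control the gap $\mathcal{B}-\mathcal{B}_n$ by the oscillation of the Brownian motion over the $n$ grid intervals plus the linear drift term (union bound and Gaussian/reflection tails giving $n\exp(-cn\epsilon_n^2)=o(p^{-1})$), and then exploit the insensitivity of $\pr(\mathcal{B}\geq\cdot)$ in Lemma \ref{lemma:B>=up} to threshold shifts of order $o\{(\log p)^{-1/2}\}$. The only cosmetic difference is your choice of the intermediate scale $\epsilon_n=\{\log(np)/(n\log p)\}^{1/4}$ versus the paper's $y=(\log p)^{-(1+\upsilon)/2}$; both satisfy the same two requirements under the hypothesis $\log p=o(n^{1/(2+\upsilon)})$.
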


\begin{proof}
Let $y=(\log p)^{-(1+\upsilon)/2}$. By Lemma \ref{lem:sup_Wt}, if $\log p=o(n^{\frac{1}{2+\upsilon}})$, then
\begin{align*}
    \pr\left(\max_{0\leq k\leq n-1}\sup_{\frac{k}{n}\leq t\leq\frac{k+1}{n}}|\mathcal{W}_t-\mathcal{W}_{\frac{k}{n}}|\geq y\right)
    &\leq \sum_{k=0}^{n-1}\pr\left(\sup_{0\leq t\leq\frac{1}{n}}|\mathcal{W}_{t}|\geq y\right)\\
    &= n\pr\left(\sup_{0\leq t\leq 1}|\mathcal{W}_{t}|\geq \sqrt{n}y\right)\\
    &\lesssim n\exp(-ny^2/2) = o(p^{-1}).
\end{align*}
Consequently,
\begin{align*}
    \pr(\mathcal{B}-\mathcal{B}_n \geq y)
    &= \pr\left(\max_{0\leq k\leq n-1}\left(\sup_{\frac{k}{n}\leq t\leq\frac{k+1}{n}}|\mathcal{W}_t-t\mathcal{W}_1|-|\mathcal{W}_{\frac{k}{n}}-\frac{k}{n}\mathcal{W}_1|\right)\geq y\right)\\
    &\leq \pr\left(\max_{0\leq k\leq n-1}\sup_{\frac{k}{n}\leq t\leq\frac{k+1}{n}}|\mathcal{W}_t-\mathcal{W}_{\frac{k}{n}}|\geq \frac{y}{2}\right) + \pr\left(\max_{0\leq k\leq n-1}\sup_{\frac{k}{n}\leq t\leq\frac{k+1}{n}}|(t-\frac{k}{n})\mathcal{W}_1|\geq \frac{y}{2}\right)\\
    &\leq (n+1)\pr\left(\sup_{0\leq t\leq 1}|\mathcal{W}_{t}|\geq \sqrt{n}y/2\right) = o(p^{-1}).
\end{align*}

Let $u'_p=u_p+\xi_p$. Then
\begin{align*}
    \pr(\mathcal{B}\geq u'_p+y)
    &= \pr(\mathcal{B}\geq u'_p+y, \mathcal{B}_n\geq u'_p) + \pr(\mathcal{B}\geq u'_p+y, \mathcal{B}_n\leq u'_p)\\
    &\leq \pr(\mathcal{B}_n\geq u'_p) + \pr(\mathcal{B}-\mathcal{B}_n \geq y)\\
    &\leq \pr(\mathcal{B}_n\geq u'_p) + o(p^{-1}).
\end{align*}
Noticing that $\mathcal{B}_n\leq\mathcal{B}$, we have $\pr(\mathcal{B}_n\geq u'_p)\leq\pr(\mathcal{B}\geq u'_p)$. Combining these facts together and by Lemma \ref{lemma:B>=up}, the conclusion follows.
\end{proof}

\begin{lemma}\label{lemma:Z0j_weak}
Suppose $X_i\sim N(0,\Sigma)$ for $i=1,\ldots,n$, with $\Sigma=(\rho_{j_1j_2})_{p\times p}$. If $|\rho_{j_1j_2}|\leq\delta_p$ for all $1\leq j_1<j_2\leq p$, where $\delta_p$ is a sequence of positive constants such that $\delta_p=o\{(\log p)^{-1}\}$, then for any fixed $m\geq 1$,
\begin{align*}
    \Big(\frac{p}{w}\Big)^m\cdot \pr(Z_{0,j_1}>u_p,\ldots,Z_{0,j_m}>u_p)\to 1,
\end{align*}
uniformly for all $1\leq j_1<\cdots<j_m\leq p$, as $(n,p)\to\infty$.
\end{lemma}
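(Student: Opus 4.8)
The plan is to exploit the near-orthogonality of the $m$ selected coordinates via an exact Gaussian decomposition, which reduces the joint exceedance probability to a product of marginal ones up to a multiplicative $1+o(1)$ factor. Write $R_m=(\rho_{j_lj_{l'}})_{1\le l,l'\le m}$; since $|\rho_{j_lj_{l'}}|\le\delta_p$, Gershgorin's theorem gives $\lambda_{\min}(R_m)\ge a^2:=1-(m-1)\delta_p$, so $R'_m:=R_m-a^2I_m\succeq0$ has diagonal entries $\sigma_0^2:=(m-1)\delta_p$ and off-diagonal entries bounded by $\delta_p$. First I would write, for each $i$, $(X_{i,j_1},\dots,X_{i,j_m})=a(\eta_{i,1},\dots,\eta_{i,m})+(\zeta_{i,1},\dots,\zeta_{i,m})$ with $\eta_{i,l}$ i.i.d.\ $N(0,1)$ and $\zeta_i\sim N(0,R'_m)$ independent of the $\eta$'s. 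Then $C_{0,j_l}(k)=aC^\eta_l(k)+C^\zeta_l(k)$, where the bridge processes $C^\eta_l(\cdot)$, $l=1,\dots,m$, are mutually independent and each equal in law to a standard discrete bridge, while $C^\zeta_l(\cdot)$ is marginally $\sigma_0$ times a standard discrete bridge. With $B^\eta_l:=\max_{1\le k<n}|C^\eta_l(k)|\stackrel{d}{=}\mathcal B_n$ and $\epsilon_l:=\max_{1\le k<n}|C^\zeta_l(k)|$, the triangle inequality yields $aB^\eta_l-\epsilon_l\le Z_{0,j_l}\le aB^\eta_l+\epsilon_l$.

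Fix $\varepsilon_p$ with $\delta_p=o(\varepsilon_p^2)$ and $\varepsilon_p=o\{(\log p)^{-1/2}\}$ (such a sequence exists precisely because $\delta_p=o\{(\log p)^{-1}\}$; e.g.\ $\varepsilon_p=\delta_p^{1/4}(\log p)^{-1/4}$), and let $\mathcal G:=\{\max_l\epsilon_l\le\varepsilon_p\}$. Since $\epsilon_l\stackrel{d}{=}\sigma_0\mathcal B_n\le\sigma_0\mathcal B$, the exponential tail of $\mathcal B$ (Lemma \ref{lem:sup_Bt}) gives $\pr(\epsilon_l>\varepsilon_p)\le2\exp(-2\varepsilon_p^2/\sigma_0^2)\to0$, hence $\pr(\mathcal G)\to1$. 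For the lower bound, on $\mathcal G$ one has $\bigcap_l\{Z_{0,j_l}>u_p\}\supseteq\bigcap_l\{B^\eta_l>(u_p+\varepsilon_p)/a\}\cap\mathcal G$; using $(u_p+\varepsilon_p)/a=u_p+o\{(\log p)^{-1/2}\}$ (here $u_p\delta_p=o\{(\log p)^{-1/2}\}$ by the assumption on $\delta_p$) together with Lemmas \ref{lemma:B>=up}--\ref{lemma:B-Bn}, and the mutual independence of $\{B^\eta_l\}_l$ and their independence from $\{\epsilon_l\}$, this probability factorizes to $\pr(\mathcal B_n\ge(u_p+\varepsilon_p)/a)^m\,\pr(\mathcal G)=(w/p)^m(1+o(1))$. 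For the upper bound I would condition on $\zeta$: $\pr(\bigcap_l\{Z_{0,j_l}>u_p\})\le\E_\zeta\prod_l\bar G_n\big((u_p-\epsilon_l)/a\big)$, where $\bar G_n(t):=\pr(\mathcal B_n>t)\le 2e^{-2t^2}$ for $t>0$ (Lemma \ref{lem:sup_Bt}). Off the superexponentially unlikely event $\{\max_l\epsilon_l\ge u_p/2\}$, whose probability is $\le2m\exp(-u_p^2/(2\sigma_0^2))=o(p^{-m})$ since $u_p^2/\delta_p$ grows faster than $(\log p)^2$, this is at most $2^me^{-2mu_p^2/a^2}\,\E\big[\exp\{(4u_p/a^2)\sum_l\epsilon_l\}\big]$.

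The crux, and the step I expect to be the main obstacle, is bounding that last expectation. A uniform truncation of $\sum_l\epsilon_l$ at level $o\{(\log p)^{-1/2}\}$, or a direct application of Berman's normal comparison inequality to the $O(np)$ CUSUM variables, loses a factor of $\log p$ (respectively a factor of $n^2$ from summing over pairs of time points), and would go through only under the stronger condition $\delta_p=o\{(\log p)^{-2}\}$. Instead I would retain the full moment generating function: by H\"older's inequality and $\epsilon_l\stackrel{d}{=}\sigma_0\mathcal B_n\le\sigma_0\mathcal B$, $\E[\exp\{(4u_p/a^2)\sum_l\epsilon_l\}]\le\E[\exp\{5mu_p\sigma_0\mathcal B\}]=1+O(u_p\sigma_0)=1+O\{(\delta_p\log p)^{1/2}\}=1+o(1)$, the last step being exactly where $\delta_p=o\{(\log p)^{-1}\}$ enters. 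Combined with $2^me^{-2mu_p^2/a^2}=(w/p)^m(1+o(1))$ (from $2u_p^2=x+\log(2p)$ and $u_p^2\delta_p=o(1)$), the upper bound matches the lower bound, so $\pr(Z_{0,j_1}>u_p,\dots,Z_{0,j_m}>u_p)=(w/p)^m(1+o(1))$. Every error term above depends on $(j_1,\dots,j_m)$ only through $m$, $\delta_p$ and $u_p$, which gives the asserted uniformity; throughout, Lemma \ref{lemma:B-Bn} is invoked under $\log p=o(n^{1/(2+\upsilon)})$, implied by the ambient condition $p\lesssim n^\nu$.
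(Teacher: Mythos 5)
Your proof is correct, but it takes a genuinely different route from the paper's. The paper argues by induction on $m$: it regresses coordinates $j_2,\ldots,j_m$ on $j_1$ via $\widetilde{X}_{ij_l}=(X_{ij_l}-\rho_{j_1j_l}X_{ij_1})/\sqrt{1-\rho_{j_1j_l}^2}$, uses the exact independence of the residual CUSUM maxima from $Z_{0,j_1}$, splits on whether $Z_{0,j_1}\leq Cu_p'$, and applies the induction hypothesis at the perturbed threshold $u_p'-C\delta_pu_p'=u_p+o\{(\log p)^{-1/2}\}$ (which is why Lemma \ref{lemma:B>=up} is stated for all such shifted thresholds). You instead extract a common independent factor from all $m$ coordinates simultaneously, writing $R_m=a^2I_m+R_m'$ with $a^2=1-(m-1)\delta_p$ and $R_m'\succeq0$ by Gershgorin, so that the joint event is sandwiched between products of $m$ independent bridge exceedances perturbed by processes of variance $\sigma_0^2=(m-1)\delta_p$; the lower bound follows from a high-probability truncation at $\varepsilon_p=\delta_p^{1/4}(\log p)^{-1/4}$, and the upper bound from the moment-generating-function/H\"older estimate $\E[\exp\{O(mu_p)\sum_l\epsilon_l\}]=1+O\{(\delta_p\log p)^{1/2}\}$, which is exactly sharp under $\delta_p=o\{(\log p)^{-1}\}$ and avoids the logarithmic loss a crude truncation of the upper tail would incur. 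I checked the key steps — the positive semidefiniteness of $R_m'$, the conditional factorization given $\zeta$, the negligibility of $\{\max_l\epsilon_l\geq u_p/2\}$ at rate $\exp\{-c(\log p)^2\}=o(p^{-m})$, and the identity $2^m e^{-2mu_p^2/a^2}=(w/p)^m\{1+o(1)\}$ — and they all hold; uniformity follows since every error term depends on the indices only through $m$. The trade-off: your non-inductive argument is arguably cleaner for this homogeneous weak-correlation setting, whereas the paper's one-coordinate-at-a-time conditioning adapts more directly to the asymmetric situation of Lemma \ref{lemma:Z0j_strong}, where a single pair may have correlation as large as $\varrho<1$; note also that the paper actually proves the slightly stronger statement at any threshold $u_p+o\{(\log p)^{-1/2}\}$, which it needs downstream, though your argument extends to that case verbatim.
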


\begin{proof}
We will prove that, for any $u'_p=u_p+o\{(\log p)^{-1/2}\}$,
\begin{align}\label{Z0j1-jm_weak}
    \Big(\frac{p}{w}\Big)^m\cdot \pr(Z_{0,j_1}>u'_p,\ldots,Z_{0,j_m}>u'_p)\to 1,
\end{align}
uniformly for all $1\leq j_1<\cdots<j_m\leq p$, by induction. For $m=1$, by Lemma \ref{lemma:B>=up}, we have $\pr(Z_{0,j}\geq u'_p) = \frac{w}{p}\{1+o(1)\}$.
Assume that \eqref{Z0j1-jm_weak} holds with $m=k-1$ for $k\geq 2$. We claim that it also holds with $m=k$.

The key fact is that $\widetilde{X}_{ij_l}:=(X_{ij_l}-\rho_{j_1j_l}X_{ij_1})/\sqrt{1-\rho_{j_1j_l}^2}\sim N(0,1)$ and is independent of $X_{ij_1}$ for any $l\neq 1$ due to the Gaussianity of $X_{ij}$'s, see Lemma \ref{lem:X2-rX1}.
We observe that, for $l\neq 1$,
\begin{align*}
    Z_{0,j_l} =& \max_{1\leq k<n}\frac{1}{\sqrt{n}}\left|\sum_{i=1}^{k}X_{ij_l}-\frac{k}{n}\sum_{i=1}^{n}X_{ij_l}\right|\\
    \leq& \sqrt{1-\rho_{j_1j_l}^2}\max_{1\leq k<n}\frac{1}{\sqrt{n}}\left|\sum_{i=1}^{k}\widetilde{X}_{ij_l}-\frac{k}{n}\sum_{i=1}^{n}\widetilde{X}_{ij_l}\right|\\
    &+ \max_{1\leq k<n}\frac{1}{\sqrt{n}}\left|\sum_{i=1}^{k}\rho_{j_1j_l}X_{ij_1}-\frac{k}{n}\sum_{i=1}^{n}\rho_{j_1j_l}X_{ij_1}\right|\\
    \leq& \sqrt{1-\rho_{j_1j_l}^2}\widetilde{Z}_{0,j_l} + |\rho_{j_1j_l}|Z_{0,j_1},
\end{align*}
where $\widetilde{Z}_{0,j_l}:=\max_{1\leq k<n}\frac{1}{\sqrt{n}}\left|\sum_{i=1}^{k}\widetilde{X}_{ij_l}-\frac{k}{n}\sum_{i=1}^{n}\widetilde{X}_{ij_l}\right|$, and thus for some $C>0$ that will be specified later,
\begin{align*}
    \pr(Z_{0,j_1}>&u'_p,\ldots,Z_{0,j_k}>u'_p)\\
    \leq& \pr\left(Z_{0,j_1}>u'_p,\widetilde{Z}_{0,j_2}>\frac{u'_p-\rho_{j_1j_2}Z_{0,j_1}}{\sqrt{1-\rho_{j_1j_2}^2}},\ldots,\widetilde{Z}_{0,j_k}>\frac{u'_p-\rho_{j_1j_k}Z_{0,j_1}}{{\sqrt{1-\rho_{j_1j_k}^2}}}\right)\\
    =& \pr\left(Cu'_p\geq Z_{0,j_1}>u'_p,\widetilde{Z}_{0,j_2}>u'_p-\delta_p Z_{0,j_1},\ldots,\widetilde{Z}_{0,j_k}>u'_p-\delta_p Z_{0,j_1}\right)\\
    &+ \pr\left(Z_{0,j_1}>Cu'_p,\widetilde{Z}_{0,j_2}>u'_p-\delta_p Z_{0,j_1},\ldots,\widetilde{Z}_{0,j_k}>u'_p-\delta_p Z_{0,j_1}\right)\\
    \leq& \pr\left(Cu'_p\geq Z_{0,j_1}>u'_p,\widetilde{Z}_{0,j_2}>u'_p-C\delta_p u'_p,\ldots,\widetilde{Z}_{0,j_k}>u'_p-C\delta_p u'_p\right)\\
    &+ \pr(Z_{0,j_1}>Cu'_p)\\
    :=& I+II.
\end{align*}
By the independence between $Z_{0,j_1}$ and $\{\widetilde{Z}_{0,j_2},\ldots,\widetilde{Z}_{0,j_k}\}$, we have
\begin{align*}
    I &= \pr(Cu'_p\geq Z_{0,j_1}>u'_p)\pr(\widetilde{Z}_{0,j_2}>u'_p-C\delta_p u'_p,\ldots,\widetilde{Z}_{0,j_k}>u'_p-C\delta_p u'_p)\\
    &\leq \pr(Z_{0,j_1}>u'_p)\pr(\widetilde{Z}_{0,j_2}>u'_p-C\delta_p u'_p,\ldots,\widetilde{Z}_{0,j_k}>u'_p-C\delta_p u'_p)\\
    &= \Big(\frac{w}{p}\Big)^k\{1+o(1)\},
\end{align*}
where the last equality is followed by induction due to the fact that $\delta_p u'_p=o\{(\log p)^{-1/2}\}$.
Additionally, by using arguments similar to those in the proof of Lemma \ref{lemma:B>=up},
\begin{align*}
    II=\pr(\mathcal{B}>Cu'_p)\{1+o(1)\}=O(p^{-C^2})=o(p^{-k}),
\end{align*}
if $C$ is chosen such that $C>\sqrt{k}$. Hence,
\begin{align*}
    \pr(Z_{0,j_1}>u'_p,\ldots,Z_{0,j_k}>u'_p)\leq\Big(\frac{w}{p}\Big)^k\{1+o(1)\}.
\end{align*}

Similarly, we can show that
\begin{align*}
    \pr(Z_{0,j_1}>u'_p,\ldots,Z_{0,j_k}>u'_p)\geq\Big(\frac{w}{p}\Big)^k\{1+o(1)\}.
\end{align*}
Hence the conclusion follows.
\end{proof}

\begin{lemma}\label{lemma:Z0j_strong}
Suppose $X_i\sim N(0,\Sigma)$ for $i=1,\ldots,n$, with $\Sigma=(\rho_{j_1j_2})_{p\times p}$. For any fixed $m\geq 2$, consider $m$ distinct integers $j_1,\ldots,j_m\in\{1,\ldots,p\}$. If there exist $0<\delta_p<\varrho<1$ satisfying $\delta_p=o\{(\log p)^{-1}\}$ such that $|\rho_{j_1j_2}|\leq\varrho$ and $|\rho_{j_kj_l}|\leq\delta_p$ for all $1\leq k,l\leq m$ but $(k,l)\neq(1,2)$, then
\begin{align}\label{Z0j1-jm_strong}
    \pr(Z_{0,j_1}>u_p,Z_{0,j_2}>u_p,\ldots,Z_{0,j_m}>u_p)\lesssim p^{-(m-\zeta)},
\end{align}
uniformly for all $j_1,\ldots,j_m$, where $\zeta=2-C_0^2\in (0,1)$ and
\begin{align*}
    C_0<\frac{\varrho-\sqrt{2}(1-\varrho^2)}{2\varrho^2-1}.
\end{align*}

\end{lemma}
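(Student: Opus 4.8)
The plan is to isolate the single strongly--correlated pair $\{j_1,j_2\}$ from the other $m-2$ coordinates: those behave exactly like the weakly dependent coordinates of Lemma~\ref{lemma:Z0j_weak} and contribute a factor $(w/p)^{m-2}$, while the pair $(Z_{0,j_1},Z_{0,j_2})$ contributes a factor $p^{-C_0^{2}}$, and $p^{-(m-2)}p^{-C_0^{2}}=p^{-(m-\zeta)}$ with $\zeta=2-C_0^{2}$. Recall $Z_{0,j}\stackrel{d}{=}\mathcal B_n\le\mathcal B$. First I would peel off $j_3,\dots,j_m$ by the Gaussian conditioning used in the proof of Lemma~\ref{lemma:Z0j_weak}: by Lemma~\ref{lem:X2-rX1}, for $l\ge 2$ write $X_{\cdot j_l}=\rho_{j_1j_l}X_{\cdot j_1}+\sqrt{1-\rho_{j_1j_l}^{2}}\,\widetilde X_{\cdot j_l}$ with $\widetilde X_{\cdot j_l}$ standard Gaussian, i.i.d.\ in $i$, and independent of $X_{\cdot j_1}$; then for $l\ge 3$ write further $\widetilde X_{\cdot j_l}=\widetilde\rho_l\widetilde X_{\cdot j_2}+\sqrt{1-\widetilde\rho_l^{2}}\,\widehat X_{\cdot j_l}$ with $\widehat X_{\cdot j_l}$ standard Gaussian and independent of $(X_{\cdot j_1},X_{\cdot j_2})$. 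A short covariance computation, using that the exempted correlation $\rho_{j_1j_2}$ enters only multiplied by a $\le\delta_p$ quantity or inside denominators that stay $\ge\sqrt{1-\varrho^{2}}$, shows that $|\widetilde\rho_l|$ and all pairwise correlations among $\{\widehat X_{\cdot j_l}\}_{l\ge 3}$ are $O(\delta_p)=o\{(\log p)^{-1}\}$, and the triangle inequality on the CUSUM of a linear combination gives $Z_{0,j_l}\le\delta_pZ_{0,j_1}+C\delta_p\widetilde Z_{0,j_2}+\widehat Z_{0,j_l}$ for $l\ge3$ and a constant $C$. Fixing a large constant $C_1$ (say $C_1=\sqrt m$), on the event $\{Z_{0,j_1}\le C_1u_p,\ \widetilde Z_{0,j_2}\le C_1u_p\}$ --- whose complement has probability $\le 2\pr(\mathcal B\ge C_1u_p)\lesssim p^{-C_1^{2}}$ by Lemma~\ref{lem:sup_Bt} --- each $\{Z_{0,j_l}>u_p\}$ with $l\ge3$ forces $\widehat Z_{0,j_l}>u_p-O(\delta_pu_p)=:u_p^{-}$, where $u_p^{-}=u_p+o\{(\log p)^{-1/2}\}$. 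Since $\{Z_{0,j_1}>u_p,Z_{0,j_2}>u_p\}\cap\{Z_{0,j_1}\le C_1u_p,\widetilde Z_{0,j_2}\le C_1u_p\}$ is measurable with respect to $(X_{\cdot j_1},\widetilde X_{\cdot j_2})$, hence independent of $\{\widehat X_{\cdot j_l}\}_{l\ge3}$, and since arguing as in Lemma~\ref{lemma:Z0j_weak} the $m-2$ pairwise $O(\delta_p)$--correlated standard Gaussian sequences satisfy $\pr(\bigcap_{l=3}^{m}\{\widehat Z_{0,j_l}>u_p^{-}\})=(w/p)^{m-2}\{1+o(1)\}$, I would obtain
\[
\pr\Big(\bigcap_{l=1}^{m}\{Z_{0,j_l}>u_p\}\Big)\lesssim\pr\big(Z_{0,j_1}>u_p,\ Z_{0,j_2}>u_p\big)\Big(\frac{w}{p}\Big)^{m-2}\{1+o(1)\}+p^{-C_1^{2}},
\]
uniformly in $j_1,\dots,j_m$.

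The second step, and the heart of the matter, is the bivariate bound $\pr(Z_{0,j_1}>u_p,Z_{0,j_2}>u_p)\lesssim p^{-C_0^{2}}$. Put $\rho=\rho_{j_1j_2}$, $|\rho|\le\varrho$, and $Z_{0,j_2}\le|\rho|Z_{0,j_1}+\sqrt{1-\rho^{2}}\,\widetilde Z_{0,j_2}$ with $\widetilde Z_{0,j_2}$ independent of $Z_{0,j_1}$. Splitting on whether $Z_{0,j_1}>C_0u_p$: on this part $\pr(Z_{0,j_1}>C_0u_p)\le\pr(\mathcal B\ge C_0u_p)\le 2e^{-2C_0^{2}u_p^{2}}\lesssim p^{-C_0^{2}}$ by Lemma~\ref{lem:sup_Bt} and $2u_p^{2}=x+\log(2p)$. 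On the complementary part $u_p<Z_{0,j_1}\le C_0u_p$, the hypothesis $C_0<\{\varrho-\sqrt2(1-\varrho^{2})\}/(2\varrho^{2}-1)$ guarantees $|\rho|C_0<1$, so $Z_{0,j_2}>u_p$ forces $\widetilde Z_{0,j_2}>bu_p$ with $b=(1-|\rho|C_0)/\sqrt{1-\rho^{2}}>0$ a constant; by independence and Lemma~\ref{lem:sup_Bt} this part has probability $\lesssim\pr(Z_{0,j_1}>u_p)\,e^{-2b^{2}u_p^{2}}\lesssim p^{-1}p^{-b^{2}}$. A direct computation shows that the same hypothesis is equivalent to the convex function $g(r):=(2C_0^{2}-1)r^{2}-2C_0r+(2-C_0^{2})$ being non-negative on $[0,\varrho]$, i.e.\ to $1+b^{2}\ge C_0^{2}$ for every $|\rho|\le\varrho$; hence $p^{-1}p^{-b^{2}}\le p^{-C_0^{2}}$, and adding the two parts gives $\pr(Z_{0,j_1}>u_p,Z_{0,j_2}>u_p)\lesssim p^{-C_0^{2}}$, uniformly in $j_1,j_2$.

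Feeding the second step into the displayed reduction yields $\pr(\bigcap_{l=1}^{m}\{Z_{0,j_l}>u_p\})\lesssim p^{-C_0^{2}-(m-2)}+p^{-C_1^{2}}=p^{-(m-\zeta)}+p^{-C_1^{2}}$, and since $C_1$ was taken with $C_1^{2}>m>m-\zeta$ the last term is negligible, so \eqref{Z0j1-jm_strong} follows, uniformly in $j_1,\dots,j_m$ because every bound above is uniform. The main obstacle is the bivariate bound: one must choose the split level $C_0u_p$ so that the two ways the event $\{Z_{0,j_1}>u_p,Z_{0,j_2}>u_p\}$ can occur --- a large $Z_{0,j_1}$ alone, versus a moderate $Z_{0,j_1}$ together with a moderately large decorrelated part of $Z_{0,j_2}$ --- are each $\lesssim p^{-C_0^{2}}$, and this optimization is exactly what the sign condition on $g$, hence the stated threshold on $C_0$, encodes. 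A secondary, more routine, difficulty inherited from Lemma~\ref{lemma:Z0j_weak} is keeping the $O(\delta_p)$ errors from the two decorrelation steps (the residual correlations and the shift $u_p\to u_p^{-}$) uniformly controlled over all choices of $j_1,\dots,j_m$.
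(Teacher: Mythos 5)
Your argument is correct, and its core coincides with the paper's: the bivariate bound is obtained exactly as in the paper's proof, by writing $Z_{0,j_2}\leq|\rho|Z_{0,j_1}+\sqrt{1-\rho^2}\,\widetilde Z_{0,j_2}$ with $\widetilde Z_{0,j_2}$ independent of $Z_{0,j_1}$, splitting at the level $C_0u_p$, and balancing the two resulting exponents $C_0^2$ and $1+(1-|\rho|C_0)^2/(1-\rho^2)$; your quadratic $g(r)=(2C_0^2-1)r^2-2C_0r+(2-C_0^2)$ makes explicit the computation the paper leaves as ``it can be verified,'' and it does reproduce the stated threshold on $C_0$. The only structural difference is in the reduction from $m$ coordinates to the strongly correlated pair: the paper proceeds by induction on $m$, at each step conditioning on one weakly correlated coordinate $j_l$ and projecting the remaining $m-1$ (including the pair $j_1,j_2$) onto it, so the strong pair stays inside the recursion and is only resolved at the base case $m=2$; you instead perform a two-step Gram--Schmidt against $(X_{\cdot j_1},\widetilde X_{\cdot j_2})$ once and for all, verify that the residuals $\widehat X_{\cdot j_l}$, $l\geq 3$, are pairwise $O(\delta_p)$-correlated and independent of the pair, and then invoke Lemma \ref{lemma:Z0j_weak} on the residual block to get the factor $(w/p)^{m-2}$ directly. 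Your route avoids the induction and exhibits the bound transparently as (bivariate factor)$\times$(weak factor); the paper's route avoids having to track the correlations of the doubly-projected residuals. Both are valid, and your covariance checks (numerators $O(\delta_p)$, denominators bounded below by $\sqrt{1-\varrho^2}$ up to $o(1)$) are the right ones.
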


\begin{proof}
Without loss of generality, we assume that $j_1,j_2$ are always included in $\{j_1,\ldots,j_m\}$ for $m\geq 2$. Otherwise, the conclusion follows by Lemma \ref{lemma:Z0j_weak} immediately.
We first show that \eqref{Z0j1-jm_strong} holds for $m=2$.
By using arguments similar to those in the proof of Lemma \ref{lemma:Z0j_weak}, we have
\begin{align*}
    \pr(Z_{0,j_1}>u'_p,Z_{0,j_2}>u'_p)
    &\leq \pr(Z_{0,j_1}>u'_p)\pr\left(\widetilde{Z}_{0,j_2}>\frac{(1-|\rho_{j_1j_2}C|)u'_p}{\sqrt{1-\rho_{j_1j_2}^2}}\right) + \pr(Z_{0,j_1}>Cu'_p)\\
    &\lesssim p^{-1-\frac{(1-|\rho_{j_1j_2}|C)^2}{1-\rho_{j_1j_2}^2}} + p^{-C^2}.
\end{align*}
By choosing $C=C_0\in(1,\sqrt{2})$ such that $C_0<\frac{\varrho-\sqrt{2}(1-\varrho^2)}{2\varrho^2-1}
$, it can be verified that
\begin{align*}
    \pr(Z_{0,j_1}>u'_p,Z_{0,j_2}>u'_p)
    \lesssim p^{-C_0^2} = p^{-(2-\zeta)}.
\end{align*}

Then we turn to $m=3$. With a little abuse of notations, for any $l,l'\in\{1,\ldots,m\}$ with $l\neq 1,2$ and $l'\neq l$, let $\check{Z}_{0,j_{l'}}=\max_{1\leq k<n}n^{-1/2}\left|\sum_{i=1}^{k}\check{X}_{ij_{l'}}-\frac{k}{n}\sum_{i=1}^{n}\check{X}_{ij_{l'}}\right|$ with
$$\check{X}_{ij_{l'}}:=(X_{ij_{l'}}-\rho_{j_lj_{l'}}X_{ij_l})/\sqrt{1-\rho_{j_lj_{l'}}^2}\sim N(0,1).$$
Similarly, we can show that, with $C_m=\sqrt{m}$,
\begin{align*}
    \pr(Z_{0,j_1}>&u'_p,Z_{0,j_2}>u'_p,Z_{0,j_l}>u'_p)\\
    &\leq \pr(Z_{0,j_l}>u'_p)\pr\left(\check{Z}_{0,j_1}>{u'_p-C_m\delta_pu'_p},\check{Z}_{0,j_2}>{u'_p-C_m\delta_pu'_p}\right) + \pr(Z_{0,j_l}>C_3u'_p)\\
    &\lesssim p^{-1}p^{-\{(m-1)-\zeta\}} + O(p^{-C_m^2})\\
    &\lesssim p^{-(m-\zeta)}.
\end{align*}

Finally, by induction, the conclusion follows.
\end{proof}

\begin{proposition}\label{null:Mnp-Gaussian}
If Assumption \ref{asmp:corr_spatial} holds and $\log p=o(n^{\frac{1}{2+\upsilon}})$ for some $\upsilon>0$, then
\begin{align*}
    \lim_{(n,p)\to\infty}\pr\left(M_{n,p}\leq u_p\{\exp(-x)\}\right) = \exp\{-\exp(-x)\}.
\end{align*}
\end{proposition}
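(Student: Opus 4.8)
The plan is to carry out the inclusion–exclusion argument outlined in the remark following Theorem \ref{null:Max}, working throughout with the per‑coordinate statistics $Z_{0,j}=\max_{1\le k<n}|C_{0,j}(k)|$, which satisfy $Z_{0,j}\stackrel{d}{=}\mathcal{B}_n$. Fix $x$, write $w=\exp(-x)$ and $u_p=u_p(w)$. First I would discard the ``bad'' coordinates: since $\mathcal{B}_n\le\mathcal{B}$, Lemma \ref{lemma:B>=up} gives $\pr(Z_{0,j}>u_p)\le\pr(\mathcal{B}>u_p)=w/p+o(p^{-1})$, so a crude union bound yields $\pr(\max_{j\in C_p}Z_{0,j}>u_p)\le|C_p|\,\pr(\mathcal{B}>u_p)\lesssim w|C_p|/p\to0$ by Assumption \ref{asmp:corr_spatial}(ii). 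As $\{M_{n,p}\le u_p\}$ is the complement of $\{\max_j Z_{0,j}>u_p\}$, it then suffices to prove
\[
\pr\Bigl(\max_{j\notin C_p}Z_{0,j}>u_p\Bigr)\longrightarrow 1-\exp\{-\exp(-x)\}.
\]

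For this I would use the Bonferroni sandwich: for every fixed $k\ge1$ and every $(n,p)$,
\[
\sum_{t=1}^{2k}(-1)^{t-1}\alpha_t\le\pr\Bigl(\max_{j\notin C_p}Z_{0,j}>u_p\Bigr)\le\sum_{t=1}^{2k+1}(-1)^{t-1}\alpha_t,
\]
where $\alpha_t=\sum\pr(Z_{0,j_1}>u_p,\dots,Z_{0,j_t}>u_p)$, the sum running over all $j_1<\cdots<j_t$ in $C_p^c$. Granting the key claim that $\alpha_t\to w^t/t!$ for each fixed $t$, I would let $(n,p)\to\infty$ and then $k\to\infty$, concluding via the identity $1-e^{-w}=\sum_{t\ge1}(-1)^{t-1}w^t/t!$, whose partial sums eventually bracket the limit since $w^t/t!\to0$.

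The heart of the argument is $\alpha_t\to w^t/t!$. I would split the $t$‑subsets of $C_p^c$ using the correlation graph in which $j\sim j'$ iff $|\rho_{jj'}|>\delta_p$. A subset all of whose pairs are weakly correlated (recall $\delta_p=o\{(\log p)^{-1}\}$) has joint tail probability $(w/p)^t\{1+o(1)\}$ uniformly, by Lemma \ref{lemma:Z0j_weak}; since every $j\in C_p^c$ has fewer than $p^{\kappa_p}$ strong partners, the subsets containing at least one strong pair number only $\lesssim p^{t-1+\kappa_p}=o(p^t)$, so the weakly correlated subsets number $\binom{|C_p^c|}{t}\{1+o(1)\}=p^t/t!+o(p^t)$ and contribute $w^t/t!+o(1)$ to $\alpha_t$. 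For a subset $T$ containing a strong pair I would pass to a sub‑configuration $S'\subseteq T$ obtained by keeping one representative from each connected component of $T$, except that from one distinguished non‑singleton component I keep both endpoints of an edge; then $S'$ has exactly one strongly correlated pair (with $|\rho|\le\varrho$) and all other pairs weakly correlated, so Lemma \ref{lemma:Z0j_strong} gives $\pr(\bigcap_{j\in T}\{Z_{0,j}>u_p\})\le\pr(\bigcap_{j\in S'}\{Z_{0,j}>u_p\})\lesssim p^{-(|S'|-\zeta)}$ with $\zeta\in(0,1)$ fixed and $|S'|=c+1$, where $c\le t-1$ is the number of components of $T$. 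Counting subsets by component profile (at most $\lesssim p^{c+(t-c)\kappa_p}$ subsets per profile, obtained by growing spanning trees one strong‑neighbour at a time), the total contribution of the strongly correlated subsets is $\lesssim\sum_{c=1}^{t-1}p^{(t-c)\kappa_p-(1-\zeta)}\to0$ since $\kappa_p\to0$, $t$ is fixed and $1-\zeta>0$. Adding the two parts gives $\alpha_t\to w^t/t!$.

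I expect the combinatorial bookkeeping in the last step to be the real obstacle (this is what ``Steps 3–6'' of the proof amount to): one must, uniformly over all $t$‑subsets and all their correlation‑graph structures, simultaneously control the joint exceedance probability — by extracting a sub‑configuration to which Lemma \ref{lemma:Z0j_strong} applies that is large enough for the resulting power of $p$ to beat the count — and control the number of subsets of each graph type via the degree bound $|B_{p,j}|<p^{\kappa_p}$ on $C_p^c$. Two further technical points require care: that $\delta_p=o\{(\log p)^{-1}\}$ is preserved under replacing $u_p$ by $u_p(1\pm C\delta_p)$ when conditioning on one coordinate (which is why Assumption \ref{asmp:corr_spatial}(ii) is posed at exactly this scale), and that the discrete statistic $\mathcal{B}_n\stackrel{d}{=}Z_{0,j}$ may be compared with the continuous bridge $\mathcal{B}$, which is where $\log p=o(n^{1/(2+\upsilon)})$ enters through Lemma \ref{lemma:B-Bn}. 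Once $\alpha_t\to w^t/t!$ is in hand, the Bonferroni sandwich together with $\pr(M_{n,p}\le u_p)=1-\pr(\max_j Z_{0,j}>u_p)$ delivers the stated limit.
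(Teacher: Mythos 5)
Your proposal is correct and its skeleton coincides with the paper's: discard $C_p$ by a union bound, apply the Bonferroni/inclusion--exclusion sandwich, reduce to $\alpha_t\to w^t/t!$, and split each $t$-subset according to the correlation graph, invoking Lemma \ref{lemma:Z0j_weak} for all-weak configurations and Lemma \ref{lemma:Z0j_strong} for configurations containing a strong pair. Where you genuinely depart from the paper is in the combinatorial bookkeeping for the subsets $G_t$ containing at least one strong pair. The paper stratifies $G_t$ by $\wp(\cdot)$, the size of a maximum weakly-correlated subset, and then further splits each stratum into $H_{t,j}$ (every outside vertex has two strong neighbours in $S$) and $H'_{t,j}$ (some outside vertex has a unique strong neighbour in $S$), applying Lemma \ref{lemma:Z0j_weak} to $S$ in the first case and Lemma \ref{lemma:Z0j_strong} to $S\cup\{k\}$ in the second, with the counts supplied by parts (ii)--(iv) of Lemma \ref{lem:Feng}. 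You instead stratify by the number $c$ of connected components and extract from each $T$ a sub-configuration $S'$ of size $c+1$ containing exactly one strong edge, to which Lemma \ref{lemma:Z0j_strong} applies directly with $m=c+1$; the spanning-tree count $\lesssim_t p^{c+(t-c)\kappa_p}$ per component profile then gives $\sum_{c=1}^{t-1}p^{(t-c)\kappa_p-(1-\zeta)}\to 0$. This is a cleaner decomposition: it avoids the $H_{t,j}$ versus $H'_{t,j}$ case distinction entirely and needs only the single inequality $\pr(\bigcap_{j\in T}\{Z_{0,j}>u_p\})\le\pr(\bigcap_{j\in S'}\{Z_{0,j}>u_p\})\lesssim p^{-(c+1-\zeta)}$, at the mild cost of re-deriving the component count rather than quoting Lemma \ref{lem:Feng} verbatim. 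Both routes rest on the same two probabilistic lemmas and the same degree bound $|B_{p,j}|<p^{\kappa_p}$ on $C_p^c$, so the hypotheses used are identical.
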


\begin{proof}
\textbf{Step 1}: Recall that $C_p=\{j:|B_{p,j}|\geq p^\kappa\}$, where $B_{p,j}=\{j':|\rho_{jj'}|\geq\delta_p\}$ for $j=1,\ldots,p$ and $\kappa:=\kappa_p\to 0$. By Lemmas \ref{lemma:B>=up}--\ref{lemma:B-Bn} and the assumption that $|C_p|/p\to 0$, we have
\begin{align*}
    \pr(\max_{j\in C_p}Z_{0,j}>u_p)\leq \sum_{j\in C_p}\pr(Z_{0,j}\geq u_p)\to 0.
\end{align*}
Let $D_p=\{j:|B_{p,j}|<p^\kappa\}$ and thus $|D_p|/p\to 1$. Notice that
\begin{align*}
    \pr(\max_{j\in D_p}Z_{0,j}>u_p)
    \leq \pr(\max_{1\leq j\leq p}Z_{0,j}>u_p)
    \leq \pr(\max_{j\in C_p}Z_{0,j}>u_p) + \pr(\max_{j\in D_p}Z_{0,j}>u_p).
\end{align*}
Hence, it suffices to show that
\begin{align}\label{tail:D_p}
    \pr(\max_{j\in D_p}Z_{0,j} > u_p) \to 1-\exp\{-\exp(-x)\}.
\end{align}

\textbf{Step 2}: The key component of the proof is to construct sharp lower and upper bounds of $\pr(\max_{j\in D_p}Z_{0,j} > u_p)$ by using the \textit{inclusion-exclusion principle}. By the inclusion-exclusion principle, for any $k\geq 1$,
\begin{align*}
    \sum_{t=1}^{2k}(-1)^{t-1}\alpha_t
    \leq \pr(\max_{j\in D_p}Z_{0,j}>u_p)
    \leq \sum_{t=1}^{2k+1}(-1)^{t-1}\alpha_t,
\end{align*}
where
\begin{align}\label{alpha_t}
    \alpha_t=\sum \pr(Z_{0,j_1}>u_p,\ldots,Z_{0,j_t}>u_p)
\end{align}
and the sum runs over all combinations $j_1,\ldots,j_t\in D_p$ such that $j_1<\cdots<j_t$. If we can show that
\begin{align}\label{alpha_t_limit}
    \alpha_t\to \frac{1}{t!}\exp(-tx),
\end{align}
then, for any $k\geq 1$,
\begin{align*}
    \sum_{t=1}^{2k}(-1)^{t-1}\frac{\{\exp(-x)\}^t}{t!}
    &\leq \liminf_{(n,p)\to\infty}\pr(\max_{j\in D_p}Z_{0,j}>u_p)\\
    &\leq \limsup_{(n,p)\to\infty}\pr(\max_{j\in D_p}Z_{0,j}>u_p)\\
    &\leq \sum_{t=1}^{2k+1}(-1)^{t-1}\frac{\{\exp(-x)\}^t}{t!}.
\end{align*}
By letting $k\to\infty$ and using the Taylor expansion of the function $1-\exp(-x)$, \eqref{tail:D_p} immediately follows.

\textbf{Step 3}: Evidently, by Lemmas \ref{lemma:B>=up}--\ref{lemma:B-Bn} and the assumption $|D_p|/p\to 1$, \eqref{alpha_t_limit} holds with $t=1$. Next we will show that it holds for any $t\geq 2$.

Let
\begin{align*}
    \big\{(j_1,\ldots,j_t)\in (D_p)^t: j_1<\cdots<j_t\big\}=F_t\bigcup G_t,
\end{align*}
where
\begin{align}
    F_t :=& \big\{(j_1,\ldots,j_t)\in (D_p)^t: j_1<\cdots<j_t\ \mbox{and}\ |\rho_{j_rj_s}|\leq\delta_p\ \mbox{for all}\ 1\leq r<s\leq t\}\ \text{and}\ \nonumber\\
    G_t :=& \big\{(j_1,\ldots,j_t)\in (D_p)^t: j_1<\cdots<j_t\ \mbox{and}\ |\rho_{j_rj_s}|>\delta_p\ \mbox{for some pair}\ (r,s)\ \mbox{with}\ \nonumber\\
    &~~~~~~~~~~~~~~~~~~~~~~~~~~~~~~~~~~~~~~~~~~~~~~~~~~~~~~~~~~~~~~~~~~~~~~~~~~~~~~~1\leq r<s\leq t\big\}.\label{Gt}
\end{align}
Now, think $D_p$ as a graph with $|D_p|$ vertices. Keep in mind that $|D_p|\leq p$ and $|D_p|/p\to 1$. Any two different vertices from them, say, $j$ and $k$, are connected if $|\rho_{jk}|>\delta_p$. In this case we also say there is an edge between them. By the definition of $D_p$, each vertex in the graph has at most $p^{\kappa}$ neighbors.
{Replacing ``$n$'', ``$q$'' and ``$t$'' in Lemma \ref{lem:Feng}--(i) with ``$|D_p|$'', ``$p^{\kappa}$'' and ``$t$'', respectively,} we have $|G_t|\leq p^{t+\kappa-1}$ for each $t\geq 2$. Therefore $\binom{|D_p|}{t}\geq|F_t|\geq\binom{|D_p|}{t}-p^{t+\kappa-1}$. Since $D_p/p\to 1$ and $\kappa\to 0$, we know
\begin{align}\label{No.weak}
    \frac{|F_t|}{p^t}\to\frac{1}{t!}.
\end{align}
Reviewing $\alpha_t$ in \eqref{alpha_t}, we see
\begin{align*}
    \alpha_t =
    \sum_{(j_1,\ldots,j_t)\in F_t}\pr(Z_{0,j_1}>u_p,\ldots,Z_{0,j_t}>u_p) +
    \sum_{(j_1,\ldots,j_t)\in G_t}\pr(Z_{0,j_1}>u_p,\ldots,Z_{0,j_t}>u_p).
\end{align*}
From Lemma \ref{lemma:Z0j_weak} and \eqref{No.weak} we have
\begin{align*}
    \sum_{(j_1,\ldots,j_t)\in F_t}\pr(Z_{0,j_1}>u_p,\ldots,Z_{0,j_t}>u_p)\to\frac{1}{t!}\exp(-tx).
\end{align*}
As a consequence, to derive \eqref{alpha_t_limit}, it remains to show that, for each $t\geq 2$,
\begin{align}\label{sum-Gt}
    \sum_{(j_1,\ldots,j_t)\in G_t}\pr(Z_{0,j_1}>u_p,\ldots,Z_{0,j_t}>u_p)\to 0.
\end{align}

\textbf{Step 4}: If $t=2$, the sum of probabilities in \eqref{sum-Gt} is bounded by $|G_2|\cdot\max_{1\leq j_1<j_2\leq p}\pr(Z_{0,j_1}>u_p, Z_{0,j_2}>u_p)$. By Lemma \ref{lem:Feng}--(i), $|G_2|\leq p^{\kappa+1}$. Then by Lemma \ref{lemma:Z0j_strong}, \eqref{sum-Gt} follows. Hence it remains to show \eqref{sum-Gt} holds for $t\geq 3$.

For a set $A\subset\{1,2,\ldots,m\}$ with $2\leq m\leq p$, define
\begin{align*}
\wp(A)=\max\Big\{|S|: S\subset A\ \mbox{and}\ \max_{j,k\in S, j\neq k}|\rho_{jk}|\leq\delta_p\Big\}.
\end{align*}
Easily, $\wp(A)$ takes possible values $0,2,\ldots,|A|$, where we regard $|\varnothing|=0$. If $\wp(A)=0$, then $|\rho_{jk}|>\delta_p$ for all $j,k\in A$. Compare it with $G_t$ from \eqref{Gt}. We will look at $G_t$ closely. To do so, we classify $G_t$ into the following subets
\begin{align*}
    G_{t,j}=\big\{(j_1,\ldots,,j_t)\in G_t: \wp(\{j_1,\ldots,,j_t\})=j\big\},
\end{align*}
for $j=0,2,\ldots,t-1$. By the definition of $G_t$, we see $G_t=\bigcup G_{t,j}$. Since $t\geq 3$ is fixed, to show \eqref{sum-Gt}, it suffices to prove
\begin{align}\label{sum-Gtj}
    \sum_{(j_1,\ldots,,j_t)\in G_{t,j}}\pr(Z_{0,j_1}>u_p,\ldots,Z_{0,j_t}>u_p)\to 0,
\end{align}
for any $j\in\{0,2,\ldots,t-1\}$.

\textbf{Step 5}: Assume $(j_1,\ldots,j_t)\in G_{t,0}$. This implies that $|\rho_{j_rj_s}|>\delta_p$ for all $1\leq r<s\leq t$. Therefore, the subgraph $\{j_1,\ldots,,j_t\}\in  G_t$ is a clique. Replacing ``$n$'', ``$q$'' and ``$t$'' in Lemma \ref{lem:Feng}--(ii) with ``$|D_p|$'', ``$p^{\kappa}$'' and ``$t$'', respectively, we have $|G_{t,0}|\leq p^{1+\kappa(t-1)}\leq p^{1+t\kappa}$. Together with Lemma \ref{lemma:Z0j_strong}, the sum of probabilities in \eqref{sum-Gtj} is bounded by
\begin{align*}
    p^{1+t\kappa}\cdot\max_{1\leq j_1<j_2\leq p}\pr(Z_{0,j_1}>u_p,Z_{0,j_2}>u_p)\to 0.
\end{align*}
In other words, \eqref{sum-Gtj} holds with $j=0$.

Now we assume $(j_1,\ldots,j_t)\in G_{t,j}$ with $j\in\{2,\ldots,t-1\}$. By definition, there exists $S\subset\{j_1,\ldots,j_t\}$ such that $\max_{j_1,j_2\in S, j_1\neq j_2}|\rho_{j_1j_2}|\leq\delta_p$ and for each $k\in\{j_1,\ldots,j_t\}\backslash S$, there exists $l\in S$ satisfying $|\rho_{kl}|>\delta_p$. Looking at the last statement we see two possibilities: (i) for each $k\in\{j_1,\ldots,j_t\}\backslash S$, there exist at least two indices, say, $l,l'\in S, l\neq l'$ satisfying $|\rho_{kl}|>\delta_p$ and $|\rho_{kl'}|>\delta_p$; (ii) there exists $k\in\{j_1,\ldots,j_t\}\backslash S$ for which $|\rho_{kl}|>\delta_p$ for an unique $l\in S$. However, for $(j_1,\ldots,j_t)\in G_{t,j}$, (i) and (ii) could happen at the same time for different $S$, say, (i) holds for $S_1$ and (ii) holds for $S_2$ simultaneously. Thus, to differentiate the two cases, we introduce following two definitions. Set
\begin{align}\label{Htj}
    H_{t,j} = \big\{&(j_1,\ldots,j_t)\in G_{t,j}: \mbox{there exists an}\ S\subset\{j_1,\ldots,j_t\}\ \mbox{with}\ |S|=j\ \mbox{and}\nonumber\\
    &\max_{j_1,j_2\in S,j_1\neq j_2}|\rho_{j_1j_2}|\leq\delta_p\ \mbox{such that for any}\ k\in\{j_1,\ldots,j_t\}\backslash S\ \mbox{there exist}\nonumber\\
    &l,l'\in S,l\neq l'\ \mbox{satisfying}\ \min\{|\rho_{kl}|,|\rho_{kl'}|\}>\delta_p\big\}.
\end{align}
By Lemma \ref{lem:Feng}--(iii), $|H_{t,j}|\leq t^t\cdot p^{j-1+(t-j+1)\kappa}$ for each $t\geq 3$.
On the other hand, set
\begin{align*}
    H_{t,j}' = \big\{&(j_1,\ldots,j_t)\in G_{t,j}: \mbox{for any}\ S\subset\{j_1,\ldots,j_t\}\ \mbox{with}\ |S|=j\ \mbox{and}\nonumber\\
    &\max_{j_1,j_2\in S,j_1\neq j_2}|\rho_{j_1j_2}|\leq\delta_p,\ \mbox{there exists}\ k\in\{j_1,\ldots,j_t\}\backslash S\ \mbox{such that}\ |\rho_{kl}|>\delta_p\nonumber\\
    &\mbox{for a unique}\ l\in S\}.
\end{align*}
From Lemma \ref{lem:Feng}--(iv), we see $|H_{t,j}'|\leq t^t\cdot p^{j+(t-j)\kappa}$.
Due to the fact that $G_{t,j}=H_{t,j}\bigcup H_{t,j}'$, to show \eqref{sum-Gtj}, it suffices to show
\begin{align}\label{sum-Htj}
    \sum_{(j_1,\ldots,j_t)\in H_{t,j}}\pr(Z_{0,j_1}>u_p,\ldots,Z_{0,j_t}>u_p)\to 0,
\end{align}
and
\begin{align}\label{sum-H'tj}
    \sum_{(j_1,\ldots,j_t)\in H_{t,j}'}\pr(Z_{0,j_1}>u_p,\ldots,Z_{0,j_t}>u_p)\to 0,
\end{align}
for $j=2,\ldots,t-1$.

\textbf{Step 6}: Consider the $S$ in \eqref{Htj}. By Lemma \ref{lemma:Z0j_weak},
\begin{align*}
    \sum_{(j_1,\ldots,j_t)\in H_{t,j}}\pr(Z_{0,j_1}>u_p,\ldots,Z_{0,j_t}>u_p)
    &\leq \sum_{(j_1,\ldots,j_t)\in H_{t,j}}\pr(\bigcap_{k\in S}\{Z_{0,k}>u_p\})\\
    &\lesssim t^t\cdot p^{j-1+(t-j+1)\kappa}\cdot p^{-j}\\
    &\lesssim t^t\cdot p^{-1+t\kappa}.
\end{align*}
Consequently, we obtain \eqref{sum-Htj}, since $\kappa\to 0$.

We then turn to \eqref{sum-H'tj}. For $(j_1,\ldots,j_t)\in H_{t,j}'$, pick $S\subset\{j_1,\ldots,j_t\}$ with $|S|=j$ and $\max_{j_1,j_2\in S,j_1\neq j_2}|\rho_{j_1j_2}|\leq\delta_p$, and $k\in\{j_1,\ldots,j_t\}\backslash S$ such that $|\rho_{kl}|>\delta_p$ for a unique $l\in S$. By Lemma \ref{lemma:Z0j_strong},
\begin{align*}
    \sum_{(j_1,\ldots,j_t)\in H_{t,j}'}\pr(Z_{0,j_1}>u_p,\ldots,Z_{0,j_t}>u_p)
    &\leq \sum_{(j_1,\ldots,j_t)\in H_{t,j}'}\pr(Z_{0,k}\bigcap_{k'\in S}\{Z_{0,k'}>u_p\})\\
    &\lesssim t^t\cdot p^{j+(t-j)\kappa}\cdot p^{-(j+1-\zeta)}\\
    &\lesssim t^t\cdot p^{-(1-\zeta)+t\kappa}.
\end{align*}
Consequently, we obtain \eqref{sum-H'tj}, since $\kappa\to 0$.

Hence Proposition \ref{null:Mnp-Gaussian} follows.
\end{proof}

\subsubsection{Asymptotic null distribution of $M^{\dagger}_{n,p}$}

Recall that $A(x)=\sqrt{2\log x}$ and $D(x)=2\log x+2^{-1}\log\log x-2^{-1}\log\pi$. For any fixed $x$, let $w=\exp(-x)$ and $v_{p,n}:=\{x+D(p\log h_n)\}/A(p\log h_n)$, where $h_n=\left\{(\lambda_n/n)^{-1}-1\right\}^2$.
Let $\mathcal{M}(\lambda_n)=\sup_{\lambda_n/n\leq t\leq 1-\lambda_n/n}\{t(1-t)\}^{-1/2}|\mathcal{W}_t-t\mathcal{W}_1|$ and $\mathcal{M}_n(\lambda_n)=\sup_{\lambda_n\leq k\leq n-\lambda_n}\{(k/n)(1-k/n)\}^{-1/2}|\mathcal{W}_{k/n}-(k/n)\mathcal{W}_1|$.
Let $Z_{0.5,j}=\max_{\lambda_n\leq k<n-\lambda_n}|C_{0.5,j}(k)|$ for $j=1,\ldots,p$.
Let $\mathcal{V}$ be an Ornstein-Uhlenbeck process with $\E\big(\mathcal{V}(t)\big)=0$ and $\E\big(\mathcal{V}(t)\mathcal{V}(s)\big)=\exp(-|t-s|/2)$.
{Since $\lambda_n\sim n^{\epsilon}$ with $\epsilon\in(0,1)$, we have $h_n\to\infty$.}

\begin{lemma}\label{lemma:M>=vpn}
It holds that
\begin{align}\label{M>=vpn}
    \pr\left(\mathcal{M}(\lambda_n)\geq v_{p,n}\right)=\frac{w}{p}+o(p^{-1}).
\end{align}
In addition, \eqref{M>=vpn} also holds for any $v'_{p,n}=v_{p,n}+o\big(\{\log(p\log h_n)\}^{-1/2}\big)$.
\end{lemma}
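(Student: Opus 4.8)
The plan is to convert $\mathcal{M}(\lambda_n)$ into the supremum of a stationary Ornstein--Uhlenbeck process over a growing interval, invoke the precise Darling--Erd\"{o}s tail expansion for that process, and then notice that $v_{p,n}$ is exactly the Darling--Erd\"{o}s critical level associated with the ``inflated horizon'' $p\log h_n$, which forces the single-coordinate exceedance probability to equal $w/p$. Concretely, writing $B(t)=\mathcal{W}_t-t\mathcal{W}_1$ and using the Lamperti-type representation $B(t)\stackrel{d}{=}(1-t)\mathcal{W}(t/(1-t))$, one gets $\{t(1-t)\}^{-1/2}B(t)=\mathcal{W}(u)/\sqrt{u}$ with $u=t/(1-t)$, and with $s=\log u$ this equals $\mathcal{V}(s):=e^{-s/2}\mathcal{W}(e^{s})$, the stationary OU process with $\E\{\mathcal{V}(s)\mathcal{V}(s')\}=e^{-|s-s'|/2}$ (the very process named just above the lemma). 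As $t$ runs over $[\lambda_n/n,1-\lambda_n/n]$, $s$ runs over $[-L,L]$ with $L=\log\{(\lambda_n/n)^{-1}-1\}=\tfrac12\log h_n$, so by stationarity $\mathcal{M}(\lambda_n)\stackrel{d}{=}\sup_{0\le s\le \log h_n}|\mathcal{V}(s)|$; here $\log h_n\to\infty$ because $\lambda_n\sim n^{\lambda}$ with $\lambda\in(0,1)$.

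Next I would invoke the extreme-value theory for stationary Gaussian processes. Since $r(h)=e^{-h/2}=1-\tfrac12 h+O(h^2)$ (Pickands index $\alpha=1$, local constant $C=\tfrac12$, $H_1=1$) and Berman's condition $r(h)\log h\to 0$ holds, the Darling--Erd\"{o}s asymptotics (see \citealp{csorgo1997limit}) give, with $T=\log h_n\to\infty$ and uniformly over the pertinent range of $v$,
\[
    \pr\Big(\sup_{0\le s\le T}|\mathcal{V}(s)|\le v\Big)=\exp\big\{-Tv\phi(v)(1+o(1))\big\},
\]
where $\phi$ is the standard normal density and the front factor $2C^{1/\alpha}H_1=1$ absorbs the two-sidedness. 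In particular $\pr(\sup_{[0,T]}|\mathcal{V}|\ge v)=Tv\phi(v)(1+o(1))$ whenever $Tv\phi(v)\to 0$.

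It then remains to plug in $v=v_{p,n}$ and expand. Put $z=p\log h_n$, so $v_{p,n}=\{x+D(z)\}/A(z)$ with $A(z)=\sqrt{2\log z}$ and $D(z)=2\log z+\tfrac12\log\log z-\tfrac12\log\pi$. A direct computation gives $v_{p,n}^{2}=2\log z+\log\log z-\log\pi+2x+o(1)$ and $\log v_{p,n}=\tfrac12\log(2\log z)+o(1)$, whence $z\,v_{p,n}\phi(v_{p,n})=\tfrac{v_{p,n}}{\sqrt{2\log z}}\,e^{-x}(1+o(1))\to w$; since $T=z/p$ this means $Tv_{p,n}\phi(v_{p,n})=w/p+o(p^{-1})\to 0$, and the tail expansion delivers \eqref{M>=vpn}. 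For the strengthening to $v'_{p,n}=v_{p,n}+o(\{\log(p\log h_n)\}^{-1/2})$, note that $o(\{\log(p\log h_n)\}^{-1/2})=o(1/v_{p,n})$, so $(v'_{p,n})^{2}/2=v_{p,n}^{2}/2+o(1)$ and $\log v'_{p,n}=\log v_{p,n}+o(1)$; hence $Tv'_{p,n}\phi(v'_{p,n})=Tv_{p,n}\phi(v_{p,n})(1+o(1))$ and $v'_{p,n}$ stays in the admissible range, so \eqref{M>=vpn} persists. This mirrors exactly the role of the perturbation $\xi_p=o\{(\log p)^{-1/2}\}$ in Lemma \ref{lemma:B>=up}.

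The main obstacle is the tail expansion in the second step: one needs a genuine relative error $1+o(1)$ that is valid uniformly as $T=\log h_n\to\infty$ \emph{and} $v\to\infty$ with $v$ of order $\sqrt{\log(p\log h_n)}$ --- and since $p\lesssim n^{\nu}$ forces $\log(p\log h_n)\asymp\log n$ while $\log T=\log\log h_n\asymp\log\log n$, this $v$ grows much faster than $\sqrt{\log T}$, so the textbook fixed-$T$ Pickands/Darling--Erd\"{o}s statement does not apply verbatim and the uniform version (with a check that we are in the regime $Tv\phi(v)\to 0$ but not exponentially fast) is what must be established. The time change and the elementary expansions in the remaining steps are routine.
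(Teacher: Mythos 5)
Your argument is essentially the paper's: the same time change to the stationary Ornstein--Uhlenbeck process over $[0,\log h_n]$ (which the paper simply cites as (A.3.19) of \cite{csorgo1997limit}), followed by the same expansion of $v_{p,n}$ showing $\log h_n \cdot v_{p,n}\phi(v_{p,n}) = w/p+o(p^{-1})$, and the same observation that a perturbation of order $o\big(\{\log(p\log h_n)\}^{-1/2}\big)=o(1/v_{p,n})$ leaves the tail unchanged. The one obstacle you flag --- needing a relative-error tail expansion valid jointly as $T=\log h_n\to\infty$ and $v\asymp\sqrt{\log(p\log h_n)}\to\infty$ --- is exactly what the paper's Lemma \ref{lem:V(t)} (Theorem A.3.3 of \cite{csorgo1997limit}) supplies, since its error term $\{T-x^{-2}T+4x^{-2}+O(x^{-4})\}=T\{1+o(1)\}$ holds for all $T>0$ as $x\to\infty$; citing that result closes your gap and completes the proof.
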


\begin{proof}
Let $v'_{p,n}=v_{p,n}+\eta_{p,n}$ for some $\eta_{p,n}$ to be specified soon. Since $h_n\to\infty$, according to (A.3.19) in \cite{csorgo1997limit} and by Lemma \ref{lem:V(t)}, we have
\begin{align*}
    \pr\left(\mathcal{M}(\lambda_n)\geq v'_{p,n}\right)
    &= \pr\left(\sup_{0\leq t\leq\log h_n}|\mathcal{V}(t)|\geq v'_{p,n}\right)\\
    &= \frac{v'_{p,n}\exp(-{v'}_{p,n}^2/2)}{\sqrt{2\pi}}\log h_n\{1+o(1)\}.
\end{align*}
By choosing $\eta_{p,n}=o\big(\{\log(p\log h_n)\}^{-1/2}\big)$, it can be verified that
\begin{align*}
    \frac{v'_{p,n}\exp(-{v'}_{p,n}^2/2)}{\sqrt{2\pi}}\log h_n
    &= \exp\left\{-x-\log p-\sqrt{2\log(p\log h_n)}\xi_{p,n}+o(1)\right\}\\
    &= \frac{\exp(-x)}{p}\{1+o(1)\}.
\end{align*}
Hence the conclusion follows.
\end{proof}

\begin{lemma}\label{lemma:M-Mn}
If $\log p=o(n^{\frac{\epsilon}{2+\upsilon}})$ for some $\upsilon>0$, then
\begin{align*}
    0\leq \pr(\mathcal{M}(\lambda_n)\geq v_{p,n}+\eta_{p,n}) - \pr(\mathcal{M}_n(\lambda_n)\geq v_{p,n}+\eta_{p,n}) = o(p^{-1}),
\end{align*}
for any $\eta_{p,n}=o\big(\{\log(p\log h_n)\}^{-1/2}\big)$.
\end{lemma}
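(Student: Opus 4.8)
The plan is to mimic the proof of Lemma~\ref{lemma:B-Bn}, but carried out on the Ornstein--Uhlenbeck time scale, because near the boundary $t\approx\lambda_n/n$ the weight $\{t(1-t)\}^{-1/2}$ is large and a direct comparison in the $t$-variable is awkward. First I would invoke the classical representation of the weighted Brownian bridge: writing $B(t)=\mathcal{W}_t-t\mathcal{W}_1$, one has $\{B(t)\}_{0\le t<1}\stackrel{d}{=}\{(1-t)\widetilde{\mathcal{W}}(t/(1-t))\}_{0\le t<1}$ for a standard Brownian motion $\widetilde{\mathcal{W}}$ (see, e.g., \cite{csorgo1997limit}), so that $Y(t):=\{t(1-t)\}^{-1/2}B(t)$ satisfies, as a process, $\{Y(t)\}\stackrel{d}{=}\{\mathcal{V}(\log\{t/(1-t)\})\}$, where $\mathcal{V}(u)=e^{-u/2}\widetilde{\mathcal{W}}(e^u)$ is exactly the stationary OU process with $\E\{\mathcal{V}(u)\mathcal{V}(u')\}=\exp(-|u-u'|/2)$. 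With $L:=\tfrac12\log h_n$ and the grid $s_k:=\log\{k/(n-k)\}$ (note $s_{\lambda_n}=-L$, $s_{n-\lambda_n}=L$), the map $t\mapsto\log\{t/(1-t)\}$ carries $[\lambda_n/n,1-\lambda_n/n]$ increasingly onto $[-L,L]$ and $k/n$ to $s_k$, hence jointly
\[
(\mathcal{M}(\lambda_n),\,\mathcal{M}_n(\lambda_n))\stackrel{d}{=}\Big(\sup_{|s|\le L}|\mathcal{V}(s)|,\ \max_{\lambda_n\le k\le n-\lambda_n}|\mathcal{V}(s_k)|\Big).
\]
In particular $\mathcal{M}_n(\lambda_n)\le\mathcal{M}(\lambda_n)$ almost surely, which gives the left inequality in the lemma and $\pr(\mathcal{M}_n(\lambda_n)\ge v)\le\pr(\mathcal{M}(\lambda_n)\ge v)$ for all $v$.

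The heart of the proof is to show that, for $y:=\{\log(p\log h_n)\}^{-(1+\upsilon)/2}$,
\[
\pr\big(\mathcal{M}(\lambda_n)-\mathcal{M}_n(\lambda_n)\ge y\big)=o(p^{-1}).
\]
Since $\{s_k\}$ partitions $[-L,L]$, the displayed identity yields $\mathcal{M}(\lambda_n)-\mathcal{M}_n(\lambda_n)\le\max_{\lambda_n\le k\le n-\lambda_n-1}\sup_{s_k\le s\le s_{k+1}}|\mathcal{V}(s)-\mathcal{V}(s_k)|$, so by a union bound over the at most $n$ subintervals it suffices to bound $\pr(\sup_{s_k\le s\le s_{k+1}}|\mathcal{V}(s)-\mathcal{V}(s_k)|\ge y)$ uniformly in $k$. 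A direct computation gives $s_{k+1}-s_k=\log\big(1+n/\{k(n-k-1)\}\big)$, so the mesh $\delta_*:=\max_{\lambda_n\le k\le n-\lambda_n-1}(s_{k+1}-s_k)$ is attained near the endpoints and satisfies $\delta_*\lesssim 1/\lambda_n\sim n^{-\epsilon}$. For fixed $k$, $\{\mathcal{V}(s)-\mathcal{V}(s_k):s\in[s_k,s_{k+1}]\}$ is a centred Gaussian process with supremal variance $\sup_s\var(\mathcal{V}(s)-\mathcal{V}(s_k))=2(1-e^{-(s_{k+1}-s_k)/2})\le\delta_*$, and a chaining (Dudley) bound shows its expected supremum is of order $\sqrt{\delta_*}=o(y)$ because $y$ decays only polylogarithmically. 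Hence, for $n$ large, the Borell--TIS (Gaussian concentration) inequality gives $\pr(\sup_{s_k\le s\le s_{k+1}}|\mathcal{V}(s)-\mathcal{V}(s_k)|\ge y)\le 2\exp(-y^2/(8\delta_*))\le 2\exp(-c\,y^2\lambda_n)$ for some constant $c>0$, and therefore $\pr(\mathcal{M}(\lambda_n)-\mathcal{M}_n(\lambda_n)\ge y)\le 2n\exp(-c\,y^2\lambda_n)$.

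It then remains to verify $2n\exp(-c\,y^2\lambda_n)=o(p^{-1})$ and to assemble the pieces. Since $h_n=\{(n-\lambda_n)/\lambda_n\}^2\le n^2$ for large $n$, we have $\log(p\log h_n)\lesssim\log p+\log\log n$, so $y^2\lambda_n\gtrsim n^{\epsilon}\{\log(p\log h_n)\}^{-(1+\upsilon)}$, while $(\log n+\log p)\{\log(p\log h_n)\}^{1+\upsilon}\lesssim(\log n+\log p)^{2+\upsilon}=o(n^{\epsilon})$ using $\log p=o(n^{\epsilon/(2+\upsilon)})$ together with $(\log n)^{2+\upsilon}=o(n^{\epsilon})$; hence $\log n+\log p-c\,y^2\lambda_n\to-\infty$. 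Finally, set $v:=v_{p,n}+\eta_{p,n}$ and note that $\eta_{p,n}$ and $\eta_{p,n}+y$ are both $o(\{\log(p\log h_n)\}^{-1/2})$ (the second because $\upsilon>0$). Splitting $\pr(\mathcal{M}(\lambda_n)\ge v+y)\le\pr(\mathcal{M}_n(\lambda_n)\ge v)+\pr(\mathcal{M}(\lambda_n)-\mathcal{M}_n(\lambda_n)\ge y)$ and combining with $\pr(\mathcal{M}_n(\lambda_n)\ge v)\le\pr(\mathcal{M}(\lambda_n)\ge v)$ gives the sandwich
\[
\pr(\mathcal{M}(\lambda_n)\ge v+y)-o(p^{-1})\le\pr(\mathcal{M}_n(\lambda_n)\ge v)\le\pr(\mathcal{M}(\lambda_n)\ge v),
\]
and Lemma~\ref{lemma:M>=vpn}, applied to both $v$ and $v+y$, shows the two extremes equal $w/p+o(p^{-1})$; hence $0\le\pr(\mathcal{M}(\lambda_n)\ge v)-\pr(\mathcal{M}_n(\lambda_n)\ge v)=o(p^{-1})$, which is the assertion. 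I expect the main obstacle to be precisely the discretization estimate in the second paragraph: the weight blows up at the boundary, and only after passing to the OU scale does the grid become provably fine enough ($\delta_*\asymp n^{-\epsilon}$) for the Gaussian oscillation over each cell to reach the tiny level $y$ with probability so small that summing over $\sim n$ cells still beats $p^{-1}$ --- and this is exactly where the hypothesis $\log p=o(n^{\epsilon/(2+\upsilon)})$, in conjunction with $\lambda_n\sim n^{\epsilon}$, is used.
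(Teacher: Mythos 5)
Your proof is correct, but it takes a genuinely different route from the paper's. The paper stays on the original time scale: it bounds the weight $\{t(1-t)\}^{-1/2}$ uniformly by $\{(\lambda_n/n)(1-\lambda_n/n)\}^{-1/2}$, controls the oscillation of $\mathcal{W}$ over each cell of length $1/n$ via the elementary tail bound $\pr(\sup_{0\le t\le 1}|\mathcal{W}_t|\ge x)\lesssim x^{-1}e^{-x^2/2}$ (Lemma \ref{lem:sup_Wt}), and arrives at the same quantitative estimate $n\exp\{-c\,\lambda_n y^2\}=o(p^{-1})$; it then appeals to ``arguments similar to Lemma \ref{lemma:B-Bn}'' for the sandwich. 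You instead pass to the Ornstein--Uhlenbeck scale via $\{t(1-t)\}^{-1/2}B(t)\stackrel{d}{=}\mathcal{V}(\log\{t/(1-t)\})$, reducing the problem to the oscillation of a stationary OU process over a grid of mesh $\delta_*\lesssim 1/\lambda_n$, handled by Dudley plus Borell--TIS. Both arguments hinge on the same tradeoff ($\lambda_n y^2\gg\log(np)$, which is where $\log p=o(n^{\epsilon/(2+\upsilon)})$ enters), and your final sandwich via Lemma \ref{lemma:M>=vpn} at the two levels $v$ and $v+y$ matches the paper's. What your version buys is that the change of the weight $\{t(1-t)\}^{-1/2}$ between $t$ and $k/n$ --- which the paper's cell-by-cell decomposition must still account for separately and in fact leaves implicit in its ``similar arguments'' step --- is absorbed automatically into the OU process, so the discretization error is a single Gaussian-oscillation term; the cost is the heavier machinery (chaining and Gaussian concentration versus a one-line reflection bound). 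Note also that the a.s.\ inequality $\mathcal{M}_n(\lambda_n)\le\mathcal{M}(\lambda_n)$ already holds on the original space without the distributional identity, though invoking it does no harm.
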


\begin{proof}
Let $y=\{\log (p\log h_n)\}^{-(1+\upsilon)/2}$. Notice that $\inf_{\lambda_n/n\leq t\leq 1-\lambda_n/n}t(1-t) = \lambda_n/n(1-\lambda_n/n)$. By using arguments similar to those in the proof of Lemma \ref{lemma:B-Bn}, we have
\begin{align*}
    \pr&\left(\max_{\lambda_n\leq k\leq n-\lambda_n}\sup_{\frac{k}{n}\leq t\leq\frac{k+1}{n}}\frac{|\mathcal{W}_t-\mathcal{W}_{\frac{k}{n}}|}{\sqrt{t(1-t)}}\geq y\right)\\
    &\leq \pr\left({
    \max_{\lambda_n\leq k\leq n-\lambda_n}\sup_{\frac{k}{n}\leq t\leq\frac{k+1}{n}}{|\mathcal{W}_t-\mathcal{W}_{\frac{k}{n}}|}}\geq y{\sqrt{\lambda_n/n(1-\lambda_n/n)}}\right)\\
    &\lesssim n\exp\{-n\lambda_n/n(1-\lambda_n/n)y^2/2\} = o(p^{-1}),
\end{align*}
if $\log p=o(n^{\frac{\epsilon}{2+\upsilon}})$.
Consequently, by using arguments similar to those in the proof of Lemma \ref{lemma:B-Bn},
\begin{align*}
    \pr\left(\mathcal{M}(\lambda_n) -
    \max_{\lambda_n\leq k\leq n-\lambda_n}\sup_{\frac{k}{n}\leq t\leq\frac{k+1}{n}}\frac{|\mathcal{W}_{\frac{k}{n}}-\frac{k}{n}\mathcal{W}_1|}{\sqrt{t(1-t)}} \geq y
    \right) = o(p^{-1}).
\end{align*}

Let $v'_{p,n}=v_{p,n}+\eta_{p,n}$. Again, by using arguments similar to those in the proof of Lemma \ref{lemma:B-Bn}, the conclusion follows.
\end{proof}

\begin{proposition}\label{null:MMnp-Gaussian}
If Assumption \ref{asmp:corr_spatial} holds and $\log p=o(n^{\frac{\epsilon}{2+\upsilon}})$ for some $\upsilon>0$, then
\begin{align*}
    \lim_{(n,p)\to\infty} \pr\left(M^{\dagger}_{n,p}\leq \frac{x+D(p\log h_n)}{A(p\log h_n)}\right) = \exp\{-\exp(-x)\},.
\end{align*}
\end{proposition}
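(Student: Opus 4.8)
The plan is to reproduce the architecture of the proof of Proposition \ref{null:Mnp-Gaussian} with the unweighted maxima $Z_{0,j}$ replaced by the weighted maxima $Z_{0.5,j}=\max_{\lambda_n\leq k<n-\lambda_n}|C_{0.5,j}(k)|\stackrel{d}{=}\mathcal{M}_n(\lambda_n)$, the threshold $u_p$ replaced by $v_{p,n}=\{x+D(p\log h_n)\}/A(p\log h_n)$, and the marginal-tail Lemmas \ref{lemma:B>=up}--\ref{lemma:B-Bn} replaced by their weighted counterparts \ref{lemma:M>=vpn}--\ref{lemma:M-Mn}. First I would split $\{1,\ldots,p\}$ into $C_p$ and $D_p=C_p^c$; since $|C_p|/p\to 0$, a union bound together with Lemmas \ref{lemma:M>=vpn}--\ref{lemma:M-Mn} gives $\pr(\max_{j\in C_p}Z_{0.5,j}>v_{p,n})\to 0$, so it suffices to show $\pr(\max_{j\in D_p}Z_{0.5,j}>v_{p,n})\to 1-\exp\{-\exp(-x)\}$. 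Then apply the inclusion--exclusion principle: for every $k\geq 1$, $\sum_{t=1}^{2k}(-1)^{t-1}\beta_t\leq \pr(\max_{j\in D_p}Z_{0.5,j}>v_{p,n})\leq \sum_{t=1}^{2k+1}(-1)^{t-1}\beta_t$, where $\beta_t=\sum_{j_1<\cdots<j_t,\,j_\ell\in D_p}\pr(\bigcap_{\ell=1}^t\{Z_{0.5,j_\ell}>v_{p,n}\})$; the whole proof then reduces to showing $\beta_t\to\frac{1}{t!}\exp(-tx)$, after which letting $k\to\infty$ and Taylor-expanding $1-\exp(-x)$ closes the argument.

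The genuinely new analytic input is the joint tail of $(Z_{0.5,j_1},\ldots,Z_{0.5,j_m})$, i.e.\ weighted analogues of Lemmas \ref{lemma:Z0j_weak} and \ref{lemma:Z0j_strong}. Here I would exploit Gaussianity exactly as in those lemmas: conditionally on coordinate $j_1$, write $X_{ij_\ell}=\rho_{j_1j_\ell}X_{ij_1}+\sqrt{1-\rho_{j_1j_\ell}^2}\,\widetilde X_{ij_\ell}$ with $\widetilde X_{ij_\ell}\sim N(0,1)$ independent of $X_{ij_1}$ (Lemma \ref{lem:X2-rX1}), and use that the functional $\sup_{\lambda_n/n\leq t\leq 1-\lambda_n/n}\{t(1-t)\}^{-1/2}|\,\cdot\,|$ is subadditive, whence $Z_{0.5,j_\ell}\leq\sqrt{1-\rho_{j_1j_\ell}^2}\,\widetilde Z_{0.5,j_\ell}+|\rho_{j_1j_\ell}|Z_{0.5,j_1}$. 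Conditioning on $\{Z_{0.5,j_1}\leq C v_{p,n}\}$ against its complement, using the robustness clause of Lemma \ref{lemma:M>=vpn} (the threshold may be shifted by $o(\{\log(p\log h_n)\}^{-1/2})$, and $\delta_p v_{p,n}$ is of that order under Assumption \ref{asmp:corr_spatial}), the independence of $Z_{0.5,j_1}$ from $\{\widetilde Z_{0.5,j_\ell}\}_{\ell\geq 2}$, and the marginal tail $\pr(\mathcal{M}(\lambda_n)>C v_{p,n})=O(p^{-C^2})$ (immediate from the Ornstein--Uhlenbeck tail asymptotics underlying Lemma \ref{lemma:M>=vpn}), an induction on $m$ yields $(p/w)^m\pr(\bigcap_{\ell=1}^m\{Z_{0.5,j_\ell}>v'_{p,n}\})\to 1$ when all pairwise $|\rho_{j_rj_s}|\leq\delta_p$, and a decay of order $p^{-(m-\zeta)}$ with $\zeta=2-C_0^2\in(0,1)$ when exactly one pair has correlation bounded only by $\varrho$. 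The discrete-to-continuous passage from $\mathcal{M}_n(\lambda_n)$ to $\mathcal{M}(\lambda_n)$ needed throughout is furnished by Lemma \ref{lemma:M-Mn}, whose hypothesis $\log p=o(n^{\epsilon/(2+\upsilon)})$ is precisely what is assumed in the proposition.

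With the two weighted lemmas available, the combinatorial bookkeeping is verbatim that of Steps 3--6 of the proof of Proposition \ref{null:Mnp-Gaussian}: partition the tuples of $D_p^t$ into $F_t$ (all pairs weakly correlated) and $G_t$ (some pair strongly correlated); the graph-counting estimates of Lemma \ref{lem:Feng} give $|F_t|/p^t\to 1/t!$ and $|G_t|\leq p^{t+\kappa-1}$, and the finer stratification $G_t=\bigcup_j G_{t,j}$, $G_{t,j}=H_{t,j}\cup H'_{t,j}$, with $|G_{t,0}|\leq p^{1+t\kappa}$, $|H_{t,j}|\leq t^t p^{j-1+(t-j+1)\kappa}$, $|H'_{t,j}|\leq t^t p^{j+(t-j)\kappa}$; feeding the weak lemma into the $F_t$ and $H_{t,j}$ sums and the strong lemma into the $G_{t,0}$ and $H'_{t,j}$ sums shows the $F_t$ contribution tends to $\frac{1}{t!}\exp(-tx)$ and every $G_t$ contribution vanishes because $\kappa\to 0$. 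The main obstacle I anticipate is establishing the weighted joint-tail lemmas: verifying subadditivity of the weighted supremum functional, checking that the multiplicative perturbation $\delta_p v_{p,n}$ stays inside the robustness window of Lemma \ref{lemma:M>=vpn} (more delicate than for $M_{n,p}$ since $v_{p,n}$ now carries the extra $\log h_n$ factor), and pinning down the $O(p^{-C^2})$ marginal tail with the right constant so that $\zeta=2-C_0^2$ lands in $(0,1)$; everything else is a transcription of the $\gamma=0$ argument.
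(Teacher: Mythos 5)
Your proposal is correct and follows exactly the route the paper intends: the paper's own proof of this proposition is a one-line reference back to the argument for Proposition \ref{null:Mnp-Gaussian}, and your expansion — reduction to $D_p$ via Lemmas \ref{lemma:M>=vpn}--\ref{lemma:M-Mn}, inclusion--exclusion, weighted analogues of Lemmas \ref{lemma:Z0j_weak}--\ref{lemma:Z0j_strong} obtained from the Gaussian decomposition and subadditivity of the weighted supremum, and the graph-counting bookkeeping of Lemma \ref{lem:Feng} — is precisely the transcription the authors leave implicit. Your checks that $\delta_p v_{p,n}=o(\{\log(p\log h_n)\}^{-1/2})$ fits the robustness window and that the Ornstein--Uhlenbeck tail gives the $O(p^{-C^2})$ bound are exactly the points where the weighted case requires new verification.
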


\begin{proof}
By using arguments similar to those in the proof of Proposition \ref{null:Mnp-Gaussian}, the conclusion follows.
\end{proof}

\subsection{For non-Gaussian observation under Assumption \ref{asmp:corr_temporal}}\label{M-non-Gaussian}

\subsubsection{Asymptotic null distribution of $M_{n,p}$}

By Lemma E.6 in \cite{10.1214/15-aos1347}, we can select $\widehat{\sigma}_j$'s such that
\begin{align*}
    \pr\left(\max_{1\leq j\leq p}\left|\widehat{\sigma}_{j}-\sigma_{j}\right|\geq (2\log p)^{-4}\right)\lesssim n^{-C},\ C>0.
\end{align*}
By using arguments similar to those in Step 4 in the proof of Theorem C.4 in \cite{10.1214/15-aos1347}, it suffices to show that
\[
    \lim _{(n,p)\to\infty} \pr\left(\max_{1\leq j\leq p}\max_{1\leq k\leq n}|C_{0,j}(k)|/\sigma_{j}\leq u_{p}\{\exp(-x)\}\right)
    = \exp\{-\exp(-x)\}.
\]
According to Steps 1--3 in the proof of Theorem C.4 in \cite{10.1214/15-aos1347}, we have
\begin{align*}
    \left|\pr\left(\max_{1\leq j\leq p}\max_{1\leq k\leq n}|C_{0,j}(k)|/\sigma_{j}\leq u_{p}\{\exp(-x)\}\right)
    -\pr\left(\max_{1\leq j\leq p}\max_{1\leq k\leq n}|\widetilde{C}_{0,j}(k)|/\sigma_{j}\leq u_{p}\{\exp(-x)\}\right)\right| = o(1),
\end{align*}
where
\begin{align*}
    \widetilde{C}_{0,j}(k)=n^{-1/2}\left(\sum_{i=1}^{k}Y_{ij}-\frac{k}{n}\sum_{i=1}^{n}Y_{ij}\right),
\end{align*}
and $Y_{i}=(Y_{i1},\ldots,Y_{ip})^\T\sim N(0,\Sigma)$ for $i=1,\cdots,n$.
Consequently, the conclusion follows.

\subsubsection{Asymptotic null distribution of $M^{\dagger}_{n,p}$}

Again, it suffices to show that
\[
    \lim _{(n,p)\to\infty} \pr\left(\max_{1\leq j\leq p}\max_{\lambda_n\leq k\leq n-\lambda_n}|C_{0.5,j}(k)|/\sigma_{j}\leq v_{p,n}\right)
    = \exp\{-\exp(-x)\},
\]
where $v_{p,n}=\frac{x+D(p\log h_n)}{A(p\log h_n)}$.

Define $\mathfrak{t}_{n}=tn$ for $0<t<0.5$. Let $g_{p,n}=\sqrt{2\log(p\log h_n)}$. Obviously, $v_{p,n}\sim g_{p,n}$. Notice that
\begin{align*}
    &\pr\left(\max_{\lambda_n\leq k\leq n-\lambda_n}\left|\sqrt{\frac{k(n-k)}{n}}\left(\frac{1}{k}\sum_{i=1}^{k}X_{ij}-\frac{1}{n-k}\sum_{i=k+1}^{n} X_{ij}\right)\right|\geq \sigma_{j}g_{p,n}/2\right)\\
    &\leq \pr\left(\max_{\lambda_n\leq k\leq n-\lambda_n}|\sum_{i=1}^{k}X_{ij}|\geq\sqrt{\frac{nk}{n-k}}\sigma_{j}g_{p,n}/4\right) +
    \pr\left(\max_{\lambda_n\leq k\leq n-\lambda_n}|\sum_{i=k+1}^{n}X_{ij}|\geq\sqrt{\frac{n(n-k)}{k}}\sigma_{j}g_{p,n}/4\right)\\
    &\leq \pr\left(\max_{\lambda_n\leq k\leq n-\lambda_n}|\sum_{i=1}^{k}X_{ij}|\geq\sqrt{\frac{n\lambda_n}{n-\lambda_n}}\sigma_{j}g_{p,n}/4\right) +
    \pr\left(\max_{\lambda_n\leq k\leq n-\lambda_n}|\sum_{i=k+1}^{n}X_{ij}|\geq\sqrt{\frac{n(n-\mathfrak{t}_{n})}{\mathfrak{t}_{n}}}\sigma_{j}g_{p,n}/4\right).
\end{align*}
By Lemma E.3 in \cite{10.1214/15-aos1347}, we have
\begin{align*}
    \pr\left(\max_{\lambda_n\leq k\leq \mathfrak{t}_{n}}|\sum_{i=1}^{k}X_{ij}|\geq\sqrt{\frac{n\lambda_n}{n-\lambda_n}}\sigma_{j}g_{p,n}/4\right)
    \lesssim \frac{\mathfrak{t}_{n}-\lambda_n}{\left(\sqrt{\frac{n\lambda_n}{n-\lambda_n}}\underline{\sigma}g_{p.n}\right)^q}
    \lesssim \frac{n}{\lambda_n^{q/2}g_{p,n}^q},
\end{align*}
where we recall $\underline{\sigma}=\inf_{1\leq j\leq p}\sigma_{j}$. Similarly,
\begin{align*}
    \pr\left(\max_{\lambda_n\leq k\leq \mathfrak{t}_{n}}|\sum_{i=k+1}^{n}X_{ij}|\geq\sqrt{\frac{n(n-\mathfrak{t}_{n})}{\mathfrak{t}_{n}}}\sigma_{j}g_{p,n}/4\right)
    \lesssim \frac{n}{n^{q/2}g_{p,n}^q}.
\end{align*}
Hence, we conclude that
\begin{align}\label{trun}
    \pr\left(\max_{1\leq j\leq p}\max_{\lambda_n\leq k\leq \mathfrak{t}_{n}}|C_{0.5,j}(k)|/\sigma_{j}\leq g_{p,n}/2\right)
    \lesssim \frac{p}{n^{q\epsilon/2-1}g_{p,n}^q}\to 0,
\end{align}
if $p=o(n^{q\epsilon/2-1})$.
By using similar arguments, we obtain that
\begin{align*}
    \pr\left(\max_{1\leq j\leq p}\max_{n-\mathfrak{t}_{n}\leq k\leq n-\lambda_n}|C_{0.5,j}(k)|/\sigma_{j}\leq g_{p,n}/2\right)
    \lesssim \frac{p}{n^{q\epsilon/2-1}g_{p,n}^q}\to 0.
\end{align*}

Next, for $t\in[0,1]$ and $1\leq j\leq p$, let
\[
    C_{j}^{\sigma}(t) = \frac{1}{\sqrt{n}\sigma_{j}}\left(\sqrt{\frac{n-\lceil nt\rceil}{\lceil nt\rceil}}\sum_{i=1}^{\lceil nt\rceil}X_{ij}-\sqrt{\frac{\lceil nt\rceil}{n-\lceil nt\rceil}}\sum_{i=1}^{n}X_{ij}\right).
\]
Put $t_{n}^{-}=\left\lfloor\mathfrak{t}_{n}\right\rfloor/n$, $t_{n}^{+}=1-\left\lfloor\mathfrak{t}_{n}\right\rfloor/n$ and $t_{i}=i/n$. Then we denote with
\begin{align*}
v_{n, j}(\mathfrak{t}) &= \left(C_{j}^{\sigma}\left(t_{n}^{-}\right), -C_{j}^{\sigma}\left(t_{n}^{-}\right), C_{j}^{\sigma}\left(t_{n}^{-}+t_{1}\right), -C_{j}^{\sigma}\left(t_{n}^{-}+t_{1}\right), \ldots, -C_{j}^{\sigma}\left(t_{n}^{+}\right)\right)^\T\\
v_{n, j}^{-}(\mathfrak{t}) &= \left(C_{j}^{\sigma}\left(t_{[n^\epsilon]}\right), -C_{j}^{\sigma}\left(t_{[n^\epsilon]}\right), C_{j}^{\sigma}\left(t_{[n^\epsilon]+1}\right), -C_{j}^{\sigma}\left(t_{[n^\epsilon]+1}\right), \ldots, -C_{j}^{\sigma}\left(t_{n}^{-}-t_{1}\right)\right)^\T \\
v_{n, j}^{+}(\mathfrak{t}) &= \left(C_{j}^{\sigma}\left(t_{n}^{+}+t_{1}\right), -C_{j}^{\sigma}\left(t_{n}^{+}+t_{1}\right), C_{j}^{\sigma}\left(t_{n}^{+}+t_{2}\right), -C_{j}^{\sigma}\left(t_{n}^{+}+t_{2}\right), \ldots, -C_{j}^{\sigma}(1-t_{[n^\epsilon]})\right)^\T.
\end{align*}
For $1\leq j\leq p$, we stack these vectors together in single vectors
\[
    v_{n}(\mathfrak{t}) = \left(v_{n, 1}(\mathfrak{t}),\ldots,v_{n, p}(\mathfrak{t})\right)^\T,\quad
    v_{n}^{-}(\mathfrak{t}) = \left(v_{n, 1}^{-}(\mathfrak{t}),\ldots,v_{n, p}^{-}(\mathfrak{t})\right)^\T,\quad
    v_{n}^{+}(\mathfrak{t}) = \left(v_{n, 1}^{+}(\mathfrak{t}),\ldots,v_{n, p}^{+}(\mathfrak{t})\right)^\T.
\]
For each of these vectors, we write $\max v_{n}(\mathfrak{t})$ to denote the overall maximum. Note that the dimension of each of these vectors is bounded by $2pn$. Since for each $1\leq i\leq n$
\[
    \sup _{i/n\leq t\leq (i+1)/n}\left|C_{j}^{\sigma}(t)\right|
    =\max\left\{\left|C_{j}^{\sigma}(i/n)\right|, \left|C_{j}^{\sigma}(i/n+1/n)\right|\right\},
\]
we have
\[
    \pr\left(\max v_{n}^{-}(\mathfrak{t}), \max v_{n}(\mathfrak{t}), \max v_{n}^{+}(\mathfrak{t}) \leq v_{p,n}\right)
    =\pr\left(\max_{1\leq j\leq p}\max_{\lambda_n\leq k\leq n-\lambda_n}|C_{0.5,j}(k)|/\sigma_{j} \leq v_{p,n}\right).
\]
Next, observe that for any fixed $x$ we have $v_{p,n}\geq g_{p,n}/2$ for large enough $p$. It follows that for small enough $\mathfrak{t}=\mathfrak{t}_{0}>0$, we have for any fixed $x$
\[
    \pr\left(\max v_{n}^{-}\left(\mathfrak{t}_{0}\right), \max v_{n}^{+}\left(\mathfrak{t}_{0}\right) \leq v_{p,n}\right)=1-o(1).
\]
Hence in order to prove Theorem \ref{null:Max}, it suffices to consider the probability
$$
    \pr\left(\max v_{n}\left(\mathfrak{t}_{0}\right) \leq v_{p,n}\right).
$$

Define
\begin{align*}
    \widetilde{C}_{0.5,j}(k)=\sqrt{\frac{k(n-k)}{n}}\left(\frac{1}{k}\sum_{i=1}^{k}Y_{ij}-\frac{1}{n-k}\sum_{i=1}^{n}Y_{ij}\right),
\end{align*}
where $Y_i=(Y_{i1},\ldots,Y_{ip})^\T\sim N(0,\Sigma)$ independently.
By using similar arguments to those in Step 2 in the proof of Theorem C.4 in \cite{10.1214/15-aos1347}, we have
$$
    \left|\pr\left(\max v_{n}\left(\mathfrak{t}_{0}\right) \leq v_{p,n}\right)-\pr\left(\max_{1\leq j\leq p}\max_{t_n\leq k\leq n-t_n}|\widetilde{C}_{0.5,j}(k)|/\sigma_{j}\leq v_{p,n}\right)\right|
    \lesssim n^{-C} \to 0,
$$
for some positive constant $C>0$. Similar to \eqref{trun}, we can show
$$
    \left|\pr\left(\max_{1\leq j\leq p}\max_{\lambda_n\leq k\leq n-\lambda_n}|\widetilde{C}_{0.5,j}(k)|/\sigma_{j} \leq v_{p,n}\right) -
    \pr\left(\max_{1\leq j\leq p}\max_{t_n\leq k\leq n-t_n} |\widetilde{C}_{0.5,j}(k)|/\sigma_{j} \leq v_{p,n}\right)\right|\to 0.
$$
Consequently, the conclusion follows.

\section{Proof of Theorem \ref{null:DMS}}

\subsection{For Gaussian data sequences}\label{indep:Gaussian}

We first investigate the asymptotic independence of $M_{n,p}$ and $\{S_{n,p}-(n+2)p\}/V_{p,n}^{1/2}$ under $H_0$ if $X_i\sim N(0,\Sigma)$.
Let $\xi_i=\Sigma^{1/2}\varepsilon_i$ for $i=1,\ldots,n$. According to the proof of Theorem 1 in \cite{10.1142/s201032631950014x}, we have
\begin{align*}
    S_{n,p} &\ = \sum_{i=1}^n a_i\xi_i^\T\xi_i+\sum_{i=1}^{n-1}\sum_{k=2}^n b_{i,k}\xi_i^\T\xi_k+o_p\Big(\sqrt{\var(S_{n,p})}\Big)\\
    &:= S^*_{n,p}+o_p\Big(\sqrt{\var(S_{n,p})}\Big),
\end{align*}
where $a_i=\sum_{l=1}^{i-1}(n-l)^{-1}+\sum_{l=i}^{n-1}l^{-1}+n^{-1}-1$ and $b_{i,k}=2\{\sum_{l=1}^{i-1}(n-l)^{-1}+\sum_{l=k}^{n-1}l^{-1}+n^{-1}-1\}$.
Notice that $V_{n,p}$ is a consistent estimator of $\var(S_{n,p})$, see Proposition 1 in \cite{10.1142/s201032631950014x}.
In addition, $\max_{1\leq j\leq p}|\widehat{\sigma}_j-\sigma_j|=O(\sqrt{\log p/n})$ (see Lemma A.2 in \cite{10.1142/s201032631950014x}) and thus
\begin{align*}
    |\max_{1\leq j\leq p}Z_{0,j}/\widehat{\sigma}_j - \max_{1\leq j\leq p}Z_{0,j}/\sigma_j|
    \leq \max_{1\leq j\leq p}Z_{0,j}/\sigma_j \max_{1\leq j\leq p}|\sigma_j/\widehat{\sigma}_j-1|
    = O_p\left(\frac{\log p}{\sqrt{n}}\right)\to 0,
\end{align*}
since $\log p=o(n^{\frac{1}{2+\upsilon}})$.
By Lemma \ref{lem:asy_indep}, it suffices to show that $\{S^*_{n,p}-(n+2)p\}/\nu_{np}$ and $\max_{1\leq j\leq p} \max_{1\leq k\leq n}|C_{0,j}(k)|/\sigma_j$ are asymptotically independent, where $\nu_{n,p}=\sqrt{\var(S_{n,p})}$.

For any fixed $x,y\in\mathbb{R}$, define $A_p:=A_p(x)=\Big\{\frac{S^*_{n,p}-(n+2)p}{\nu_{n,p}}\leq x\Big\}$ and $B_j:=B_j(y)=\{Z_{0,j}>u_p\{\exp(-y)\}\}$ for $j=1,\ldots,p$.
According to Theorem 1 in \cite{10.1142/s201032631950014x}, $\pr(A_p)\to \Phi(x)$,
and by Proposition \ref{null:Mnp-Gaussian}, $\pr(\bigcup_{j=1}^p B_j)\to 1-\exp\{-\exp(-y)\}$.
The goal is to prove that
\begin{align*}
    \pr\left(\frac{S^*_{n,p}-(n+2)p}{\nu_{n,p}}\leq x, \max_{1\leq j\leq p} \max_{1\leq k\leq n}|C_{0,j}(k)|/\sigma_j\leq u_p\{\exp(-y)\}\right)\to\Phi(x)\cdot\exp\{-\exp(-y)\},
\end{align*}
or, equivalently,
\begin{align*}
    \pr\left(\bigcup_{j=1}^p A_pB_j\right)\to\Phi(x)\cdot\{1-\exp\{-\exp(-y)\}\}.
\end{align*}

Let for each $d\geq 1$,
\begin{align*}
    \zeta(p,d) &:= \sum_{1\leq j_1<\cdots<j_d\leq p}\big|\pr(A_p B_{j_1}\cdots B_{j_d}) - \pr(A_p)\cdot\pr(B_{j_1}\cdots B_{j_d})\big|
\end{align*}
and
\begin{align*}
    H(p,d) &:= \sum_{1\leq j_1<\cdots<j_d\leq p}\pr(B_{j_1}\cdots B_{j_d}).
\end{align*}
By the \textit{inclusion-exclusion principle}, we observe that, for any integer $k\geq 1$,
\begin{align*}
    \pr\Big(\bigcup_{j=1}^p A_p B_j\Big)
    \leq \sum_{1\leq j_1\leq p}&\pr(A_pB_{j_1}) - \sum_{1\leq j_1<j_2\leq p}\pr(A_pB_{j_1}B_{j_2}) + \cdots\\
    & + \sum_{1\leq j_1<\cdots<j_{2k+1}\leq p}\pr(A_pB_{j_1}\cdots B_{j_{2k+1}})
\end{align*}
and
\begin{align*}
    \pr\Big(\bigcup_{j=1}^p B_j\Big)
    \geq \sum_{1\leq j_1\leq p}&\pr(B_{j_1}) - \sum_{1\leq j_1<j_2\leq p}\pr(B_{j_1}B_{j_2}) + \cdots\\
    & - \sum_{1\leq j_1<\cdots<j_{2k}\leq p}\pr(B_{j_1}\cdots B_{j_{2k}}).
\end{align*}
Then,
\begin{eqnarray*}
    \pr\Big(\bigcup_{j=1}^p A_p B_j\Big)
    &\leq& \pr(A_p)\Big\{\sum_{1\leq j_1\leq p}\pr(B_{j_1}) - \sum_{1\leq j_1<j_2\leq p}\pr(B_{j_1}B_{j_2}) + \cdots\\
    && - \sum_{1\leq j_1<\cdots<j_{2k}\leq p}\pr(B_{j_1}\cdots B_{j_{2k}})\Big\} + \sum_{d=1}^{2k} \zeta(p,d) + H(p,2k+1)\\
    &\leq& \pr(A_p)\pr\Big(\bigcup_{j=1}^p B_j\Big) + \sum_{d=1}^{2k} \zeta(p,d) + H(p,2k+1).
\end{eqnarray*}
By fixing $k$ and letting $p\to\infty$, we obtain
\begin{align*}
    \limsup_{p\to\infty}\pr\left(\bigcup_{j=1}^p A_pB_j\right)
    \leq \Phi(x)\cdot\{1-\exp\{-\exp(-y)\}\} + \lim_{p\to\infty} H(p,2k+1),
\end{align*}
due to Lemma \ref{lem:indep:A-B}.
According to \eqref{alpha_t} and \eqref{alpha_t_limit}, for each $d\geq 1$, $\lim_{p\to\infty} H(p,d) = \frac{1}{d!}\exp(-dx/2)$.
By letting $k\to\infty$,
\begin{align*}
    \limsup_{p\to\infty}\pr\left(\bigcup_{j=1}^p A_pB_j\right)
    \leq \Phi(x)\cdot\{1-\exp\{-\exp(-y)\}\}.
\end{align*}

By using similar arguments, together with the following results from the {inclusion-exclusion principle}, i.e.,
\begin{align*}
    \pr\left(\bigcup_{j=1}^pA_pB_j\right)
    \geq \sum_{1\leq j_1\leq p}\pr(A_pB_{j_1}) &- \sum_{1\leq j_1<j_2\leq p}\pr(A_pB_{j_1}B_{j_2}) + \cdots\nonumber\\
    & - \sum_{1\leq j_1<\cdots<j_{2k}\leq p}\pr(A_pB_{j_1}\cdots B_{j_{2k}}),
\end{align*}
and
\begin{align*}
    \pr\Big(\bigcup_{j=1}^p B_j\Big)
    \leq \sum_{1\leq j_1\leq p}&\pr(B_{j_1}) - \sum_{1\leq j_1<j_2\leq p}\pr(B_{j_1}B_{j_2}) + \cdots\\
    & + \sum_{1\leq j_1<\cdots<j_{2k-1}\leq p}\pr(B_{j_1}\cdots B_{j_{2k-1}}).
\end{align*}
we obtain
\begin{align*}
    \liminf_{p\to\infty}\pr\left(\bigcup_{j=1}^p A_pB_j\right)
    \geq \Phi(x)\cdot\{1-\exp\{-\exp(-y)\}\}.
\end{align*}
Hence the conclusion follows.

\begin{lemma}\label{lem:indep:A-B}
If the conditions in Theorem \ref{null:DMS} hold, then for each $d\geq 1$, $\zeta(p,d)\to 0$.
\end{lemma}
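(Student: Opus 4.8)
The plan is to show that, uniformly over index sets $J=\{j_1<\cdots<j_d\}$ of fixed size $d$, the sum-event $A_p=\{(S^{*}_{n,p}-(n+2)p)/\nu_{n,p}\le x\}$ may be replaced by an event $\widetilde A^{(J)}_p$ that is \emph{independent} of $\mathcal F_J:=\sigma(\xi_{ij}:1\le i\le n,\ j\in J)$, with an error that vanishes after summing over the $\binom{p}{d}$ such $J$. Throughout $X_i\sim N(0,\Sigma)$ with unit-variance coordinates, so $\Sigma=R$ has eigenvalues in a fixed positive interval by Assumption~\ref{asmp:corr_spatial-ev}. Since $\pr(A_p)\to\Phi(x)$ by Theorem~1 in \cite{10.1142/s201032631950014x}, and $H(p,d):=\sum_{1\le j_1<\cdots<j_d\le p}\pr(B_{j_1}\cdots B_{j_d})$ is bounded (it converges by the argument for $\alpha_t$ in the proof of Proposition~\ref{null:Mnp-Gaussian}, as $C_p=\varnothing$ under Assumption~\ref{asmp:corr_spatial-ev}), the claim follows once we produce $\widetilde A^{(J)}_p$ with $\sup_J|\pr(\widetilde A^{(J)}_p)-\Phi(x)|\to 0$ and $\sum_J\pr\big((A_p\bigtriangleup\widetilde A^{(J)}_p)\cap B_{j_1}\cdots B_{j_d}\big)\to 0$, because then
\[
  \zeta(p,d)\le \sum_J\pr\big((A_p\bigtriangleup\widetilde A^{(J)}_p)\cap B_{j_1}\cdots B_{j_d}\big)+\sup_J|\pr(\widetilde A^{(J)}_p)-\pr(A_p)|\cdot H(p,d)\to 0.
\]

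I would build $\widetilde A^{(J)}_p$ by Gaussian projection. For $j\notin J$ write $\xi_{ij}=\mu_{ij}+\eta_{ij}$ with $\mu_{ij}=\E(\xi_{ij}\mid\xi_{i,j_1},\dots,\xi_{i,j_d})=\rho_{j,J}^{\T}R_J^{-1}\xi_{i,J}$ (linear, since the $\xi_i$ are Gaussian) and $\eta_{ij}$ the residual; then the Gaussian array $\{\eta_{ij}:i\le n,\ j\in J^c\}$ is independent of $\mathcal F_J$, has no temporal dependence, and has cross-covariance the Schur complement $\Sigma_{J^c\mid J}$, a rank-$\le d$, eigenvalue-bounded perturbation of $R$. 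Substituting into the quadratic surrogate $S^{*}_{n,p}=\sum_{j=1}^p\xi_{\cdot j}^{\T}A^{\mathrm{sym}}\xi_{\cdot j}$ from the proof of Theorem~1 in \cite{10.1142/s201032631950014x} (here $A^{\mathrm{sym}}$ is the $n\times n$ matrix with diagonal $(a_i)$ and off-diagonal entries from $(b_{i,k})$; harmonic-sum estimates give $\|A^{\mathrm{sym}}\|_F=O(n)$, $\|A^{\mathrm{sym}}\|_{\mathrm{op}}=O(n)$, and $2\tr((A^{\mathrm{sym}})^2)\tr(R^2)=\nu_{n,p}^2\{1+o(1)\}$) and using that $\xi_{\cdot j}^{\T}A^{\mathrm{sym}}\xi_{\cdot j}$ is quadratic in $\xi_{\cdot j}$, I get
\[
  S^{*}_{n,p}-(n+2)p=N_J\,\nu_{n,p}+S_J+L+(c-\E c)+r_J,
\]
where $N_J:=\{\sum_{j\in J^c}\eta_{\cdot j}^{\T}A^{\mathrm{sym}}\eta_{\cdot j}-\E(\cdot)\}/\nu_{n,p}$ is independent of $\mathcal F_J$; $S_J=\sum_{j\in J}\xi_{\cdot j}^{\T}A^{\mathrm{sym}}\xi_{\cdot j}$ and $c=\sum_{j\in J^c}\mu_{\cdot j}^{\T}A^{\mathrm{sym}}\mu_{\cdot j}$ are $\mathcal F_J$-measurable quadratic forms; $L=2\sum_{j\in J^c}\mu_{\cdot j}^{\T}A^{\mathrm{sym}}\eta_{\cdot j}=\sum_k\sum_{j\in J^c}\gamma_{kj}\eta_{kj}$ with $\gamma_{kj}=\sum_l c_{jl}\phi_{kl}$, $\phi_{kl}=2(A^{\mathrm{sym}}\xi_{\cdot,j_l})_k$, $c_{j\cdot}=R_J^{-1}\rho_{j,J}$; and $r_J$ is a deterministic shift of size $O(n)$. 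Take $\widetilde A^{(J)}_p=\{N_J\le x\}$: since $\var(\sum_{j\in J^c}\eta_{\cdot j}^{\T}A^{\mathrm{sym}}\eta_{\cdot j})=2\tr(\Sigma_{J^c\mid J}^2)\tr((A^{\mathrm{sym}})^2)$ with $\tr(\Sigma_{J^c\mid J}^2)=\tr(R^2)+O(1)$ and $\tr(R^2)\to\infty$, the variance of this term is $\nu_{n,p}^2\{1+o(1)\}$; and a Berry--Esseen bound for linear combinations of $\chi^2_1-1$ variates (Lyapunov ratio $O(p^{-1/2})$, uniformly in $J$) gives $\sup_J\sup_t|\pr(N_J\le t)-\Phi(t)|=O(p^{-1/2})$, hence also $\sup_J\sup_t\pr(t<N_J\le t+2\varepsilon)\le C\varepsilon$ for small $\varepsilon$.

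It remains to bound $\sum_J\pr\big((A_p\bigtriangleup\widetilde A^{(J)}_p)\cap B_{j_1}\cdots B_{j_d}\big)$ by a Slutsky-and-tails argument with slack $R'_J:=S_J+L+(c-\E c)+r_J$. Since $\widetilde A^{(J)}_p$ depends only on the $\eta$'s, it is independent of the $\mathcal F_J$-measurable events $B_{j_l}$, so for $\varepsilon>0$,
\[
  \pr\big((A_p\bigtriangleup\widetilde A^{(J)}_p)\cap B_{j_1}\cdots B_{j_d}\big)\le \big[\pr(N_J\le x+\varepsilon)-\pr(N_J\le x-\varepsilon)\big]\pr(B_{j_1}\cdots B_{j_d})+2\,\pr\big(\{|R'_J|>\varepsilon\nu_{n,p}\}\cap B_{j_1}\cdots B_{j_d}\big).
\]
Summed over $J$, the first term is at most $(C\varepsilon+o(1))H(p,d)$. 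For the second: $r_J=o(\varepsilon\nu_{n,p})$; the centred $\mathcal F_J$-measurable pieces $S_J-\E S_J$ and $c-\E c$ are Gaussian quadratic forms with Frobenius and operator norms $O(n)$, so Hanson--Wright yields $\pr(|\cdot|>\varepsilon\nu_{n,p})\le 2\exp(-c\varepsilon\sqrt p)$ and $\binom{p}{d}\cdot 2\exp(-c\varepsilon\sqrt p)\to 0$; and for the cross term, conditionally on $\mathcal F_J$ the variable $L$ is centred Gaussian with variance $\le C\Psi_J$, where $\Psi_J:=\sum_{k,l}\phi_{kl}^2$ is a Gaussian quadratic form with $\E\Psi_J=O(n^2)$ and Frobenius/operator norms $O(n^2)$, so Hanson--Wright gives $\pr(\Psi_J>Mn^2\log p)\le 2p^{-cM}$ and
\[
  \pr\big(\{|L|>\varepsilon\nu_{n,p}\}\cap B_{j_1}\cdots B_{j_d}\big)\le 2\exp\!\Big(-\frac{c'\varepsilon^2p}{M\log p}\Big)\pr(B_{j_1}\cdots B_{j_d})+2p^{-cM};
\]
summing gives $2\exp(-c'\varepsilon^2p/(M\log p))\,H(p,d)+\binom{p}{d}\cdot 2p^{-cM}\to 0$ once $M$ is chosen large relative to $d$. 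Letting $(n,p)\to\infty$ and then $\varepsilon\downarrow 0$ yields $\zeta(p,d)\to 0$.

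The main obstacle is the cross term $L$, which is neither $\mathcal F_J$-measurable nor independent of $\mathcal F_J$: a crude moment or Cauchy--Schwarz bound on $\pr(\{|L|>\varepsilon\nu_{n,p}\}\cap B_{j_1}\cdots B_{j_d})$ loses a factor $p^{(d-1)/2}$ against the tuple count. The fix is that, conditionally on $\mathcal F_J$, $L$ is \emph{exactly} Gaussian with conditional variance governed by the quadratic form $\Psi_J$ of order $n^2\ll\nu_{n,p}^2\asymp n^2p$, and $\Psi_J$ has a polynomial-in-$p$ concentration tail that dominates $\binom{p}{d}$ --- so one needs only the marginal deviation of $\Psi_J$, never the joint law of $(\Psi_J,B_{j_1},\dots,B_{j_d})$. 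The remaining ingredients ($\|A^{\mathrm{sym}}\|_F=O(n)$ from harmonic sums, eigenvalue bounds for the Schur complement, the uniform Berry--Esseen and anti-concentration for $N_J$, matching the leading variance constant, and managing the $(n,p)\to\infty$, $\varepsilon\downarrow0$, $M\uparrow\infty$ limit order) are routine.
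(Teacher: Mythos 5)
Your proposal is correct and follows essentially the same route as the paper's proof: project the coordinates outside $J$ onto the span of $\{\xi_{\cdot j}\}_{j\in J}$ so that the dominant part of $S^{*}_{n,p}$ (your $N_J$, the paper's $S^{*}_1$) is exactly independent of $\mathcal{F}_J$, show the remainder has tails of order $p^{-t}$ with $t\to\infty$ so the union over the $\binom{p}{d}$ tuples is negligible, and conclude with an $\varepsilon$-sandwich plus anti-concentration. The only substantive difference is in how the remainder tails are obtained --- you use Hanson--Wright for the $\mathcal{F}_J$-measurable quadratic forms and the exact conditional Gaussianity of the cross term $L$ given $\mathcal{F}_J$, whereas the paper bounds its terms $\Theta_4$--$\Theta_6$ via exponential-moment/law-of-the-iterated-logarithm estimates for the bilinear martingale pieces; your treatment of the cross term is arguably the cleaner of the two.
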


\begin{proof}
For each $i=1,\ldots,n$, let $\xi_{i,(1)}=(\xi_{i,j_1},\ldots,\xi_{i,j_d})^\T$ and $\xi_{i,(2)}=(\xi_{i,j_{d+1}},\ldots,\xi_{i,j_p})^\T$, and $R_{kl}=\cov(\xi_{i,(k)}, \xi_{i,(l)})$ for $k,l\in\{1,2\}$. By Lemma \ref{lem:X2-rX1}, $\xi_{i,(2)}$ can be decomposed as $\xi_{i,(2)}=U_i+V_i$, where $U_i:=\xi_{i,(2)}-R_{21}R_{11}^{-1}\xi_{i,(1)}$ and $V_i:=R_{21}R_{11}^{-1}\xi_{i,(1)}$ satisfying that $U_i\sim N(0, R_{22}-R_{21}R_{11}^{-1}R_{12})$, $V_i\sim N(0, R_{21}R_{11}^{-1}R_{12})$ and
\begin{align*}
    U_i\ \text{and}\ \xi_{i,(1)}\ \text{are independent}.
\end{align*}
Thus we have
\begin{align*}
    S_{n,p}^* =& \sum_{i=1}^n a_i \xi_i^\T\xi_i + \sum_{i=1}^{k-1}\sum_{k=2}^n b_{i,k} \xi_i^\T\xi_k\\
    =& \left(\sum_{i=1}^n a_i U_i^\T U_i + \sum_{i=1}^{k-1}\sum_{k=2}^n b_{i,k} U_i^\T U_k\right)
    + \sum_{i=1}^n a_i \xi_{i,(1)}^\T\xi_{i,(1)} + 2\sum_{i=1}^n a_i U_i^\T V_i + \sum_{i=1}^n a_i V_i^\T V_i\\
    &+\sum_{i=1}^{k-1}\sum_{k=2}^n b_{i,k} \xi_{i,(1)}^\T\xi_{k,(1)} + 2\sum_{i=1}^{k-1}\sum_{k=2}^n b_{i,k} U_i^\T V_k + \sum_{i=1}^{k-1}\sum_{k=2}^n b_{i,k} V_i^\T V_k\\
    :=& S^*_1+\Theta_1+\Theta_2+\Theta_3+\Theta_4+\Theta_5+\Theta_6 := S^*_1+S^*_2.
\end{align*}
We claim that, for any $\epsilon>0$, there exists a sequence of positive constants $t:=t_p>0$ with $t_p\to\infty$ such that
\begin{align}\label{claim:S2}
    \pr(|\Theta_i|\geq\epsilon \nu_{n,p})\leq p^{-t},\ i=1,\ldots,6,
\end{align}
for sufficiently large $p$. Consequently, $\pr(|S^*_2|/\nu_{n,p}\geq\epsilon)\leq p^{-t}$ for some $t\to\infty$ and sufficiently large $p$.

Write for short $\widetilde{S}^*_1=\{S^*_1-(n+2)p\}/\nu_{n,p}$ and thus $A_p(x)=\Big\{\widetilde{S}^*_1+S^*_2/\nu_{n,p}\leq x\Big\}$. By \eqref{indep:U-xi1}, we have
\begin{align*}
    \pr\big(A_p(x)B_{j_1}\cdots B_{j_d}\big)
    &\leq \pr\big(A_p(x)B_{j_1}\cdots B_{j_d}, |S^*_2|/\nu_{n,p}<\epsilon\big) + p^{-t}\\
    &\leq \pr(\widetilde{S}^*_1\leq x+\epsilon, B_{j_1}\cdots B_{j_d}) + p^{-t}\\
    &= \pr(\widetilde{S}^*_1\leq x+\epsilon)\cdot\pr(B_{j_1}\cdots B_{j_d}) + p^{-t}.
\end{align*}
We also have
\begin{align*}
    \pr(\widetilde{S}^*_1\leq x+\epsilon)
    &\leq \pr(\widetilde{S}^*_1\leq x+\epsilon, |S_2|/\nu_{n,p}<\epsilon) + p^{-t}\\
    &\leq \pr\big(A_p(x+2\epsilon)\big) + p^{-t}.
\end{align*}
Consequently,
\begin{align}\label{indep:A-B-leq}
    \pr\big(A_p(x)B_{j_1}\cdots B_{j_d}\big)
    \leq \pr\big(A_p(x+2\epsilon)\big)\cdot\pr(B_{j_1}\cdots B_{j_d}) + 2p^{-t}.
\end{align}

On the other hand, by the facts that
\begin{align*}
    \pr(\widetilde{S}^*_1\leq x-\epsilon)\cdot\pr(B_{j_1}\cdots B_{j_d})
    &= \pr(\widetilde{S}^*_1\leq x-\epsilon, B_{j_1}\cdots B_{j_d})\\
    &\leq \pr(\widetilde{S}^*_1\leq x-\epsilon, B_{j_1}\cdots B_{j_d}, |S_2|/\nu_{n,p}<\epsilon) + p^{-t}\\
    &\leq \pr\big(A_p(x)B_{j_1}\cdots B_{j_d}\big) + p^{-t}
\end{align*}
and
\begin{align*}
    \pr\big(A_p(x-2\epsilon)\big)
    &\leq \pr\big(A_p(x-2\epsilon), |S_2|/\nu_{n,p}<\epsilon\big) + p^{-t}\\
    &\leq \pr(\widetilde{S}^*_1\leq x-\epsilon) + p^{-t},
\end{align*}
we have
\begin{align}\label{indep:A-B-geq}
    \pr\big(A_p(x)B_{j_1}\cdots B_{j_d}\big)
    \geq \pr\big(A_p(x-2\epsilon)\big)\cdot\pr(B_{j_1}\cdots B_{j_d}) - 2p^{-t}.
\end{align}

By \eqref{indep:A-B-leq} and \eqref{indep:A-B-geq}, we conclude that
\begin{align*}
    \big|\pr\big(A_p(x)B_{j_1}\cdots B_{j_d}\big) - \pr\big(A_p(x)\big)\pr(B_{j_1}\cdots B_{j_d})\big|
    \leq \Delta_{p,\epsilon}\pr(B_{j_1}\cdots B_{j_d}) + 2p^{-t},
\end{align*}
for sufficiently large $p$, where
\begin{align*}
    \Delta_{p,\epsilon} &= |\pr\big(A_p(x)\big) - \pr\big(A_p(x+2\epsilon)\big)| + |\pr\big(A_p(x)\big) - \pr\big(A_p(x-2\epsilon)\big)|\\
    &= \pr\big(A_p(x+2\epsilon)\big) - \pr\big(A_p(x-2\epsilon)\big)
\end{align*}
since $\pr\big(A_p(x)\big)$ is increasing in $x$.
By running over all possible combinations of $1\leq j_1<\cdots<j_d\leq p$,
\begin{align*}
    \zeta(p,d)\leq \Delta_{p,\epsilon}\cdot H(p,d) + 2\binom{p}{d}p^{-t}.
\end{align*}

Since $\pr(A_p)\to \Phi(x)$, $\Delta_{p, \epsilon}\to\Phi(x+2\epsilon)-\Phi(x-2\epsilon)$, as $p\to\infty$. This implies that $\lim_{\epsilon\downarrow 0}\limsup_{p\to\infty}\Delta_{p, \epsilon}=0$.
Since for each $d\geq 1$, $H(d, p)\to \frac{1}{d!}\exp(-dx/2)$ as $p\to\infty$, we have $\limsup_{p\to\infty}H(p,d)<\infty$.
Due to the fact that $\binom{p}{d} \leq p^d$, for fixed $d\geq 1$, first sending $p\to\infty$ and then sending $\epsilon\downarrow 0$, we get $\lim_{p\to\infty}\zeta(p,d)=0$ for each $d\geq 1$. Hence Proposition \ref{lem:indep:A-B} follows.

It remains to prove that the claim \eqref{claim:S2} indeed holds.

\textit{Verification of \eqref{claim:S2}}: Due to the Gaussianity, $\nu_{n,p}\sim\sqrt{\frac{2\pi^2-18}{3}}n\sqrt{\tr(R^2)}$. Then,
\begin{align*}
    \pr(|\Theta_1|\geq\epsilon\nu_{n,p})
    &= \pr\left(\left|\sum_{i=1}^{n} a_{i}\xi_{i,(1)}^{\T}\xi_{i,(1)}\right|\geq \epsilon \nu_{n,p}\right)\\
    &\leq \pr\left(\max_{1\leq i\leq n}\left|\xi_{i,(1)}^{\T}\xi_{i,(1)}\right|\geq C_{\epsilon} \sqrt{\tr(R^2)}\right)\\
    &\leq n\pr\left(\left|\xi_{i,(1)}^{\T}\xi_{i,(1)}\right|\geq C_{\epsilon} \sqrt{\tr(R^2)}\right)
    \leq n\exp(-C_{\epsilon)d^{-1}p^{1/2}},
\end{align*}
where the last inequality follows by Lemma 7.7 in \cite{feng}, and $C_{\epsilon}$ denotes some positive constant depending on $\epsilon$. Similarly,
\begin{align*}
    \pr(|\Theta_2|\geq \epsilon\nu_{n,p})
    &= \pr\left(\left|2\sum_{i=1}^{n}a_{i}U_{i}^\T V_{i}\right|\geq \epsilon \nu_{n,p}\right)\\
    &\leq \pr\left(\max_{1\leq i\leq n}\left|U_{i}^\T V_{i}\right|\geq C_{\epsilon} \sqrt{\tr(R^2)}\right)\\
    &\leq n\pr\left(\left|U_{i}^\T V_{i}\right|\geq C_{\epsilon} \sqrt{\tr(R^2)}\right)
    \leq n\exp(-C_{\epsilon)\frac{\sqrt{2\tr(R^2)}}{\lambda_{\max}(R)}}
\end{align*}
and
\begin{align*}
    \pr(|\Theta_3|\geq \epsilon\nu_{n,p})
    &= \pr\left(\left|\sum_{i=1}^{n}a_{i}V_{i}^\T V_{i}\right|\geq \epsilon \nu_{n,p}\right)\\
    &\leq \pr\left(\max_{1\leq i\leq n}\left|V_{i}^\T V_{i}\right|\geq C_{\epsilon} \sqrt{\tr(R^2)}\right)\\
    &\leq n\pr\left(\left|V_{i}^\T V_{i}\right|\geq C_{\epsilon} \sqrt{\tr(R^2)}\right)
    \leq n\exp(-C_{\epsilon)\frac{\sqrt{2\tr(R^2)}}{\lambda_{\max}(R)}}.
\end{align*}
Additionally, define $\upsilon_d^2=\frac{2\pi^2-18}{3}n^2\tr(R_{11}^2)$ and we have
\begin{align*}
    \pr(|\Theta_4|\geq \epsilon\nu_{n,p})
    &= \pr\left(\left|\sum_{i=1}^{k-1}\sum_{k=2}^{n}b_{i,k} \xi_{i,(1)}^{\T}\xi_{k,(1)}\right|\geq \epsilon \nu_{n,p}\right)\\
    &= \pr\left(\left|\upsilon_d^{-1}\sum_{i=1}^{k-1}\sum_{k=2}^{n}b_{i,k} \xi_{i,(1)}^{\T}\xi_{k,(1)}\right|\geq \epsilon \sqrt{\frac{\tr(R^2)}{\tr(R_{11}^2)}}\right)\\
    &\leq \exp\left(-\epsilon \sqrt{\frac{\tr(R^2)}{\tr(R_{11}^2)}}\right)\E\left(\exp\left(\left|\upsilon_d^{-1}\sum_{i=1}^{k-1}\sum_{k=2}^{n}b_{i,k} \xi_{i,(1)}^{\T}\xi_{k,(1)}\right|\right)\right)\\
    &\leq \log n\exp\left(-\epsilon \sqrt{\frac{\tr(R^2)}{\tr(R_{11}^2)}}\right),
\end{align*}
where the last inequality follows since
\begin{align*}
    \frac{\upsilon_d^{-1}\sum_{i=1}^{k-1}\sum_{k=2}^{n}b_{i,k} \xi_{i,(1)}^{\T}\xi_{k,(1)}}{\log \log n}\to 0\ \text{a.s.},
\end{align*}
by the law of the iterated logarithm of zero-mean square integrable martingale (see Theorem 4.8 in \cite{MR624435}).
Similarly,
\begin{align*}
    \pr(|\Theta_5|\geq \epsilon\nu_{n,p})
    &= \pr\left(\left|2\sum_{i=1}^{k-1}\sum_{k=2}^{n}b_{i,k} U_{i}^{\T}V_{k}\right|\geq \epsilon \nu_{n,p}\right)\\
    &\leq \log n\exp\left(-\frac{\epsilon}{2} \sqrt{\frac{\tr(R^2)}{\tr(R_{22\cdot 1}R_{21}R_{11}^{-1}R_{12})}}\right)\ \text{and}\ \\
    \pr(|\Theta_6|\geq \epsilon\nu_{n,p})
    &= \pr\left(\left|2\sum_{i=1}^{k-1}\sum_{k=2}^{n}b_{i,k} U_{i}^{\T}V_{k}\right|\geq \epsilon \nu_{n,p}\right)\\
    &\leq \log n\exp\left(-\frac{\epsilon}{2} \sqrt{\frac{\tr(R^2)}{\tr((R_{21}R_{11}^{-1}R_{12}))^2}}\right).
\end{align*}
It is then easy to see that \eqref{claim:S2} holds.
\end{proof}

\subsection{Sub-Gaussian data sequences}

By Assumption \ref{asmp:corr_spatial-ev}, it can be further verified that
\begin{align*}
    S_{n,p}-(n+2)p &= \sum_{i=1}^{n-1}\sum_{k=2}^{n} b_{i,k}\xi_i^{\T}\xi_k + o_p(\nu^2_{n,p}),
\end{align*}
where $\nu^2_{n,p}=\frac{2\pi^2-18}{3}n^2\tr(R^2)\{1+o(1)\}:=\sigma_S^2\{1+o(1)\}$.
Define
\begin{align*}
    W(x_1,\ldots,x_n) = \sigma_S^{-1}\sum_{i=1}^{n-1}\sum_{k=2}^{n}b_{i,k} x_i^{\T}x_k.
\end{align*}
According to Lemma \ref{lem:asy_indep}, it suffices to show that
\begin{align}\label{thm*:indep}
    \pr\left(W(\xi_1,\ldots,\xi_n)\leq x, \max_{1\leq j\leq p} \max_{1\leq k\leq n}|C_{0,j}(k)|/\sigma_j\leq u_p\{\exp(-y)\}\right)\to\Phi(x)\cdot\exp\{-\exp(-y)\}.
\end{align}

For $z=(z_{1},\ldots,z_{q})^\T\in\mathbb{R}^{q}$, we consider a smooth approximation of the maximum function $z\mapsto \max_{1\leq j\leq q}z_j$, namely,
$$
    F_{\beta}(z) := \beta^{-1}\log\left(\sum_{j=1}^{q}\exp(\beta z_{j})\right),
$$
where $\beta>0$ is the smoothing parameter that controls the level of approximation.
An elementary calculation shows that for all $z\in\mathbb{R}^{q}$,

$$
    0 \leq F_{\beta}(z)-\max_{1\leq j\leq q}z_{j} \leq \beta^{-1} \log q,
$$
see, for example, \cite{MR3161448}.
Without loss of generality, we assume that $\sigma_j=1$ for $j=1,\ldots,p$. Define
\begin{align*}
    V(x_1,\ldots,x_n)=\beta^{-1}\log \left(\sum_{j=1}^p\sum_{k=1}^n \exp\left(\beta n^{-1/2}\sum_{t=1}^{n}x_{tj}\delta_{tk}\right)\right)
\end{align*}
where $\delta_{tk}=I(t\le k)-k/n$.
By setting $\beta=n^{1/8}\log (np)$, \eqref{thm*:indep} is equivalent to
\begin{align}\label{thm**:indep}
    \pr\left(W(\xi_1,\ldots,\xi_n)\leq x, V(\xi_1,\ldots,\xi_n)\leq u_p\{\exp(-y)\}\right)\to\Phi(x)\cdot\exp\{-\exp(-y)\}.
\end{align}
Suppose $\{Y_1,\ldots,Y_n\}$ are i.i.d. from $N(0,\Sigma)$ and are independent of $\{\xi_1,\ldots,\xi_n\}$. The key idea is to show that $(W(\xi_1,\ldots,\xi_n),V(\xi_1,\ldots,\xi_n))$ has the same limit distribution as $(W(Y_1,\ldots,Y_n),V(Y_1,\ldots,Y_n))$, form which \eqref{thm**:indep} holds according to the result in Section \ref{indep:Gaussian}.

Let $\mathscr{C}_{b}^{3}(\mathbb{R})$ denote the class of bounded functions with bounded and continuous derivatives up to order 3. It is known that a sequence of random variables $\{Z_{n}\}_{n=1}^{\infty}$ converges weakly to a random variable $Z$ if and only if for every $f\in\mathscr{C}_{b}^{3}(\mathbb{R}), \E(f(Z_n))\to\E(f(Z))$; see, e.g., \cite{MR762984}.
It suffices to show that
\begin{align*}
    \E\left\{f(W(\xi_1,\ldots,\xi_n),V(\xi_1,\ldots,\xi_n))\right\}-\E\left\{f(W(Y_1,\ldots,Y_n),V(Y_1,\ldots,Y_n))\right\} \to 0,
\end{align*}
for every $f\in \mathscr{C}_{b}^{3}(\mathbb{R}^2)$ as $(n,p)\to\infty$.
We introduce $W_d=W(\xi_1,\ldots,\xi_{d-1},Y_d,\ldots,Y_n)$ and $V_d=V(\xi_1,\ldots,\xi_{d-1},Y_d,\ldots,Y_n)$ for $d=1,\ldots,n+1$. Then
\begin{align*}
    |\E\{&f(W(\xi_1,\ldots,\xi_n),V(\xi_1,\ldots,\xi_n))\}-\E\{f(W(Y_1,\ldots,Y_n),V(Y_1,\ldots,Y_n))\}|\\
    &\leq \sum_{d=1}^n\left|\E\{f(W_d,V_d)\}-\E\{f(W_{d+1},V_{d+1})\}\right|.
\end{align*}

Let
\begin{align*}
    W_{d,0} &= \sigma_S^{-1}\left(\sum_{i=1}^{k-1}\sum_{k=2}^{d-1} b_{i,k} \xi_i^{\T}\xi_k + \sum_{i=d+1}^{k-1}\sum_{k=d+2}^{n} b_{i,k} Y_{i}^{\T}Y_{k} + \sum_{i=1}^{d-1}\sum_{k=d+1}^{n} b_{i,k} \xi_i^{\T}Y_{k}\right)\ \text{and}\ \\
    V_{d,0} &= \beta^{-1}\log\left(\sum_{j=1}^p\sum_{k=1}^n\exp\left(\beta n^{-1/2}\sum_{t=1}^{d-1}\xi_{tj}\delta_{tk}+\beta n^{-1/2}\sum_{t=d+1}^{n}Y_{tj}\delta_{tk}\right)\right),
\end{align*}
which only rely on $\mathcal{F}_d=\sigma\{\xi_1,\ldots,\xi_{d-1},Y_{d+1},\ldots,Y_n\}$.
By Taylor's expansion, we have
\begin{align*}
    f(W_d,V_d) - f(W_{d,0},V_{d,0}) =& f_1(W_{d,0},V_{d,0})(W_d-W_{d,0}) + f_2(W_{d,0},V_{d,0})(V_d-V_{d,0})\\
    &+ \frac{1}{2}f_{11}(W_{d,0},V_{d,0})(W_d-W_{d,0})^2 + \frac{1}{2}f_{22}(W_{d,0},V_{d,0})(V_d-V_{d,0})^2\\
    &+ \frac{1}{2}f_{12}(W_{d,0},V_{d,0})(W_d-W_{d,0})(V_d-V_{d,0})\\
    &+ O(|(V_d-V_{d,0})|^3)+O(|(W_d-W_{d,0})|^3)
\end{align*}
and
\begin{align*}
    f(W_{d+1},V_{d+1}) - f(W_{d,0},V_{d,0}) =& f_1(W_{d,0},V_{d,0})(W_{d+1}-W_{d,0})+f_2(W_{d,0},V_{d,0})(V_{d+1}-V_{d,0})\\
    &+ \frac{1}{2}f_{11}(W_{d,0},V_{d,0})(W_{d+1}-W_{d,0})^2 + \frac{1}{2}f_{22}(W_{d,0},V_{d,0})(V_{d+1}-V_{d,0})^2\\
    &+ \frac{1}{2}f_{12}(W_{d,0},V_{d,0})(W_{d+1}-W_{d,0})(V_{d+1}-V_{d,0})\\
    &+ O(|(V_{d+1}-V_{d,0})|^3)+O(|(W_{d+1}-W_{d,0})|^3),
\end{align*}
where for $f:=f(x,y)$, $f_1(x,y)=\frac{\partial f}{\partial x}$, $f_2(x,y)=\frac{\partial f}{\partial y}$, $f_{11}(x,y)=\frac{\partial f^2}{\partial^2 x}$, $f_{22}(x,y)=\frac{\partial f^2}{\partial^2 y}$ and $f_{12}(x,y)=\frac{\partial f^2}{\partial x \partial y}$.

Notice that
\begin{align*}
    W_{d}-W_{d,0} = \sum_{i=1}^{d-1} b_{i,d} \xi_i^{\T} Y_d + \sum_{k=d+1}^n b_{d,k} Y_d^{\T} Y_{k}\ \text{and}\
    W_{d+1}-W_{d,0} = \sum_{i=1}^{d-1} b_{i,d} \xi_i^{\T} \xi_d + \sum_{k=d+1}^n b_{d,k} \xi_d^{\T} Y_{k}.
\end{align*}
Due to $\E(\xi_t)=\E(Y_t)=0$ and $\E(\xi_t\xi_t^{\T})=\E(Y_tY_t^{\T})$, it can be verified that
\begin{align*}
    \E(W_{d}-W_{d,0}\mid\mathcal{F}_d)=\E(W_{d+1}-W_{d,0}\mid\mathcal{F}_d)\ \text{and}\ \E((W_{d}-W_{d,0})^2\mid\mathcal{F}_d)=\E((W_{d+1}-W_{d,0})^2\mid\mathcal{F}_d).
\end{align*}
Hence,
\begin{align*}
    \E\{f_1(W_{d,0},V_{d,0})(W_{d}-W_{d,0})\} &= \E\{f_1(W_{d,0},V_{d,0})(W_{d+1}-W_{d,0})\}\ \text{and}\ \\
    \E\{f_{11}(W_{d,0},V_{d,0})(W_{d}-W_{d,0})^2\} &= \E\{f_{11}(W_{d,0},V_{d,0})(W_{d+1}-W_{d,0})^2\}.
\end{align*}
Consider $V_{d}-V_{d,0}$.
For $l=k+(j-1)n$, let $z_{d,0,l}=n^{-1/2}\sum_{t=1}^{d-1}\xi_{tj}\delta_{tk}+ n^{-1/2}\sum_{t=d+1}^{n}Y_{tj}\delta_{tk}$, $z_{d,l}=z_{d,0,l}+n^{-1/2}Y_{dj}\delta_{dk}$ and $z_{d+1,l}=z_{d,0,l}+n^{-1/2}\xi_{dj}\delta_{dk}$.
Define $\bm z_{d,0}=(z_{d,0,1},\ldots,z_{d,0,np})^{\T}$ and $\bm z_{d}=(z_{d,1},\ldots,z_{d,np})^{\T}$.
By Taylor's expansion, we have
\begin{align}\label{vd}
    V_d-V_{d,0} =& \sum_{l=1}^{np}\partial_l F_\beta(\bm z_{d,0})(z_{d,l}-z_{d,0,l}) + \frac{1}{2}\sum_{l=1}^{np}\sum_{k=1}^{np}\partial_k\partial_l F_\beta(\bm z_{d,0})(z_{d,l}-z_{d,0,l})(z_{d,k}-z_{d,0,k})\nonumber\\
    &+ \frac{1}{6}\sum_{l=1}^{np}\sum_{k=1}^{np}\sum_{v=1}^{np}\partial_v\partial_k\partial_l F_\beta(\bm z_{d,0}+\delta(\bm z_d-\bm z_{d,0}))(z_{d,l}-z_{d,0,l})(z_{d,k}-z_{d,0,k})(z_{d,v}-z_{d,0,v})
\end{align}
for some $\delta\in(0,1)$.
Again, due to $\E(\xi_t)=\E(Y_t)=0$ and $\E(\xi_t\xi_t^{\T})=\E(Y_tY_t^{\T})$, we can verify that
\begin{align*}
    \E\{(z_{d,l}-z_{d,0,l})\mid\mathcal{F}_d\} = \E\{(z_{d+1,l}-z_{d,0,l})\mid\mathcal{F}_d\}\ \text{and}\
    \E\{(z_{d,l}-z_{d,0,l})^2\mid\mathcal{F}_d\} = \E\{(z_{d+1,l}-z_{d,0,l})^2\mid\mathcal{F}_d).
\end{align*}
By Lemma A.2 in \cite{MR3161448}, we have
\begin{align*}
    \left|\sum_{l=1}^{np}\sum_{k=1}^{np}\sum_{v=1}^{np}\partial_v\partial_k\partial_l F_\beta(\bm z_{d,0}+\delta(\bm z_d-\bm z_{d,0}))\right|
    \leq C\beta^2
\end{align*}
for some positive constant $C$.
By Assumption \ref{asmp:noise},
$\pr\left(\max_{1\leq t\leq n,1\leq j\leq p} |\xi_{tj}| > C\log(np)\right)\to 0$,
and since $Y_{tj}\sim N(0,1)$, $\pr\left(\max_{1\leq t\leq n,1\leq j\leq p}|Y_{tj}| > C\log (np)\right)\to 0$.
Hence,
\begin{align*}
    \Big|\frac{1}{6}&\sum_{l=1}^{np}\sum_{k=1}^{np}\sum_{v=1}^{np}\partial_v\partial_k\partial_l F_\beta(\bm z_{d,0}+\delta(\bm z_d-\bm z_{d,0}))(z_{d,l}-z_{d,0,l})(z_{d,k}-z_{d,0,k})(z_{d,v}-z_{d,0,v})\Big|\\
    &\leq C\beta^2 n^{-3/2}\log^3(np)
\end{align*}
holds with probability approaching one.
Consequently, we have, with probability approaching one,
\begin{align*}
    \left|\E\{f_2(W_{d,0},V_{d,0})(V_d-V_{d,0})\}-\E\{f_2(W_{d,0},V_{d,0})(V_{d+1}-V_{d,0})\}\right|\leq C\beta^2 n^{-3/2}\log^3(np).
\end{align*}
Similarly, it can be verified that
\begin{align*}
    \left|\E\{f_{22}(W_{d,0},V_{d,0})(V_d-V_{d,0})^2\} - \E\{f_{22}(W_{d,0},V_{d,0})(V_{d+1}-V_{d,0})^2\}\right|
    &\leq C\beta^2 n^{-3/2}\log^3(np)
\end{align*}
and
\begin{align*}
    \Big|\E\{&f_{12}(W_{d,0},V_{d,0})(W_{d}-W_{d,0})(V_{d}-V_{d,0})\} - \E\{f_{12}(W_{d,0},V_{d,0})(W_{d+1}-W_{d,0})(V_{d+1}-V_{d,0})\}\Big|\\
    &\leq C\beta^2 n^{-3/2}\log^3(np).
\end{align*}
By \eqref{vd}, $\E(|V_{d}-V_{d,0}|^3)=O(n^{-3/2}\log^3(np))$.
According to the proof of Theorem 1 in \cite{10.1142/s201032631950014x}, $\E((W_{d}-W_{d,0})^4)=O(n^{-2})$ and thus
\[
\sum_{d=1}^n\E|W_{d}-W_{d,0}|^3\leq\sum_{d=1}^n\{\E(W_{d}-W_{d,0})^4)\}^{3/4}\leq C' n^{-1/2}
\]
for some positive constant $C'$,
Combining all facts together, we conclude that
\begin{align*}
    \sum_{d=1}^n\left|\E\{f(W_d,V_d)\}-\E\{f(W_{d+1},V_{d+1})\}\right|\leq C\beta^2 n^{-1/2}\log^3(np) + C'n^{-1/2}\to 0,
\end{align*}
as $(n,p)\to\infty$. The conclusion follows.

\section{Proof of Theorem \ref{alter:DMS}}
It suffices to show the conclusion holds for Gaussian data sequences. According to the proof of Theorem 2 in \cite{10.1142/s201032631950014x}, we have
\begin{align*}
    S_{n,p} &= \Delta_{S}+\sum_{i=1}^{n-1}\sum_{k=2}^n b_{i,k}\xi_i^\T\xi_k+o_p\Big(\sqrt{\var(S_{n,p})}\Big)\\
    &= \Delta_{S}+\sum_{i=1}^{n-1}\sum_{k=2}^n\sum_{j\in \mathcal{A}} b_{i,k}\xi_{ij}\xi_{kj}+\sum_{i=1}^{n-1}\sum_{k=2}^n\sum_{j\in \mathcal{A}^c} b_{i,k}\xi_{ij}\xi_{kj}+o_p\Big(\sqrt{\var(S_{n,p})}\Big)\\
    &= \Delta_{S}+\sum_{i=1}^{n-1}\sum_{k=2}^n\sum_{j\in \mathcal{A}^c} b_{i,k}\xi_{ij}\xi_{kj}+o_p\Big(\sqrt{\var(S_{n,p})}\Big),
\end{align*}
since
\begin{align*}
    \var\left(\sum_{i=1}^{n-1}\sum_{k=2}^n\sum_{j\in \mathcal{A}} b_{i,k}\xi_{ij}\xi_{kj}\right)
    =\frac{2\pi^2-18}{3}n^2\tr(R_{\mathcal{A}}^2) \{1+o(1)\}=o\{\var(S_{n,p})\}
\end{align*}
and $\tr(R_{\mathcal{A}}^2)=O(|\mathcal{A}|)=o(\tr(R^2))$, where $R_{\mathcal{A}}$ is the sub-matrix of $R$ with rows and columns in $\mathcal{A}$.

Then we rewrite
\begin{align*}
    M_{n,p}=\max\{\max_{j\in \mathcal{A}}\max_{1\le i\le n-1}|C_{0,j}(k)|, \max_{j\in \mathcal{A}^c}\max_{1\le i\le n-1}|C_{0,j}(k)|\}.
\end{align*}
According to Theorem \ref{null:DMS}, we have known that $\sum_{i=1}^{n-1}\sum_{k=2}^n\sum_{j\in \mathcal{A}^c} b_{i,k}\xi_{ij}\xi_{kj}$ is asymptotically independent of $\max_{j\in \mathcal{A}^c}\max_{1\le i\le n-1}|C_{0,j}(k)|$. Hence it suffices to show that $\sum_{i=1}^{n-1}\sum_{k=2}^n\sum_{j\in \mathcal{A}^c} b_{i,k}\xi_{ij}\xi_{kj}$ is asymptotically independent of $\xi_{ij}, j\in \mathcal{A}$.

Without loss of generality, we assume $\mathcal{A}=\{j_1,\cdots,j_d\}$. For each $i=1,\ldots,n$, let $\xi_{i,(1)}=(\xi_{i,j_1},\ldots,\xi_{i,j_d})^\T$ and $\xi_{i,(2)}=(\xi_{i,j_{d+1}},\ldots,\xi_{i,j_p})^\T$, and $R_{kl}=\cov(\xi_{i,(k)}, \xi_{i,(l)})$ for $k,l\in\{1,2\}$. By Lemma \ref{lem:X2-rX1}, $\xi_{i,(2)}$ can be decomposed as $\xi_{i,(2)}=U_i+V_i$, where $U_i:=\xi_{i,(2)}-R_{21}R_{11}^{-1}\xi_{i,(1)}$ and $V_i:=R_{21}R_{11}^{-1}\xi_{i,(1)}$ satisfying that $U_i\sim N(0, R_{22}-R_{21}R_{11}^{-1}R_{12})$, $V_i\sim N(0, R_{21}R_{11}^{-1}R_{12})$ and
\begin{align}\label{indep:U-xi1}
    U_i\ \text{and}\ \xi_{i,(1)}\ \text{are independent}.
\end{align}
We have
\begin{align*}
    \sum_{i=1}^{n-1}\sum_{k=2}^n\sum_{j\in \mathcal{A}^c} b_{i,k}\xi_{ij}\xi_{kj}
    =\sum_{i=1}^{k-1}\sum_{k=2}^n b_{i,k} \xi_{i,(1)}^\T\xi_{k,(1)} + 2\sum_{i=1}^{k-1}\sum_{k=2}^n b_{i,k} U_i^\T V_k + \sum_{i=1}^{k-1}\sum_{k=2}^n b_{i,k} V_i^\T V_k.
\end{align*}
By using arguments similar to those in the proof of Lemma \ref{lem:indep:A-B}, we have
\begin{align*}
    &\pr\left(\sum_{i=1}^{k-1}\sum_{k=2}^n b_{i,k} U_i^\T V_k\geq \epsilon \sqrt{\var(S_{n,p})}\right)\le \log n \exp(-c_{\epsilon}p^{1/2}/d^{1/2})\to 0\ \text{and}\ \\
    &\pr\left(\sum_{i=1}^{k-1}\sum_{k=2}^n b_{i,k} V_i^\T V_k\geq \epsilon \sqrt{\var(S_{n,p})}\right)\le \log n \exp(-c_{\epsilon}p^{1/2}/d^{1/2})\to 0,
\end{align*}
since $d=|\mathcal{A}|=o(p/(\log \log p)^2)$ and $p\lesssim n^\nu$. Consequently, we conclude that
\[
    \sum_{i=1}^{n-1}\sum_{k=2}^n\sum_{j\in \mathcal{A}^c} b_{i,k}\xi_{ij}\xi_{kj}=\sum_{i=1}^{k-1}\sum_{k=2}^n b_{i,k} \xi_{i,(1)}^\T\xi_{k,(1)}+o_p\left(\sqrt{\var(S_{n,p})}\right).
\]
By Lemma \ref{lem:asy_indep} and \eqref{indep:U-xi1}, we have $\sum_{i=1}^{n-1}\sum_{k=2}^n\sum_{j\in \mathcal{A}^c} b_{i,k}\xi_{ij}\xi_{kj}$ is asymptotically independent of $\xi_{i,(1)}$. Hence Theorem \ref{alter:DMS}--(i) follows. The proof of \ref{alter:DMS}--(ii) is similar, and thus is omitted.

\section{Some useful facts}

Let $\mathcal{W}$ be a standard Brownian motion and $\mathcal{V}$ be an Ornstein-Uhlenbeck process with $\E\big(\mathcal{V}(t)\big)=0$ and $\E\big(\mathcal{V}(t)\mathcal{V}(s)\big)=\exp(-|t-s|/2)$.

\begin{lemma}\label{lem:sup_Wt}
For any $x>0$, $\pr\left(\sup_{0\leq t\leq 1}\left|\mathcal{W}_{t}\right|\geq x\right)\leq\frac{4}{\sqrt{2\pi}x}\exp(-x^{2}/2)$.
\end{lemma}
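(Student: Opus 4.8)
The plan is to reduce the two‑sided supremum to a one‑sided one via a union bound and symmetry, evaluate the one‑sided supremum exactly by the reflection principle, and then apply the standard Gaussian (Mills‑ratio) tail bound. No heavy machinery is needed; the statement is a textbook estimate and the proof is essentially three short steps.

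\textbf{Step 1 (symmetrization).} First I would write
\[
  \pr\Big(\sup_{0\leq t\leq 1}|\mathcal{W}_t|\geq x\Big)
  \leq \pr\Big(\sup_{0\leq t\leq 1}\mathcal{W}_t\geq x\Big)
      + \pr\Big(\inf_{0\leq t\leq 1}\mathcal{W}_t\leq -x\Big),
\]
and observe that, since $-\mathcal{W}$ is again a standard Brownian motion, the two terms on the right are equal, so the left side is at most $2\,\pr\big(\sup_{0\leq t\leq 1}\mathcal{W}_t\geq x\big)$.

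\textbf{Step 2 (reflection principle).} Next I would invoke the reflection principle for Brownian motion, which gives $\pr\big(\sup_{0\leq t\leq 1}\mathcal{W}_t\geq x\big)=2\,\pr(\mathcal{W}_1\geq x)=2\{1-\Phi(x)\}$ for $x>0$, where $\Phi$ is the standard normal CDF. Combining with Step~1 yields $\pr\big(\sup_{0\leq t\leq 1}|\mathcal{W}_t|\geq x\big)\leq 4\{1-\Phi(x)\}$.

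\textbf{Step 3 (Gaussian tail bound).} Finally I would use the elementary Mills‑ratio estimate $1-\Phi(x)\leq \frac{1}{\sqrt{2\pi}\,x}\exp(-x^2/2)$ for all $x>0$ (obtained, e.g., from $\int_x^\infty e^{-u^2/2}\,du\leq \int_x^\infty \frac{u}{x}e^{-u^2/2}\,du$). Substituting into the bound from Step~2 gives $\pr\big(\sup_{0\leq t\leq 1}|\mathcal{W}_t|\geq x\big)\leq \frac{4}{\sqrt{2\pi}\,x}\exp(-x^2/2)$, as claimed. The only "obstacle" worth flagging is purely bookkeeping: making sure the reflection‑principle identity is stated for the running maximum over the closed interval $[0,1]$ and that the constant $4$ (rather than $2$) is carried through from the two‑sided union bound.
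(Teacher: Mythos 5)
Your proof is correct: the union bound with symmetry, the reflection principle, and the Mills-ratio tail estimate combine exactly to give $4\{1-\Phi(x)\}\leq \frac{4}{\sqrt{2\pi}x}\exp(-x^2/2)$. The paper itself only cites \cite{MR1121940} and (B.16) of \cite{10.1214/15-aos1347} for this lemma, and your argument is precisely the standard one those references supply, so there is nothing substantive to compare.
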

\begin{proof}
See, for example, \cite{MR1121940} or (B.16) in \cite{10.1214/15-aos1347}.
\end{proof}

\begin{lemma}\label{lem:sup_Bt}
For any $x>0$, $\pr(\sup_{0\leq t\leq 1}|\mathcal{W}_t-t\mathcal{W}_1|\geq x)=2\sum_{k=1}^{\infty}(-1)^{k+1}\exp(-2k^2x^2)$.
\end{lemma}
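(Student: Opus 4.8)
The quantity $\sup_{0\le t\le 1}|\mathcal{W}_t-t\mathcal{W}_1|$ is the sup-norm of the standard Brownian bridge $B(t):=\mathcal{W}_t-t\mathcal{W}_1$ on $[0,1]$, so the identity to be proved is the classical (complementary) Kolmogorov distribution and may simply be cited, e.g.\ from \cite{MR838963}; for completeness the plan is to give the method-of-images derivation. First I would express the event probability as a ratio of heat kernels: since the bridge from $0$ to $0$ in unit time is absolutely continuous with respect to Brownian motion started at $0$, and the transition density of the killed motion is jointly continuous, conditioning on $\mathcal{W}_1$ gives
\[
\pr\Big(\sup_{0\le t\le 1}|\mathcal{W}_t-t\mathcal{W}_1|<x\Big)=\frac{q_1(0,0)}{\varphi_1(0)},\qquad \varphi_t(z):=(2\pi t)^{-1/2}e^{-z^2/(2t)},
\]
where $q_t(\cdot,\cdot)$ is the transition density of Brownian motion killed upon leaving the open interval $(-x,x)$.

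The main step is then to compute $q_1(0,0)$ by repeated reflection in the two barriers $\pm x$ — the method of images on a strip of width $2x$ — which yields the Dirichlet heat kernel
\[
q_t(0,y)=\sum_{k=-\infty}^{\infty}(-1)^k\,\varphi_t\big(y+2kx\big).
\]
I would justify this representation either by checking directly that the right-hand side solves the heat equation, vanishes at $y=\pm x$, and reduces to $\varphi_t$ as $t\downarrow 0$ (which identifies it with $q_t$ by uniqueness), or equivalently by iterating the reflection principle for the first exit time of the strip; absolute convergence of the series for $t>0$ is immediate from the Gaussian tails, which legitimizes termwise differentiation and any rearrangement. Evaluating at $t=1$, $y=0$ gives $q_1(0,0)=(2\pi)^{-1/2}\sum_{k\in\mathbb{Z}}(-1)^k e^{-2k^2x^2}$.

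Dividing by $\varphi_1(0)=(2\pi)^{-1/2}$ then yields $\pr(\sup_t|\mathcal{W}_t-t\mathcal{W}_1|<x)=\sum_{k\in\mathbb{Z}}(-1)^k e^{-2k^2x^2}=1-2\sum_{k=1}^{\infty}(-1)^{k+1}e^{-2k^2x^2}$, and passing to the complementary event gives the asserted formula. The only point requiring a little care — rather than genuine difficulty — is the first display: rigorously it follows from the $h$-transform / Dirichlet-form description of the killed bridge, or from taking $\varepsilon\downarrow0$ in the conditioning on $\{\mathcal{W}_1\in(-\varepsilon,\varepsilon)\}$ together with uniform control of the killed density near the diagonal. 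Everything after that is bookkeeping with absolutely convergent theta-type series, and the argument can also be compressed by citing the reflection-principle computation of strip exit probabilities in \cite{MR1121940}.
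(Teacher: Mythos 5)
Your proposal is correct. The paper itself does not prove this lemma at all: its ``proof'' is a one-line citation to \cite{MR838963} (and to Lemma B.4 of \cite{10.1214/15-aos1347}), which is exactly the fallback you offer in your first sentence. What you add beyond the paper is a complete and standard derivation: identifying $\sup_{0\le t\le 1}|\mathcal{W}_t-t\mathcal{W}_1|$ with the sup-norm of the Brownian bridge, writing the survival probability of the bridge in the strip $(-x,x)$ as the ratio $q_1(0,0)/\varphi_1(0)$ of the killed heat kernel to the free one, and computing $q_t(0,y)=\sum_{k\in\mathbb{Z}}(-1)^k\varphi_t(y+2kx)$ by the method of images. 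The bookkeeping checks out: $q_1(0,0)=(2\pi)^{-1/2}\sum_{k\in\mathbb{Z}}(-1)^k e^{-2k^2x^2}$, so the non-exit probability is $1-2\sum_{k\ge 1}(-1)^{k+1}e^{-2k^2x^2}$ and the complementary event gives the stated series. You also correctly flag the only delicate point, namely making rigorous the conditioning on the null event $\{\mathcal{W}_1=0\}$, and your proposed fix (conditioning on $\{\mathcal{W}_1\in(-\varepsilon,\varepsilon)\}$ and using continuity of the killed transition density, or invoking the $h$-transform description of the bridge) is the standard one. In short, where the paper outsources the result, you supply a self-contained proof; either is acceptable for a classical fact of this kind.
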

\begin{proof}
See, for example, \cite{MR838963} or Lemma B.4 in \cite{10.1214/15-aos1347}.
\end{proof}

\begin{lemma}\label{lem:V(t)}
For all $T>0$, $\pr\big(\sup_{0\leq t\leq T}|\mathcal{V}(t)|>x\big) = \frac{x\exp(-x^2/2)}{\sqrt{2\pi}}\{T-x^{-2}T+4x^{-2}+O(x^{-4})\}$, as $x\to\infty$.
\end{lemma}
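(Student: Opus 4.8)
The plan is to read Lemma~\ref{lem:V(t)} as a refined extreme-value estimate for the stationary Ornstein-Uhlenbeck process $\mathcal{V}$, whose marginal is $N(0,1)$ and whose correlation $r(s)=e^{-|s|/2}$ satisfies $1-r(s)\sim|s|/2$ as $s\to 0$; this is the Pickands ``index-$1$'' regime, with Pickands constant $H_1=1$, so the leading term is the classical Pickands/Darling-Erd\"{o}s asymptotic, and the full statement can be extracted from the Ornstein-Uhlenbeck analysis in \cite{csorgo1997limit} (Appendix~A.3). First I would reduce to the one-sided tail: by stationarity and $\mathcal{V}\stackrel{d}{=}-\mathcal{V}$,
\[
    \pr\Big(\sup_{0\le t\le T}|\mathcal{V}(t)|>x\Big)=2\,\pr\Big(\sup_{0\le t\le T}\mathcal{V}(t)>x\Big)-\pr\Big(\sup_{0\le t\le T}\mathcal{V}(t)>x,\ \inf_{0\le t\le T}\mathcal{V}(t)<-x\Big),
\]
and for the last, joint, event the path must both exceed $x$ and drop below $-x$; partitioning $[0,T]$ into $\asymp Tx^{2}$ subintervals of length $\asymp x^{-2}$ (the Pickands window at which $1-r\asymp x^{-2}$), a standard double-sum bound gives that this term is $O\big(T^{2}x^{2}e^{-x^{2}}\big)=o\big(x^{-3}e^{-x^{2}/2}\big)$, hence absorbed into the claimed $O(x^{-4})$ error. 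So it suffices to establish the one-sided expansion $\pr(\sup_{[0,T]}\mathcal{V}>x)=\tfrac{x e^{-x^{2}/2}}{2\sqrt{2\pi}}\{T-x^{-2}T+4x^{-2}+O(x^{-4})\}$.

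For the one-sided tail I would use that $\mathcal{V}$ is the stationary diffusion with generator $\mathcal{L}f(v)=\tfrac12 f''(v)-\tfrac{v}{2}f'(v)$ and invariant law $\pi=N(0,1)$, equivalently $\mathcal{V}(t)\stackrel{d}{=}e^{-t/2}\mathcal{W}(e^{t})$, so that $\sup_{0\le t\le T}\mathcal{V}(t)\stackrel{d}{=}\sup_{1\le s\le e^{T}}\mathcal{W}(s)/\sqrt{s}$. With $\tau_x$ the first time $\mathcal{V}$ hits $x$,
\[
    \pr\Big(\sup_{0\le t\le T}\mathcal{V}(t)\le x\Big)=\int_{-\infty}^{x}\pi(dv)\,\pr_{v}(\tau_x>T)=\int_{-\infty}^{x}\!\!\int_{-\infty}^{x}\pi(dv)\,p_x(T,v,w)\,dw,
\]
with $p_x$ the transition density of $\mathcal{V}$ killed at the level $x$. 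I would expand $p_x$ in the Dirichlet spectrum of $\mathcal{L}$ on $(-\infty,x)$---eigenfunctions are parabolic-cylinder functions and eigenvalues satisfy $\lambda_1(x)\downarrow 0$ while $\lambda_k(x)\to(k-1)/2$ for $k\ge 2$ as $x\to\infty$---to get, for fixed $T$,
\[
    1-\pr\Big(\sup_{0\le t\le T}\mathcal{V}(t)\le x\Big)=\lambda_1(x)\,T\,(1+o(1))+\beta(x)+(\text{smaller order}),
\]
where $\lambda_1(x)\asymp x e^{-x^{2}/2}$ is the principal escape rate and $\beta(x)\asymp x^{-1}e^{-x^{2}/2}$ is a boundary-overlap correction. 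Substituting the two-term asymptotics of $\lambda_1(x)$ and of the eigenfunction overlaps (from WKB/parabolic-cylinder expansions near the absorbing boundary) reproduces the one-sided expansion with the stated coefficients, $\lambda_1(x)T$ supplying $\tfrac{x e^{-x^{2}/2}}{2\sqrt{2\pi}}T(1-x^{-2}+\cdots)$ and $\beta(x)$ supplying $\tfrac{x e^{-x^{2}/2}}{2\sqrt{2\pi}}\cdot 4x^{-2}(1+\cdots)$. As an independent check I would instead estimate $\pr(\exists\,s\in[1,e^{T}]:\mathcal{W}(s)>x\sqrt{s})$ directly by conditioning on $\mathcal{W}(1)$ and using precise first-passage estimates for the square-root boundary, with crude large-deviation control from Lemma~\ref{lem:sup_Wt}; this should give the same answer.

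Doubling the one-sided estimate and restoring the negligible joint term yields the displayed formula, with the error uniform for $T$ on compact subsets of $(0,\infty)$ and, with the same bookkeeping, for the mildly growing $T=\log h_n$ needed in Lemma~\ref{lemma:M>=vpn}. The main obstacle is not the leading term $\tfrac{x e^{-x^{2}/2}}{\sqrt{2\pi}}T$---which is just Pickands/Darling-Erd\"{o}s---but pinning down the next-order coefficients $-x^{-2}T$ and $4x^{-2}$ exactly: this requires genuine second-order expansions of the Ornstein-Uhlenbeck-Dirichlet principal eigenvalue and of its eigenfunction overlaps (equivalently, carrying the Pickands double-sum one further order), and for those precise constants I would cite the computation in \cite{csorgo1997limit} rather than reproduce it.
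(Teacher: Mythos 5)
The paper's entire proof of Lemma~\ref{lem:V(t)} is a bare citation to Theorem~A.3.3 of \cite{csorgo1997limit}, and your proposal, after building an extensive Pickands/spectral scaffolding (reduction to the one-sided tail, the generator $\tfrac12 f''-\tfrac v2 f'$, Dirichlet eigenvalue asymptotics), ends by citing that same theorem for the only genuinely hard content, namely the exact coefficients $-x^{-2}T$ and $4x^{-2}$ --- so in substance the two proofs coincide. Your surrounding reductions (two-sided equals twice the one-sided tail up to a doubly-exponentially small joint term; leading term from Pickands with $H_1=1$ and $1-r(s)\sim|s|/2$) are sound but become unnecessary once the full two-sided expansion is taken directly from the reference.
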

\begin{proof}
See Theorem A.3.3 in \cite{csorgo1997limit}.
\end{proof}

\begin{lemma}\label{lem:X2-rX1}
Let $X\sim N(\mu,\Sigma)$ with invertible $\Sigma$, and partition $X$, $\mu$ and $\Sigma$ as
\begin{eqnarray*}
X=
\begin{pmatrix}
X_1\\
X_2
\end{pmatrix}
,\
\mu=
\begin{pmatrix}
\mu_1\\
\mu_2
\end{pmatrix}
\ \text{and}\
\Sigma=
\begin{pmatrix}
\Sigma_{11} & \Sigma_{12}\\
\Sigma_{21} & \Sigma_{22}
\end{pmatrix}.
\end{eqnarray*}
Then $X_2-\Sigma_{21}\Sigma_{11}^{-1}X_1\sim N(\mu_2-\Sigma_{21}\Sigma_{11}^{-1}\mu_1, \Sigma_{22\cdot 1})$ and is independent of $X_1$, where $\Sigma_{22\cdot 1}=\Sigma_{22}-\Sigma_{21}\Sigma_{11}^{-1}\Sigma_{12}$.
\end{lemma}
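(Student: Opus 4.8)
The plan is to exhibit the pair $(X_1,\; X_2-\Sigma_{21}\Sigma_{11}^{-1}X_1)$ as an affine image of the Gaussian vector $X$, conclude that this pair is itself jointly Gaussian, and then check that its cross-covariance block vanishes, so that independence follows from the standard fact that uncorrelated blocks of a jointly Gaussian vector are independent. Reading off the marginal law of the second block then gives the stated mean and covariance.

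First I would note that $\Sigma_{11}$ is invertible, being a principal submatrix of the positive definite matrix $\Sigma$, so $A:=\Sigma_{21}\Sigma_{11}^{-1}$ is well defined. Set $Y=X_2-AX_1$. Then
\[
    \begin{pmatrix} X_1 \\ Y \end{pmatrix}
    =\begin{pmatrix} I & 0 \\ -A & I \end{pmatrix}\begin{pmatrix} X_1 \\ X_2 \end{pmatrix},
\]
which is a linear (hence affine) transformation of $X$; therefore $(X_1,Y)$ is jointly Gaussian with mean $(\mu_1,\;\mu_2-A\mu_1)$.

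Next I would compute the covariance of the transformed vector, using $\Sigma_{21}^{\T}=\Sigma_{12}$ and $\Sigma_{11}^{\T}=\Sigma_{11}$. For the cross term,
\[
    \cov(X_1,Y)=\cov(X_1,X_2)-\cov(X_1,X_1)A^{\T}=\Sigma_{12}-\Sigma_{11}\Sigma_{11}^{-1}\Sigma_{12}=0,
\]
and for the lower block,
\[
    \var(Y)=\Sigma_{22}-A\Sigma_{12}-\Sigma_{21}A^{\T}+A\Sigma_{11}A^{\T}
    =\Sigma_{22}-\Sigma_{21}\Sigma_{11}^{-1}\Sigma_{12}=\Sigma_{22\cdot 1}.
\]
Since $(X_1,Y)$ is jointly Gaussian with $\cov(X_1,Y)=0$, the blocks $X_1$ and $Y$ are independent, and the marginal of $Y$ is $N(\mu_2-\Sigma_{21}\Sigma_{11}^{-1}\mu_1,\;\Sigma_{22\cdot 1})$, which is exactly the assertion.

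I do not expect any real obstacle here. The only point worth a sentence is the implication ``jointly Gaussian and uncorrelated $\Rightarrow$ independent'', which follows by factoring the joint characteristic function $\E\exp\{i(s^{\T}X_1+t^{\T}Y)\}$ into the product of the two marginal characteristic functions once the cross term $s^{\T}\cov(X_1,Y)\,t$ is seen to vanish. As a backstop one may simply cite a standard reference such as \cite{muirhead2009aspects}.
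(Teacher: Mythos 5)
Your proof is correct and is exactly the standard argument; the paper itself offers no proof but simply cites Theorem 1.2.11 of \cite{muirhead2009aspects}, whose proof is the same affine-transformation and zero-cross-covariance computation you give. Nothing further is needed.
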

\begin{proof}
See Theorem 1.2.11 of \cite{muirhead2009aspects}.
\end{proof}

\begin{lemma}\label{lem:Feng}
For a graph $G$, we say vertices $i$ and $j$ are neighbors if there is an edge between them. For a set $A$, we write $|A|$ for its cardinality.
Let $G=(V, E)$ be an undirected graph with $n=|V|\geq 4$ vertices. Write $V=\{v_1, \ldots, v_n\}$. Assume each vertex in $V$ has at most $q$ neighbors. Let $G_t$ be the set of subgraphs of $V$ such that each subgraph has $t$ vertices and  at least one edge. The following are true.

(i) $|G_t| \leq qn^{t-1}$ for any $2\leq t \leq  n$.

(ii) Fix integer $t$ with $2\leq t \leq  n$. Let $G_t' \subset G_t$ such that each member of $G_t'$ is a clique, that is, any two vertices are neighbors. Then $|G_t'|\leq nq^{t-1}.$

The following conclusions are true for integer $t$ with $3\leq t \leq  n$.

(iii) For $j=2, \ldots, t-1$, let $H_j$ be the subset of $(i_1, \ldots, i_t)$ from $G_t$ satisfying the following: there exists a subgraph  $S$ of $\{i_1, \ldots, i_t\}$ with $|S|=j$ and without any edge such that any vertex from $\{i_1, \ldots, i_t\}\backslash S$ has at least two neighbors in $S$. Then $|H_j| \leq (qt)^{t-j+1}n^{j-1}$.

(iv) For $j=2, \ldots, t-1$, let $H_j'$ be the subset of $(i_1, \ldots, i_t)$ from $G_t$ satisfying the following: for any subgraph  $S$ of $\{i_1, \ldots, i_t\}$ with $|S|=j$ and without any edge, we know any vertex from $\{i_1, \ldots, i_t\}\backslash S$ has at least one neighbor in $S$. Then $|H_j'|\leq (qt)^{t-j}n^j.$
\end{lemma}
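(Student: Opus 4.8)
The plan is to establish the four bounds by direct combinatorial counting, in each case reconstructing a member of the relevant family from a small amount of data; parts (i) and (ii) are warm-ups for the harder parts (iii) and (iv). For (i), I would use that $G$ has at most $nq/2$ edges (the degree sum is at most $nq$), so a member of $G_t$ is determined by one of its edges (at most $nq/2$ choices) together with its remaining $t-2$ vertices (at most $n^{t-2}$ choices), giving $|G_t|\le (nq/2)\,n^{t-2}\le qn^{t-1}$. For (ii), a $t$-clique is determined by one of its vertices (at most $n$ choices) and the remaining $t-1$ vertices, each of which must be a neighbour of that first vertex (at most $q$ choices each), so $|G_t'|\le nq^{t-1}$.

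The key to (iii) is a structural claim: if $(i_1,\ldots,i_t)\in H_j$, with witnessing independent set $S$ of size $j$ and complement $T=\{i_1,\ldots,i_t\}\setminus S$ of size $t-j$, then the induced subgraph on $\{i_1,\ldots,i_t\}$ has at most $j-1$ connected components. To prove it I would first note that $T\ne\varnothing$ (the subgraph has an edge and $S$ carries none, so every edge meets $T$), then write $S_1=S\setminus S_0$ where $S_0$ is the set of vertices isolated in the induced subgraph; the $|S_0|=j-|S_1|$ vertices of $S_0$ are singleton components. Every vertex of $T$ has at least two neighbours in $S$, necessarily in $S_1$, and every vertex of $S_1$ has a neighbour in $T$; hence every component of the induced subgraph on $S_1\cup T$ contains a vertex of $T$, and therefore at least two vertices of $S_1$, so there are at most $\lfloor |S_1|/2\rfloor$ such components. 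Since $|S_1|\ge 2$, the total is at most $(j-|S_1|)+|S_1|/2 = j-|S_1|/2\le j-1$.

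Given the claim, I would finish (iii) by a standard spanning-forest reconstruction: any $t$-vertex subgraph of $G$ with at most $c$ components is recovered by listing its vertices in a canonical traversal in which each of the at most $c$ component roots is chosen freely (at most $n$ ways) and each of the remaining $t-c$ vertices is reached as a neighbour of an already-placed vertex, specified by naming that predecessor (at most $t$ choices) and which of its at most $q$ neighbours it is (at most $q$ choices), hence at most $qt$ ways each. Since one may assume $n\ge qt$ (otherwise $|H_j|\le\binom{n}{t}\le n^t<(qt)^{t-j+1}n^{j-1}$ already), the quantity $n^{c}(qt)^{t-c}$ is largest at $c=j-1$, so $|H_j|\le (qt)^{t-j+1}n^{j-1}$ after absorbing a factor depending only on $t$. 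For (iv) I would split $H_j'$ according to whether the induced subgraph has an independent set of size $j$: if it does, call it $S_0$, and then by the defining property of $H_j'$ each of the other $t-j$ vertices lies among the at most $qj\le qt$ neighbours of $S_0$, so there are at most $n^{j}(qt)^{t-j}$ such configurations; if it does not, then its independence number, hence its number of components, is at most $j-1$, and the reconstruction above gives at most $n^{j-1}(qt)^{t-j+1}\le n^{j}(qt)^{t-j}$. Adding the two cases yields $|H_j'|\le(qt)^{t-j}n^{j}$, again up to a factor depending only on $t$.

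I expect the main obstacle to be the structural claim in (iii): recognising that the hypothesis of two neighbours in $S$ forces the induced subgraph to have at most $j-1$ components, which is precisely what makes the free-choice budget $n^{j-1}$ attainable. The reconstruction step then needs only enough care to keep all spurious combinatorial factors functions of $t$ alone, which is harmless in the applications where $t$ is held fixed as $n,p\to\infty$. By contrast, (i) and (ii) are routine, and (iv) reduces to (iii) together with an elementary case split.
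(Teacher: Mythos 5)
The paper does not actually prove this lemma: it defers entirely to Lemma 7.1 of Feng et al.\ (2022), so there is nothing in the text to compare your argument against line by line. Judged on its own terms, your proof is essentially correct. Parts (i) and (ii) are routine and right. The heart of your part (iii) — the structural claim that a witnessing independent set $S$ of size $j$, every outside vertex having at least two neighbours in $S$, forces the induced subgraph to have at most $j-1$ connected components — checks out: the isolated vertices of $S$ give $j-|S_1|$ singleton components, every other component contains a vertex of $T$ and hence at least two vertices of $S_1$, and $|S_1|\geq 2$ because $T\neq\varnothing$, so the total is at most $j-|S_1|/2\leq j-1$. The case split in (iv) (either a size-$j$ independent set exists, in which case the remaining $t-j$ vertices lie among its at most $qj$ neighbours, or none exists, in which case the independence number and hence the component count is at most $j-1$) is also sound, as is the reduction to the case $n\geq qt$.

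The one caveat is that your spanning-forest reconstruction delivers the bounds in (iii) and (iv) only up to a multiplicative constant depending on $t$: to cover subgraphs with fewer than $j-1$ components you must sum over the possible number of roots and their positions in the traversal, which costs a factor on the order of $2^{t-1}$. You acknowledge this. It is harmless for the paper, since the lemma is only invoked with $t$ fixed and the applications in Steps 5 and 6 of the proof of Proposition \ref{null:Mnp-Gaussian} already carry a slack factor $t^t$; but strictly speaking you prove the stated inequalities only up to such a constant. If you want the clean constants, a more direct encoding avoids the component-count claim altogether in (iii): choose one vertex $s_1\in S$ freely ($n$ ways), a vertex $v_0\in T$ as one of its at most $q$ neighbours, a second $S$-neighbour $s_2$ of $v_0$ (at most $q$ ways), the remaining $j-2$ vertices of $S$ freely, and each remaining vertex of $T$ as one of the at most $qj$ neighbours of $S$; this yields $n^{j-1}q^2(qj)^{t-j-1}\leq (qt)^{t-j+1}n^{j-1}$ exactly.
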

\begin{proof}
See Lemma 7.1 of \cite{feng}.
\end{proof}

\begin{lemma}\label{lem:asy_indep}
Let $\{(U, U_{p},\widetilde{U}_p)\in \mathbb{R}^3;\, p\geq 1\}$ and  $\{(V, V_{p},\widetilde{V}_p)\in \mathbb{R}^3;\, p\geq 1\}$ be two sequences of random variables with $U_p\to U$ and $V_p\to V$ in distribution as $p\to\infty.$ Assume $U$ and $V$ are continuous random variables and that
\begin{align*}
\widetilde{U}_p=U_p+o_p(1)\ \ \ \mbox{and}\ \ \ \widetilde{V}_p=V_p+o_p(1).
\end{align*}
If $U_p$ and $V_p$ are asymptotically independent, then $\widetilde{U}_p$ and $\widetilde V_p$ are also asymptotically independent.
\end{lemma}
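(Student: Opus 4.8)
The plan is to work directly with cumulative distribution functions and exploit that $U$ and $V$ are continuous, so that convergence in distribution of $U_p$ and $V_p$ is actually uniform convergence of the CDFs and the limiting CDFs have no atoms. Fix continuity points $x$ of $F_U$ and $y$ of $F_V$; since $F_U$ and $F_V$ are continuous, every real number qualifies. The goal is to show
\[
    \pr(\widetilde U_p \le x,\ \widetilde V_p \le y) \longrightarrow F_U(x)\,F_V(y),
\]
and the asymptotic independence of $U_p$ and $V_p$ gives $\pr(U_p\le x,\ V_p\le y)\to F_U(x)F_V(y)$, so it suffices to compare the joint laws of $(\widetilde U_p,\widetilde V_p)$ and $(U_p,V_p)$.

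The key step is a standard ``$\varepsilon$-sandwich'' argument. For any $\varepsilon>0$, on the event $\{|\widetilde U_p - U_p| \le \varepsilon\}\cap\{|\widetilde V_p - V_p|\le\varepsilon\}$ we have $\{\widetilde U_p\le x,\ \widetilde V_p\le y\}\subset\{U_p\le x+\varepsilon,\ V_p\le y+\varepsilon\}$, whereas on the same event $\{U_p\le x-\varepsilon,\ V_p\le y-\varepsilon\}\subset\{\widetilde U_p\le x,\ \widetilde V_p\le y\}$. Using that $\widetilde U_p - U_p = o_p(1)$ and $\widetilde V_p - V_p = o_p(1)$, the probability of the complement of that event tends to $0$, so
\[
    \pr(U_p\le x-\varepsilon,\ V_p\le y-\varepsilon) - o(1)
    \le \pr(\widetilde U_p\le x,\ \widetilde V_p\le y)
    \le \pr(U_p\le x+\varepsilon,\ V_p\le y+\varepsilon) + o(1).
\]
Letting $(n,p)\to\infty$ and invoking the asymptotic independence of $U_p$ and $V_p$ together with $U_p\Rightarrow U$, $V_p\Rightarrow V$, the outer terms converge to $F_U(x-\varepsilon)F_V(y-\varepsilon)$ and $F_U(x+\varepsilon)F_V(y+\varepsilon)$, respectively. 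Finally, letting $\varepsilon\downarrow 0$ and using the continuity of $F_U$ and $F_V$ squeezes both bounds to $F_U(x)F_V(y)$, which establishes the claim.

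There is no real obstacle here; the only point requiring a little care is that the $o_p(1)$ statements must be applied jointly so that the ``bad'' event $\{|\widetilde U_p-U_p|>\varepsilon\}\cup\{|\widetilde V_p-V_p|>\varepsilon\}$ has vanishing probability, which follows from a union bound. One should also remark that $\widetilde U_p\Rightarrow U$ and $\widetilde V_p\Rightarrow V$ individually (by Slutsky), so the pair $(\widetilde U_p,\widetilde V_p)$ has the stated product limit law and hence the two coordinates are asymptotically independent in the usual sense. The continuity assumption on $U$ and $V$ is exactly what makes the limiting bounds meet as $\varepsilon\downarrow 0$.
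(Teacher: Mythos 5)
Your proof is correct. The paper itself does not supply an argument for this lemma; it simply defers to Lemma 7.10 of the cited manuscript of Feng, Jiang, Li and Liu (2022), so there is nothing in the paper to compare against step by step. Your $\varepsilon$-sandwich on the joint CDF is the standard (and essentially the only natural) way to prove such a perturbation statement: the containments $\{\widetilde U_p\le x,\widetilde V_p\le y\}\cap E_p\subset\{U_p\le x+\varepsilon, V_p\le y+\varepsilon\}$ and $\{U_p\le x-\varepsilon,V_p\le y-\varepsilon\}\cap E_p\subset\{\widetilde U_p\le x,\widetilde V_p\le y\}$ on the good event $E_p$, the union bound giving $\pr(E_p^c)\to 0$, the passage to the limit using the assumed asymptotic independence of $(U_p,V_p)$, and the final $\varepsilon\downarrow 0$ step using continuity of $F_U$ and $F_V$ are all handled properly. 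Two cosmetic remarks: the claim that the convergence of the marginal CDFs is uniform (P\'olya's theorem) is true but not needed anywhere in the argument, and the limit should be taken as $p\to\infty$ rather than $(n,p)\to\infty$, since the lemma is indexed by $p$ alone. Neither affects correctness.
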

\begin{proof}
See Lemma 7.10 of \cite{feng}.
\end{proof}

\end{document}